%% file: main.tex
\documentclass[12pt]{article}

\usepackage[letterpaper,top=1in,bottom=1in,left=1in,right=1in]{geometry}

\usepackage{upgreek}

\usepackage{titlesec}
\titleformat{\section}{\normalfont\scshape\large}{\S\thesection.}{0.5em}{}
\titleformat{\subsection}{\normalfont\scshape\normalsize}{\thesubsection.}{0.5em}{}
\titleformat{\subsubsection}{\normalfont\itshape\normalsize}{\thesubsubsection.}{0.5em}{}

\usepackage{amsthm}
\newtheorem{theorem}{Theorem}[section]
\newtheorem{lemma}[theorem]{Lemma}
\newtheorem{proposition}[theorem]{Proposition}

\theoremstyle{definition}
\newtheorem{definition}[theorem]{Definition}
\newtheorem{example}[theorem]{Example}

\theoremstyle{remark}
\newtheorem{remark}[theorem]{Remark}

\usepackage[numbers]{natbib}
\title{Nested Sequents for Horn-Characterizable Quantified Modal Logics with Equality via Reachability Rules}
\author{Tim S. Lyon\\
\small Technische Universit{\"a}t Dresden\\
\small \texttt{timothy\_stephen.lyon@tu-dresden.de}
\and
Eugenio Orlandelli\\
\small University of Bologna\\
\small \texttt{eugenio.orlandelli@unibo.it}}
\date{}

\input{notation.tex}

\begin{document}

\maketitle
\thispagestyle{empty}

\noindent
\begin{minipage}{\textwidth}
\noindent
\textbf{Abstract.} We introduce cut-free nested sequent systems for a broad class of quantified modal logics (QMLs). The QMLs we consider are semantically defined using relational models that assign both an inner and outer domain to each world. This rich model structure enables the specification of various QMLs by enforcing different frame conditions, including increasing, decreasing, constant, and empty domains, as well as general path conditions and seriality. We extend the usual notion of nested sequent to include signatures, i.e., multisets of terms, which let us naturally define rules capturing the aforementioned domain conditions. A distinctive feature of our nested sequent systems is the use of reachability rules---inference rules parameterized by formal grammars (viz., semi-Thue systems). These rules operate by propagating or consuming formulae or terms along certain paths within a nested sequent, where paths are encoded as strings generated by a parameterizing grammar. This paper is the first to provide sound and complete nested systems for QMLs semantically characterized by models using both inner and outer domains. We analyze the proof-theoretic properties of these systems, identify a number of admissible structural rules, establish the invertibility of all rules, and prove a non-trivial syntactic cut-elimination theorem. We also observe that the standard universal quantifier rule used in nested systems subsumes the Extended Barcan Rule, which forces nested systems to capture QMLs with constant outer domains.
\end{minipage}

\section{Introduction}

\input{Intro2.tex}

\section{Preliminaries}\label{sec:log-prelims-I}

\subsection{Syntax and Semantics}\label{subsec:fo-modal-logics}

\input{body-prelims2.tex}

\subsection{Grammar-Theoretic Preliminaries}

\input{body-grammar-prelims2}

\section{Nested Sequent Systems}\label{sec:nested-calculi}

\input{body-nested-systems2.tex}

\section{Proof-Theoretic Properties}\label{sec:prop}

\input{body-proof-prop2.tex}

\section{Conclusion}\label{sec:conclusion}

\input{conclusion2.tex}

\bibliography{bibliography.bib}



\appendix

\input{app-proofs.tex}

\end{document}

%% file: notation.tex
\renewcommand{\phi}{\varphi}

\usepackage{mathtools}
\usepackage{synttree, bussproofs,scrextend, stmaryrd, rotating, xcolor}
\usepackage{amsmath, amssymb}
\usepackage{mathrsfs}
\usepackage{multicol}
\usepackage{comment}
\usepackage[all]{xy}
\usepackage{changepage,proof,upgreek,relsize,trfsigns}

\usepackage{algorithm}
\usepackage{algpseudocode}
\usepackage{wrapfig}

\usepackage{tikz}
\usetikzlibrary{positioning,arrows,calc,decorations.markings}
\tikzset{
modal/.style={>=stealth',shorten >=1pt,shorten <=1pt,auto,node distance=1.5cm,semithick},
world/.style={circle,draw,minimum size=0.5cm,fill=gray!15},
point/.style={circle,draw,inner sep=0.5mm,fill=black},
reflexive above/.style={->,loop,looseness=7,in=120,out=60},
reflexive below/.style={->,loop,looseness=7,in=240,out=300},
reflexive left/.style={->,loop,looseness=7,in=150,out=210},
reflexive right/.style={->,loop,looseness=7,in=30,out=330}
}

  \newtheorem{innercustomthm}{Theorem}
\newenvironment{customthm}[1]
  {\renewcommand\theinnercustomthm{#1}\innercustomthm}
  {\endinnercustomthm}

\newcommand{\iffi}{\textit{iff} }
\newcommand{\etc}{$\ldots$ }
\newcommand{\dfn}{Definition}
\newcommand{\thm}{Theorem}
\newcommand{\lem}{Lemma}

\newcommand{\sect}{Section}

\newcommand{\fig}{Figure}

\newcommand{\rmk}{Remark}

\newcommand{\albet}{\Upsigma}

\newcommand{\cate}{}
\newcommand{\sfour}{\mathrm{S4}}
\newcommand{\sfive}{\mathrm{S5}}
\newcommand{\langsfour}{L_{\sfour}}
\newcommand{\langsfive}{L_{\sfive}}

\newcommand{\concat}{\cdot}
\newcommand{\thuesys}{\mathrm{S}}
\newcommand{\g}{\mathrm{S}}

\newcommand{\glang}{L_{\g}}
\newcommand{\pto}{\longrightarrow}

\newcommand{\dr}{\pto^{*}_{\g}}
\newcommand{\fd}{R} 
\newcommand{\bd}{\bar{R}} 
\newcommand{\ques}{X} 
\newcommand{\albetstr}{\albet^{*}}
\newcommand{\stra}{\mathtt{s}}
\newcommand{\strb}{\mathtt{t}}
\newcommand{\conv}[1]{{#1}^{-1}}
\newcommand{\strc}{\mathtt{r}}
\newcommand{\empstr}{\varepsilon}

\newcommand{\R}{\mathcal{R}}

\newcommand{\names}{\mathit{Names}}

\newcommand{\var}{\mathit{Var}}
\newcommand{\con}{\mathit{Con}}
\newcommand{\ter}{\mathit{Ter}}
\newcommand{\rel}{\mathit{Rel}}

\newcommand{\imp}{\supset}

\newcommand{\lang}{\mathcal{L}} 

\newcommand{\baselog}{\mathsf{Q}^{\circ}_{=}.\mathsf{K}}
\newcommand{\ql}{\mathsf{Q}^{\circ}_{=}.\mathsf{K}(\fcons)}

\newcommand{\ns}{\mathcal{G}}
\newcommand{\nsii}{\mathcal{H}}
\newcommand{\nsiii}{\mathcal{K}}
\newcommand{\nant}{\Gamma}
\newcommand{\nantii}{\Sigma}
\newcommand{\ncon}{\Delta}

\newcommand{\hol}{\{}
\newcommand{\hor}{\}}

\newcommand{\prgr}[1]{\mathscr{P}\mathscr{G}(#1)}
\newcommand{\prv}{\mathscr{V}}
\newcommand{\pre}{\mathscr{E}}
\newcommand{\prl}{\mathscr{L}}

\newcommand{\ax}{\mathsf{ax}}

\newcommand{\disr}{\lor}
\newcommand{\conr}{\land}

\newcommand{\existsr}{\exists}
\newcommand{\allr}{\forall}

\newcommand{\wk}{\mathsf{w}}

\newcommand{\sym}{\mathsf{sym}_{\neq}}
\newcommand{\tra}{\mathsf{tra}_{\neq}}
\newcommand{\repgen}{\mathsf{grp}_{\neq}}
\newcommand{\cut}{\mathsf{cut}}
\newcommand{\cuti}{\mathsf{cut}_{=}}
\newcommand{\cuts}{\mathsf{cut}_{*}}

\newcommand{\ctr}{\mathsf{c}}

\newcommand{\ew}{\mathsf{ew}}
\newcommand{\ec}{\mathsf{ec}}

\newcommand{\ru}{\mathsf{r}}

\newcommand{\ps}{\mathsf{s}(t/x)}

\newcommand{\tw}{\mathsf{tw}}
\newcommand{\tc}{\mathsf{tc}}
\newcommand{\shift}{\mathsf{sft}(\fcons)}
\newcommand{\vs}{v(\fcset)}
\newcommand{\drule}{\mathsf{d}}

\newcommand{\true}{\mathtt{True}}

\newcommand{\prove}{\mathtt{Prove}}
\newcommand{\branch}{\mathcal{B}}

\newcommand{\D}{(d)}

\newcommand{\h}{\mathsf{H}}

\newcommand{\dia}{\diamondsuit} 
\renewcommand{\Diamond}{\diamondsuit}

\newcommand{\fcons}{\mathcal{C}}

\newcommand{\boxr}{\Box}
\newcommand{\diar}{\dia}

\newcommand{\nec}{\mathsf{nec}}

\newcommand{\dpr}{\mathsf{dp}}

\newcommand{\cdr}{\mathsf{cd}}
\newcommand{\nidref}{\mathsf{ref}_{\neq}}
\newcommand{\idrep}{\mathsf{rep}_{\neq}}
\newcommand{\drep}{\mathsf{rep}_{d}} 
\newcommand{\idrig}{\mathsf{rig}} 

\newcommand{\prf}{\pi}

\newcommand{\ned}{\mathsf{nd}}

\newcommand{\fcset}{\mathcal{C}}
\newcommand{\gpset}{\mathbf{G}}
\newcommand{\gpc}{\mathbf{G}}
\newcommand{\cser}{\mathbf{D}}
\newcommand{\cid}{\mathbf{ID}}
\newcommand{\cdd}{\mathbf{DD}}
\newcommand{\ccd}{\mathbf{CD}}
\newcommand{\cned}{\mathbf{NE}}
\newcommand{\acbf}{\mathbf{CBF}}
\newcommand{\abf}{\mathbf{BF}}
\newcommand{\aui}{\mathbf{UI}}

\newcommand{\ad}{\mathbf{D}}

\newcommand{\nql}{\mathsf{N}\ql}

\newcommand{\nqk}{\mathsf{NQ}^{\circ}_{=}.\mathsf{K}}
\newcommand{\I}[1]{I(#1)}

\def\s{\sigma}

\def\to{\supset}

\def\<{\langle}
\def\>{\rangle}
\def\to{\supset}

\def\ex{\exists}
\def\fa{\forall}
\newcommand{\E}{\mathcal{E}} 

\def\F{\mathcal{F}}
\def\M{\mathcal{M}}

\def\W{\mathcal{W}}
\def\W{\mathcal{W}}
\def\U{\mathcal{U}}
\def\D{\mathcal{D}}

\def\I{\mathcal{I}}
\def\R{\mathcal{R}}

\def\V{\mathcal{V}}

\def\fa{\forall}
\def\ex{\exists}

\def\vt{\vec{t}}
\def\vs{\vec{s}}
\def\vr{\vec{r}}

\newcommand{\simplepath}[1]{\overset{\mathclap{#1}}{\leadsto}}
\newcommand{\prpath}[1]{\hspace{3pt}\raise-3pt\hbox{$\simplepath{#1}$}\hspace{3pt}}

\newcommand{\fm}{\mathtt{fm}}
\newcommand{\tr}{\mathtt{tr}}

\newcommand{\ngn}[1]{\neg #1}
\newcommand{\negnnf}[1]{\dot{\neg} #1}

\newcommand{\msi}{\vec{\phi}} 
\newcommand{\fclass}{\mathscr{F}}
\newcommand{\fclassc}{\mathscr{F}(\fcons)}
\newcommand{\fv}[1]{FV(#1)}

\newcommand{\mint}{\iota}
\newcommand{\somesub}{/\!/}
\newcommand{\ebr}{\mathbf{EBR}}
\newcommand{\brn}[1]{\mathbf{BR}(#1)}
\newcommand{\set}[1]{\{#1\}}
\newcommand{\trir}{\triangleright}

\newcommand{\dotprf}[1]{\;\;\mathbin{\rotatebox[origin=c]{-90}{$\overset{\mathbin{\rotatebox[origin=c]{90}{$#1$}}}{\hdots}$}}}

%% file: Intro2.tex

Since Gentzen’s seminal work on proof theory~\cite{Gen35a,Gen35b}, \emph{analytic proof systems} have played a central role in both the study and application of logics. These systems (de)compose logical theorems step by step using inference rules, making them especially effective for establishing non-trivial properties such as interpolation and decidability, as well as for enabling automated reasoning. One of the most influential frameworks for constructing such systems is Gentzen’s \emph{sequent calculus}~\cite{Gen35a,Gen35b}, which represents proofs via (pairs of) multisets of formulae known as \emph{sequents}. Despite the utility of sequent systems, it is well-known that certain logics resist presentation as an analytic sequent calculus. This has led to numerous generalizations of Gentzen's formalism to supply ever more expressive logics with analytic proof systems. Such extensions are broadly called \emph{multisequent systems} and include \emph{display calculi}~\cite{Bel82}, \emph{hypersequent calculi}~\cite{Avr96}, \emph{labeled calculi}~\cite{Sim94,Vig00}, and \emph{nested calculi}~\cite{Bul92,Kas94}. This proof-theoretic paradigm has proven highly effective, being used to establish properties of logics such as decidability~\cite{Sim94,TiuIanGor12}, interpolation~\cite{FitKuz15,LyoKar24}, and complexity-hardness~\cite{LyoAlv22}. 

While multisequent systems for \emph{propositional} non-classical logics have been extensively studied, the literature on such systems for \emph{quantified} non-classical logics remains comparatively sparse. The quantified setting proves challenging because standard inference rules often prove too crude to capture the nuances inherent in non-classical quantifiers. For example, if we add the standard rules for the universal quantifier to a nested system for a quantified modal logic (QML), then both the Barcan formula ($\abf$) $\fa x \Box \phi \to \Box \fa x \phi$ and its converse ($\acbf$) $\Box\fa x\phi\to\fa x\Box \phi$ become provable, meaning, the nested system necessarily captures a QML with \emph{constant domains} (see \cite{LyoOrl23}). In the context of display calculi, Belnap has noted that this phenomenon occurs ``because these rules for the quantifiers are \emph{structure free} [$\ldots$], which is an indication of an unrefined account''~\cite[p.~409]{Bel82}. One can bypass this problem by incorporating \emph{signatures} (i.e., multisets of terms) into multisequents, which implicitly encode a (local) existence predicate $\E$ and restrict how quantifiers are introduced; see~\cite{Lyo21a,LyoOrl23,O21,Tiu11}. This allows for one to capture QMLs with alternative domain conditions such as increasing or decreasing domains (cf.~\cite{O21,LyoShiTiu25}).

In the context of QMLs, the labeled sequent formalism stands out as one of the most studied approaches, having been used to 
provide well-behaved calculi for a range of QMLs (e.g., \cite{O21,O24,NegPla11,Vig00}). As shown in~\cite{O24}, the labeled formalism enables the construction of (cut-free) sound and complete calculi for a broad class of semantically defined QMLs---many of which fail to have a known axiomatic system. Although the labeled sequent formalism is known to be quite general, uniform, and modular (see Lyon et al.~\cite{Lyoetal25} for a discussion), the formalism does have significant drawbacks. In particular, labeled sequent systems often introduce redundant syntactic structures in sequents and superfluous inferences in proofs, unnecessarily inflating proof size in many cases~\cite{Lyo25,LyoOst23}. Also, in contrast to other multisequent formalisms (e.g., hypersequents and nested sequents), labeled sequents typically fail to have a known formula interpretation, and in some cases, it is even known that labeled sequents fail to have a formula interpretation altogether~\cite{Lyoetal25}. To rectify these drawbacks, we study nested sequent systems for QMLs in this paper.

A \emph{nested sequent} is a tree of Gentzen sequents. The creation of the nested sequent formalism is often attributed to Bull~\cite{Bul92} and Kashima~\cite{Kas94}, though later work by Br\"unnler~\cite{Bru09} and Poggiolesi~\cite{Pog09} was instrumental to the popularity of the formalism.\footnote{It should be noted that Leivant~\cite[p. 361]{Lei81} introduced a notational variant of nested sequents in 1981 (which prefixes formulae with so-called \emph{execution sequences}) in his proof-theoretic work on propositional dynamic logic.} This formalism is rather elegant, yielding proof calculi that require minimal syntactic bureaucracy, have compact proofs, and where termination of proof search is more easily obtained~\cite{Lyo21thesis,LyoOst23}. 
Moreover, the rules of such systems are typically height-preserving invertible, which allows for counter-models to be extracted from failed proof search, and usually nested systems admit syntactic cut-elimination. Due to these nice aesthetic and computational properties, nested sequent systems have found a range of applications, being used in knowledge integration algorithms~\cite{LyoAlv22}, serving as a basis for constructive interpolation and decidability techniques~\cite{LyoTiuGorClo20,LyoKar24,TiuIanGor12}, and even being used to solve open questions about axiomatizability~\cite{IshKik07}. See~\cite{LelPog24} for a survey on nested sequent systems and their applications.

In this paper, we define and study nested sequent systems for a broad class of QMLs. The QMLs we consider subsume those treated by Corsi~\cite{Cor02}, and are semantically defined using relational models that assign both an inner and outer domain of elements to each world. Distinct QMLs are then characterized by imposing frame conditions on the accessibility relation of a model (as usual), or by imposing conditions on the inner domains associated with worlds (cf.~\cite{BraGhi07,Cor02,FitMen98}). Regarding the latter, one obtains distinct QMLs depending on if domains are permitted to be empty, if domains are permitted to increase, decrease, or be constant along the accessibility relation, or if inner and outer domains are required to be equal. The QMLs we consider consists of extensions of a base logic $\baselog$ with seriality, general path conditions, and the aforementioned domain conditions (cf.~\cite{Cor02}).
 
In order to give this class of logics a uniform nested sequent presentation, we use signatures (i.e., multisets of terms) in our nested sequents, which are used to process quantificational data. Additionally, we make use of \emph{reachability rules}, which possess two kinds of functionality: such rules may (1) propagate formulae or terms along paths in a nested sequent and/or (2) search for data along paths in a nested sequent. The first kind of functionality is exhibited by the well-known class of \emph{propagation rules}~\cite{Fit72,GorPosTiu11}, which have proven vital for the provision of nested sequent systems for propositional modal and constructive logics~\cite{CiaLyoRamTiu21,GorPosTiu08}. Rules exhibiting the second kind of functionality have only been defined more recently in the context of first-order constructive logics~\cite{Fit14,Lyo23,LyoShiTiu25}. We use these rules in our setting to capture reasoning with the diverse class of frame conditions we consider.

This paper serves as a journal version of the conference paper~\cite{LyoOrl23}, and makes the following new contributions:
\begin{itemize}

\item[(1)] We simplify our nested systems by using one-sided nested sequents for a language in negation normal form.

\item[(2)] We provide cut-free nested sequent calculi for a significant extension of the QMLs considered in~\cite{LyoOrl23}, permitting QMLs 
characterized by general path conditions of the form $\forall w, u, v \in\W (w \R^{n} u \ \& \ w \R^{k} v \rightarrow u \R v)$.

\item[(3)] We define and incorporate (new) reachability rules into our nested systems, which allow for a unified and modular  treatment of all QMLs within a single nested sequent presentation.

\item[(4)] We observe that the \emph{Extended Barcan Rule} $\ebr$ (see Corsi~\cite{Cor02}) is an instance of the universal quantifier rule in the nested sequent setting. This forces the QMLs we consider to have constant outer domains, suggesting that nested sequents `naturally capture' this class of QMLs, even with the use of signatures in nested sequents.

\item[(5)] We identify and prove a selection of structural rules height-preserving admissible in our nested systems, show that the rules of all nested systems are height-preserving invertible, and establish a non-trivial syntactic cut-elimination theorem. Our cut-elimination theorem is interesting in at least two respects: first, our use of reachability rules allows for the proof to be \emph{uniform} in the sense that \emph{ad hoc} elimination strategies are not needed to deal with special logics; cf.~\cite{Bru09,LyoOrl23}. Second, cut-elimination relies on the height-preserving admissibility of a novel structural rule, referred to as the \emph{shift rule}, which propagates nestings along paths of a nested sequent. The shift rule serves as a generic structural rule capturing all general path conditions imposed on a QML simultaneously.

\end{itemize}

\paragraph{Paper Organization.} In \sect~\ref{sec:log-prelims-I}, we define the class of QMLs we consider, as well as provide the grammar theoretical foundations for the formulation of our reachability rules. In \sect~\ref{sec:nested-calculi}, we introduce our nested sequent calculi for QMLs and prove them sound relative to the semantics used by Corsi~\cite{Cor02}. We also establish completeness by generalizing and adapting a method of Kripke~\cite{Kri59} to our nested sequent setting. To be more precise, we show how to extract counter-models from failed attempts to find a nested sequent proof of a given formula. In \sect~\ref{sec:prop}, we establish our height-preserving admissibility and invertibility results, which are used to prove the admissibility of the Extended Barcan Rule $\ebr$ in \emph{every} nested system we have defined. We also identify the novel `shift' structural rule that allows for a uniform proof of syntactic cut-elimination. In the final section (\sect~\ref{sec:conclusion}), we discuss future work and conclude.

%% file: body-prelims2.tex
We let $\var := \{x, y, z, \ldots\}$ and $\con := \{a, b, c, \ldots\}$ be a denumerable set of \emph{variables} and \emph{constants}, respectively, $\ter := \var \cup \con$ be the set of \emph{terms}, and $\rel := \{P^n,Q^n,R^n,\ldots\}$ be a set containing, for each $n \in \mathbb{N}$, a countable set of $n$-ary predicates, with \emph{propositional variables} being predicates of arity zero. We will often drop the superscript $n$ on $n$-ary predicates and write predicates as $P$, $Q$, $R$, $\ldots$ letting the context determine their arity. We use $t, r, s, \ldots$ to denote terms and often write a list of terms $t_{1}, \ldots, t_{n}$ as $\vec{t}$. We consider a \emph{quantified modal language} $\lang$ in negation normal form; it is the set of formulae generated via the following grammar in BNF:
$$
\phi ::= P(\vec{t}) \ | \ \ngn{P(\vec{t})} \ | \ t_1=t_2\ |\ t_1\neq t_2 \ | \ \phi \lor \phi \ | \ \phi \land \phi \ | \ \ex x\phi \ | \ \forall x \phi \ | \ \Diamond\phi \ | \ \Box \phi
$$

\noindent where $P \in \rel$ is an $n$-ary predicate, $\vec{t} = t_{1}, \ldots, t_{n} \in \ter$, and $x \in \var$. We use $\phi$, $\psi$, $\chi$, $\ldots$ to denote formulae from $\lang$. We define a \emph{literal} to be a formula of form $P(\vec{t})$, $\ngn{P(\vec{t})}$, $t = s$ or $t \neq s$, define a \emph{negative literal} to be a literal of the form $\ngn{P(\vec{t})}$ or $t \neq s$, and define an \emph{equality literal} to be a formula of the form $t = s$ or $t \neq s$. We will often use (annotated versions of) $L$, $N$, and $E$ to denote literals, negative literals, and equality literals, respectively. Given a literal $L$, we define its negation $\negnnf{L}$ as follows: (1) $\negnnf{P(\vec{t})} := \ngn{P(\vec{t})}$, (2) $\negnnf{\ngn{P(\vec{t})}} := P(\vec{t})$, (3) $\negnnf{(t = s)} := (t \neq s)$, and (4) $\negnnf{(t \neq s)} := (t = s)$. We define the negation $\negnnf{\phi}$ of an arbitrary formula in the usual way, and likewise, take the formulae $\bot$, $\top$, $\phi \to \psi$, and $\phi \equiv \psi$ to be defined as usual.

We use $\E t$ as short for $\ex x (x = t)$, and let $\E\vt$ denote $\bigwedge_{t\in\vt}\E t$. The \emph{length} of a formula is the number of symbols it contains. As usual, we say that the occurrence of a variable $x$ in $\phi$ is \emph{free} given that $x$ does not occur within the scope of a quantifier binding $x$, and we let $\fv{\phi}$ denote the set of all free variables occurring in $\phi$. We let $\phi(t/x)$ denote the formula obtained by substituting $t$ for each free occurrence of $x$ in $\phi$, and let $\phi(t \somesub x)$ denote a formula obtained by substituting $t$ for \emph{some} (\emph{all}, \emph{none}) free occurrences of $x$ in $\phi$. When performing substitutions we assume that we may possibly rename bound variables to avoid capture; e.g., if $y$ is distinct from $x$, then $(\forall y \phi)(y/x) \equiv \forall z(\phi(z/y)(y/x))$, where  $z$ is fresh. Moreover,  we identify formulas that differs only in the name of bound variables to avoid unnecessary bureaucracy.


As we are dealing with QMLs that include equality, we make use of specific kinds of relational models, called \emph{normal models}, in our semantics; cf.~Corsi~\cite[\dfn~2.2]{Cor02}. To allow for greater flexibility in characterizing logics, these models utilize two kinds of domains: \emph{outer domains} and \emph{inner domains}. The outer domain is taken to contain possible objects, which may or may not exist, whereas the inner domain is taken to contain existing objects. See~Corsi~\cite{Cor02} for a more in-depth discussion on the philosophical importance of these models.

\begin{definition}[Frame]\label{def:frame}
A \emph{frame} is a tuple $\F=\< \W,\,\R,\,\U,\,\D\>$ such that:
\begin{itemize}

\item $\W$ is a non-empty set $\{w, u, v, \ldots\}$ of \emph{worlds};

\item $\R \ \subseteq \W \times \W$ is a binary \emph{accessibility relation} on $\W$;

\item $\U$ is a non-empty set of objects called the \emph{outer domain};

\item $\D$ maps each $w \in \W$ to a set $\D_w$ (the \emph{inner domain of $w$}) such that $\D_w \subseteq \U$.

\end{itemize}
We say that a frame $\F$ has:
\begin{itemize}

\item \emph{increasing (inner) domains} \iffi for all $w,u \in \W$, $w\R u$ implies $\D_w \subseteq \D_u$;

\item \emph{decreasing (inner) domains} \iffi for all $w,u \in \W$, $w\R u$ implies $\D_u \subseteq \D_w$;

\item \emph{constant (inner) domains} \iffi for all $w,u \in \W$, $w\R u$ implies $\D_u = \D_w$;
\item \emph{classical domains} \iffi for all $w \in \W$, $\D_w = \U$;

\item \emph{non-empty (inner) domains} \iffi for each $w \in \W$, $\D_w \neq \emptyset$;

\item \emph{varying (inner) domains} \iffi none of the above conditions holds.

\end{itemize}
\end{definition}

\begin{definition}[Model]\label{def:model}
A \emph{model} $\M$ based on a frame $\F$ is an ordered pair $\<\F,\I\>$ where $\F$ is a frame and $\I$ is an \emph{interpretation} such that for each $w\in\W$, the following hold:
\begin{itemize}

\item For each $P^n \in \rel$, $\I_{w}(P^n) \subseteq \U^n := \overbrace{\U \times \cdots \times \U}^{n}$;

\item For each $a \in \con$, $\I_{w}(a) \in \U$,
where $w \R u$ implies $\I_{w}(a) = \I_{u}(a)$.
    
\end{itemize}
\end{definition}
We make the simplifying assumption that for each $w \in \W$, $\U^{0} = \{\<\>\}$ with $\<\>$ the empty tuple, meaning $\I_{w}(P) = \{\<\>\}$ or $\I_{w}(P) = \emptyset$, for any  propositional variable, i.e. 0-ary predicate $P$.

\begin{remark}\label{rmk:constant-outer} Corsi's original (normal) models do not make use of a single outer domain $\U$ as in \dfn~\ref{def:frame} above, but rather, associate an outer domain with each world and are permitted to grow along the accessibility relation (see~\cite[\dfn~2.2]{Cor02}). As will be discussed in \sect~\ref{sec:prop}, the nested sequent calculi we introduce include a generalization of the \emph{Extended Barcan Rule $\ebr$}, where $\ebr$ is defined to be the set of all rules $\brn{n+1}$ for $n \in \mathbb{N}$:

\begin{center}
\AxiomC{$\phi_{0} \imp \Box (\phi_{1} \imp \cdots \imp \Box (\phi_{n} \imp \Box\phi_{n+1})\ldots)$}
\RightLabel{$\brn{n+1}$~\text{ with }~$x \not\in \fv{\phi_{0}, \ldots, \phi_{n}}$}
\UnaryInfC{$\phi_{0} \imp \Box (\phi_{1} \imp \cdots \imp \Box (\phi_{n} \imp \Box\forall x \phi_{n+1})\ldots)$}
\DisplayProof
\end{center}

As pointed out by Corsi~\cite[pp. 1503-1504]{Cor02}, the inclusion of $\ebr$ in a QML forces the outer domains of normal models to be \emph{constant}. Therefore, we have defined our models to have constant outer domains from the onset since the admissibility of $\ebr$ in our nested sequent calculi (see Theorem \ref{theoremEBR}) forces the QMLs they capture to have constant outer domains.
\end{remark}

Given a model $\M = \<\W,\R,\U,\D,\I\>$ and some $w \in \W$, an \emph{$\M$-assignment} is a function $\s : \var \rightarrow \U$ mapping variables to objects in the outer domain $\U$. 
We let $\sigma^{ x \trir o}$ be the $\M$-assignment mapping $x$ to $o \in \U$ and behaving like the $\M$-assignment $\sigma$ for all other variables. Given an $\M$-assignment $\s$, we can interpret all terms of the language by letting $\I_{w}^{\s}(a) := \I_{w}(a)$ for $a \in \con$ and $\I_{w}^{\s}(x) := \s(x)$ for $x \in \var$. For a list of terms $\vt = t_{1}, \ldots, t_{n}$, we define $\I_{w}^{\s}(\vt) := \I_{w}^{\s}(t_{1}), \ldots, \I_{w}^{\s}(t_{n})$.

\begin{remark} In Corsi's original normal models, assignments are defined relative to specific worlds, that is to say, since each world $w$ is associated with its own outer domain $\U_w$, each assignment interprets variables locally in the outer domain $\U_w$ (see \cite[p.~1485]{Cor02}). As discussed in \rmk~\ref{rmk:constant-outer} above, our nested sequent calculi naturally characterize logics that have constant outer domains. In this setting, Corsi's local (world-dependent) assignments become global as they interpret variables on the singular outer domain $\U$.
\end{remark}

\begin{definition}[Semantic Clauses]
Let $\M = \<\W,\R,\U, \D,\I\>$ be a model with $w \in \W$ and let $\s$ be an $\M$-assignment. The \emph{satisfaction relation} $\Vdash$ is defined as follows:\bigskip

\begin{tabular}{lll}
$\M,w,\sigma\Vdash P(t_1,\dots t_n)$\quad\phantom{a}&iff\quad\phantom{a}&$\<\I^{\s}(t_1),\dots,\I^{\s}(t_n)\>\in\I_{w}(P)$\\\noalign{\smallskip}

$\M,w,\sigma\Vdash \ngn{P(t_1,\dots t_n)}$\quad\phantom{a}&iff\quad\phantom{a}&$\<\I^{\s}(t_1),\dots,\I^{\s}(t_n)\>\not\in\I_{w}(P)$\\

$\M,w,\sigma\Vdash t = s$\quad\phantom{a}&iff\quad\phantom{a}&$\I_{w}^{\s}(t)=\I_{w}^{\s}(s)$\\

$\M,w,\sigma\Vdash t \neq s$\quad\phantom{a}&iff\quad\phantom{a}&$\I_{w}^{\s}(t) \neq \I_{w}^{\s}(s)$\\

$\M,w,\sigma\Vdash \phi\lor\psi$\quad\phantom{a}&iff\quad\phantom{a}&$\M,w,\sigma\Vdash\phi$ or $\M,w,\sigma\Vdash\psi$\\

$\M,w,\sigma\Vdash \phi\wedge\psi$\quad\phantom{a}&iff\quad\phantom{a}&$\M,w,\sigma\Vdash \phi$ and $\M,w,\sigma\Vdash \psi$\\

$\M,w,\sigma\Vdash \ex x\phi$&iff& for some $o\in\D_w$, $\M,w,\sigma^{x\trir o}\Vdash\phi$\\

$\M,w,\sigma\Vdash \fa x\phi$&iff& for all $o\in\D_w$, $\M,w,\sigma^{x\trir o}\Vdash\phi$\\

$\M,w,\sigma\Vdash\Diamond\phi$&iff&
for some $u\in\W$, $w\R u$ and $\M,u,\sigma\Vdash\phi$\\

$\M,w,\sigma\Vdash\Box\phi$&iff&
for all $u\in\W$, if $w\R u$, then $\M,u,\sigma\Vdash\phi$\\
\end{tabular}\medskip

\noindent A formula $\phi$ is \emph{true at a world} $w$ of a model $\M$ \iffi $\M,w,\s \Vdash \phi$ for every assignment $\s$ over $\M$. A formula $\phi$ is \emph{true in a model} $\M$ \iffi it is true at all worlds of that model; a formula $\phi$ is \emph{false in a model} $\M$ \iffi it is not true at some world of that model. A formula $\phi$ \emph{valid relative to a class of frames} $\fclass$ \iffi it is true at all models based on a frame in that class.
\end{definition}

\begin{remark} Observe that if $\M, w, \sigma \Vdash \E t$, then $\sigma(t) \in \D_{w}$, that is, if the existence predicate holds of a term $t$ at a world $w$, then its interpretation is in the \emph{inner domain} of $w$.
\end{remark}

We say that a model $\M = \<\W,\R,\U, \D,\I\>$ is \emph{connected} \iffi for any two worlds $w, u \in \W$ there exists a (potentially undirected) $\R$-path connecting $w$ to $u$. The following proposition will be useful in the sequel.

\begin{proposition}\label{prop:eq-sat-global} The following two claims hold:
\begin{itemize}

\item[(1)] A formula is false on a model \iffi it is false on a connected model.

\item[(2)] Let $\M = \<\W,\R,\U, \D,\I\>$ be a connected model with $w \in \W$, let $\s$ be an $\M$-assignment, and let $E$ be an equality atom. Then, $\M, \s, w \Vdash E$ \iffi for all $u \in \W$, $\M, \s, u \Vdash E$.

\end{itemize}
\end{proposition}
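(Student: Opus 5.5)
For claim~(1), I will use the standard generated-submodel argument, adapted to the undirected case. The right-to-left direction is trivial, since a connected model is a model. For left-to-right, suppose $\phi$ is false in $\M = \<\W,\R,\U,\D,\I\>$. Then there are a world $w$ and an assignment $\s$ with $\M,w,\s \not\Vdash \phi$. Let $\W'$ be the set of worlds reachable from $w$ by a finite undirected $\R$-path, that is, the equivalence class of $w$ under the reflexive, symmetric, transitive closure of $\R$. Define $\M'$ by restricting $\R$, $\D$ and $\I$ to $\W'$ and keeping the outer domain $\U$ unchanged. $\M'$ is then a model: $\W'$ is non-empty, each $\D_u \subseteq \U$, and the rigidity condition on constants along $\R$ is inherited. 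It is also connected by construction.

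The key step is to show, by induction on $\psi$, that for all $u \in \W'$ and all assignments $\s$ we have $\M,u,\s \Vdash \psi$ iff $\M',u,\s \Vdash \psi$. Assignments are the same for both models because $\U$ is shared.
\begin{itemize}
\item Literals and equality literals depend only on $\I_u$ and $\s$.
\item The propositional connectives are immediate.
\item The quantifiers range over $\D_u$, which is unchanged.
\item The modal cases go through because $\W'$ is closed under $\R$-successors: if $u \in \W'$ and $u \R v$, then $v \in \W'$. So the $\R$-successors of $u$ in $\M$ and in $\M'$ coincide.
\end{itemize}
Hence $\M',w,\s \not\Vdash \phi$, so $\phi$ is false in the connected model $\M'$. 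I should double-check that frame conditions, such as those defining the classes $\fclassc$ used later, are also preserved. They are universal conditions on $\R$ and $\D$, restricted to a union of connected components, so they transfer; but the proposition as stated concerns models in general, so this is only a remark.

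For claim~(2), let $E$ be $t = s$ or $t \neq s$. It suffices to show that $\I^{\s}_{w}(t) = \I^{\s}_{u}(t)$ for every term $t$ and all $w,u \in \W$, since both truth conditions depend only on these values. For variables this is immediate, as $\I^{\s}_{w}(x) = \s(x)$ regardless of the world. For a constant $a$, \dfn~\ref{def:model} gives $\I_{w}(a) = \I_{v}(a)$ whenever $w\R v$. Equality is symmetric, so the same holds whenever $v \R w$. Connectedness provides an undirected $\R$-path $w = v_0, v_1, \ldots, v_k = u$, and induction on $k$ yields $\I_{w}(a) = \I_{u}(a)$. Therefore $\M,\s,w \Vdash E$ iff $\M,\s,u \Vdash E$ for all $u$, which gives both directions of the biconditional; the direction from ``all $u$'' back to $w$ is trivial.

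The only delicate point is the modal case in~(1). We must take the undirected component, not just the worlds $\R$-reachable from $w$, to obtain connectedness. This component is nevertheless closed under forward $\R$-steps, which is exactly what the $\Box$ and $\Dia$ clauses require. Backward steps never matter semantically, so the enlargement does no harm.
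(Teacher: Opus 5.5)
Your proof is correct. The paper states this proposition without proof, so there is no argument of the authors' to compare with. Your argument is the standard one it leaves implicit: for (1), restrict to the component of the falsifying world; for (2), propagate the rigidity of constants along an undirected path.

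Three small remarks.

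First, your closing claim that one \emph{must} take the undirected component is not right. The point-generated submodel on $\{v \mid w \R^{*} v\}$ is already connected, since any two of its worlds are linked through $w$. It is also closed under $\R$-successors, so it would serve equally well. Your choice is harmless but not forced.

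Second, in your aside on frame conditions, seriality $\cser$ is a $\forall\exists$ condition, not a universal one. It is preserved because the restricted world set is closed under $\R$-successors, which is the same fact you use in the modal case. The conditions $\gpc(n,k)$, $\cid$, $\cdd$, $\ccd$ (with $\U$ kept fixed) and $\cned$ transfer as you say.

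Third, your argument for (2) in fact proves a local statement: $\I_{v}(a)=\I_{v'}(a)$ whenever $v,v'$ are joined by an undirected $\R$-path. This is what the paper actually needs. In the soundness case for $\idrig$ it invokes (2) for a model that is not assumed connected. The step is justified because falsification forces every tree edge of the nested sequent to be mapped into $\R$ (condition (1) of Definition~\ref{def:sequent-semantics}), so $\mint(w)$ and $\mint(u)$ are joined in this way.
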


\subsection{Quantified Modal Logics}

\begin{figure}[t]
\framebox{
\begin{minipage}{0.47\textwidth}
\begin{description}
\item[TAUT] Classical propositional tautologies 
\item[K] $\Box (\phi\to \psi)\to(\Box \phi\to \Box \psi)$
\item[UI$^\circ$] $\fa y(\fa x\phi\to \phi(y/x))$
\item[$\fa$-COMM] $\fa x\fa y\phi\to\fa y\fa x\phi$ 
\item[$\fa$-DIST] $\fa x(\phi\to \psi)\to(\fa x\phi\to\fa x \psi)$
\item[$\fa$-VAQ] $ \phi\to \fa x\phi$, if $x$ is not free in $\phi$
\end{description}
\end{minipage}
\begin{minipage}{0.49\textwidth}
\begin{description}
\item[REF] $t=t$
\item[REPL] $t=s\wedge \phi(t \somesub x)\to \phi(s \somesub x)$
\item[ND] $t\neq s\to \Box (t\neq s)$
\item[EBR] See \rmk~\ref{rmk:constant-outer}
\item[MP] If $\phi$ and $\phi\to \psi$ are theorems so is $\psi$
\item[N] If $\phi$ is a theorem so is $\Box \phi$ 
\item[UG] If $\phi$ is a theorem so is $\fa x \phi$
\end{description}
\end{minipage}\smallskip
}
\caption{Axiom System $\h \baselog$ for $\baselog$.}\label{fig:axioms-baselog}
\end{figure}

The base logic $\baselog$ is defined to be the set of all formulae valid relative to the class of all frames, which is equivalently characterized by the axiom system $\h \baselog$ given in \fig~\ref{fig:axioms-baselog}.\footnote{We use the term \emph{axiom} to mean \emph{axiom schema} in this paper.} In this paper, we also consider extensions of $\baselog$ obtained by enforcing the \emph{frame conditions} shown in \fig~\ref{fig:axioms-related-conditions}. Following the top-down order of \fig~\ref{fig:axioms-related-conditions}, such frame conditions include: 
\begin{itemize}

\item \emph{seriality} $\ad := \fa w \in \W \,\ex u\in\W(w\R u)$;

\item \emph{generalized path conditions} $\gpc(n,k) := \forall w, u, v \in\W (w \R^{n} u \ \& \ w \R^{k} v \rightarrow u \R v)$; 

\item \emph{increasing (inner) domains} $\cid := \forall w,v \in\W(w\R v \rightarrow \D_w\subseteq \D_v)$;

\item \emph{decreasing (inner) domains} $\cdd := \forall w,v \in\W(w\R v \rightarrow \D_w\supseteq \D_v)$; 

\item \emph{classical domains} $\ccd := \fa w \in \W(\D_w = \U)$;

\item \emph{non-empty (inner) domains} $\cned := \forall w \in \W \,\exists o \in D_{w}$. 
    
\end{itemize}
We let $n,k \in \mathbb{N}$, use $\gpc(n,k)$ to denote a generalized path condition, and $\gpset$ to denote a \emph{set} of such conditions. These conditions cover well-known frame conditions such as reflexivity (when $n = k = 0$), transitivity (when $k = 2$ and $n= 0$), symmetry (when $n = 1$ and $k = 0$), and Euclideanity (when $n = k = 1$). Also, observe that constant (inner) domains can be enforced on a frame by enforcing both $\cid$ and $\cdd$. For a set $\fcons$ of frame conditions from \fig~\ref{fig:axioms-related-conditions}, we define $\fclassc$ to be the class of frames satisfying the frame conditions in $\fcons$. For a set $\fcons$ of frame conditions, we define the \emph{quantified modal logic (QML)} $\ql$ to be the the set of all formulae valid relative to the class $\fclassc$ of frames. (NB. Observe that $\baselog(\emptyset) = \baselog$.)

\begin{remark}\label{rmk:axs-closed-under-consequence} We make the simplifying assumption that if $\fcons$ is a set of frame conditions and a frame condition $\mathbf{C}$ holds on every frame in $\fclassc$, then $\mathbf{C} \in \fcons$. 
For example, if $\ccd \in \fcons$, then $\cid, \cdd, \cned \in \fcons$ since every frame satisfying $\ccd$ also satisfies the latter three conditions.
\end{remark}

\begin{figure}[t]
\begin{center}
\framebox{
\bgroup
\setlength{\tabcolsep}{5pt}
\def\arraystretch{1}
\begin{tabular}{c | c | c}
Name & Frame Condition & Corresponding Axiom\\\noalign{\smallskip}\hline\hline\noalign{\smallskip}

$\cser$ & $\fa w \in \W \,\ex u\in\W(w\R u)$ & $\Box \phi \to\Diamond \phi$\\\noalign{\smallskip}

\,$\gpc(n,k)$ & $\forall w, u, v \in\W (w \R^{n} u \ \& \ w \R^{k} v \rightarrow u \R v)$ & $\dia^{k} \phi \to \Box^{n} \dia \phi$\\\noalign{\smallskip}

$\cid$ & $\forall w,v \in\W(w\R v \rightarrow \D_w\subseteq \D_v)$ & $\Box\fa x\phi\to\fa x\Box \phi$\\\noalign{\smallskip}

$\cdd$ & $\forall w,v \in\W(w\R v \rightarrow \D_w\supseteq \D_v)$&$\fa x\Box \phi\to\Box\fa x\phi$\\\noalign{\smallskip}

$\ccd$ & $\fa w \in \W(\D_w = \U)$ & $\fa x \phi \to \phi(t/x)$\\\noalign{\smallskip}

$\cned$ & $\forall w \in \W \,\exists o \in D_{w}$ & $\forall x \phi \to \exists x \phi$\\
\end{tabular}
\egroup
}
\end{center}
\caption{Additional frame/domain conditions and their corresponding axioms. We note that when $n=0$, the second frame condition is $\forall w, v (w \R^{k} v \rightarrow w \R v)$, when $k = 0$, the frame condition is $\forall w, u (w \R^{n} u \rightarrow u \R w)$, and when $n = k = 0$, the frame condition is $\forall w (w \R w)$.
}
\label{fig:axioms-related-conditions}
\end{figure}

As mentioned above, the axiom system $\h \baselog$ characterizes the base logic $\baselog$ (see \fig~\ref{fig:axioms-baselog}). We remark that $\ebr$ can be dropped from the axiom system; as proven by Corsi~\cite[\lem~2.13]{Cor02}, the logic $\baselog$ is equivalently characterized by $\h \baselog$ and $\h \baselog$ \emph{without} $\ebr$. Nevertheless, we include $\ebr$ in the axiomatization of $\baselog$ to highlight that $\ebr$ belongs to all logics we are considering. 

It is well-known that certain axioms are canonical for the various frame conditions discussed above. As shown in \fig~\ref{fig:axioms-related-conditions} (reading the table top-down), each frame condition corresponds to one of the following axioms:
\begin{itemize}

\item \emph{seriality axiom} $\Box \phi \imp \Diamond \phi$;

\item \emph{generalized path axioms} $\dia^{k} \phi \to \Box^{n} \dia \phi$ with $n,k \in \mathbb{N}$;

\item \emph{Converse Barcan Formula} $\acbf := \Box\fa x\phi\supset\fa x\Box \phi$;

\item \emph{Barcan Formula} $\abf := \fa x\Box \phi\supset\Box\fa x \phi$;

\item \emph{universal instantiation axiom} $\aui := \forall x \phi \imp \phi(t/x)$;

\item \emph{non-empty domain axiom} $\forall x \phi \imp \exists x \phi$.
    
\end{itemize}
As far as we know, over frames without   classical (and constant) domains there are soundness and completeness results for particular axiom systems (see~\cite{Cor02,FitMen98,TanakaOno}), but no result covering all QMLs considered here. In \sect~\ref{sec:nested-calculi}, we will provide a sound and complete (cut-free) nested sequent system for each QML $\ql$. Nevertheless, we list some noteworthy soundness and completeness results below, which can found in Corsi~\cite{Cor02}:
\begin{itemize}

\item $\h \baselog$ is sound and complete w.r.t. $\fclass(\emptyset)$; 

\item $\h \baselog \cup \set{\acbf}$ is sound and complete w.r.t. $\fclass(\set{\cid})$; 

\item $\h \baselog \cup \set{\abf}$ is sound and complete w.r.t. $\fclass(\set{\cdd})$; 

\item $\h \baselog \cup \set{\acbf, \abf}$ is sound and complete w.r.t. $\fclass(\set{\cid,\cdd})$; 

\item $\h \baselog \cup \set{\aui, \abf}$ is sound and complete w.r.t. $\fclass(\set{\ccd, \cid,\cdd,\cned})$. 

\end{itemize}


%% file: body-grammar-prelims2.tex
As will be seen later on, a significant aspect of our nested calculi is the incorporation of inference (viz. reachability) rules whose applicability depends upon certain strings generated by a formal grammar. Therefore, the current section introduces the grammar-theoretic notions required to properly define such rules.

We define our \emph{alphabet} $\albet := \{\fd,\bd\}$ to be the set of \emph{characters} $\fd$ and $\bd$, and we let $\ques \in \albet$. We make use of the symbols $\fd$ and $\bd$ to encode information in inference rules about worlds in the \emph{future} and \emph{past} (resp.) of an accessibility relation $\R$ in a model. We say that $\fd$ and $\bd$ are \emph{converses}, and define the \emph{converse operation} $\conv{\ques}$ (on characters) as $\conv{\fd} := \bd$ and $\conv{\bd} := \fd$.

We let $\concat$ denote the typical \emph{concatenation operation} with $\varepsilon$ the \emph{empty string}. The set $\albet^{*}$ of \emph{strings over $\albet$} is defined to be the smallest set satisfying the following conditions: (i) $\albet \cup \{\varepsilon\} \subseteq \albet^{*}$, and (ii) $\text{If } \stra \in \albet^{*} \text{ and } \ques \in \albet \text{, then } \stra \concat \ques \in \albet^{*}$. We use $\stra$, $\strb$, $\strc$, \etc (potentially annotated) to denote strings from $\albetstr$. Also, we will often write the concatenation of two strings $\stra$ and $\strb$ as $\stra \cate \strb$ as opposed to $\stra \concat \strb$, and for the empty string $\empstr$, we have $\stra \cate \empstr = \empstr \cate \stra = \stra$. The converse operation on strings (adapted from~\cite{TiuIanGor12}) is defined accordingly: (i) $\conv{\varepsilon} := \varepsilon$, and (ii) $\text{If } \stra = \ques_{1} \cdots \ques_{n} \text{, then } \conv{\stra} := \conv{\ques}_{n} \cdots \conv{\ques}_{1}$.

We now define \emph{$\albet$-systems}, which are special kinds of \emph{semi-Thue systems}~\cite{Pos47} that encode information about the conditions imposed on frames and models, and which will parameterize inference rules in our proof systems later on.

\begin{definition}[$\albet$-System]\label{def:grammar} We define a \emph{$\albet$-system} to be a set $\g$ of \emph{production rules} of the form $\ques \pto \stra$, where $\ques \in \albet$ and $\stra \in \albetstr$.
\end{definition}

Specific $\albet$-systems will be of use to us in this paper. In particular, for a set $\gpset$ of generalized path conditions, we define the $\albet$-system $\g(\gpset)$ as follows: 
\begin{center}
$(\fd \pto \bd^{n} \cate \fd^{k}), (\bd \pto \bd^{k} \cate \fd^{n}) \in \g(\gpset)$ \iffi $\gpc(n,k) \in \gpset$.
\end{center}

\noindent
We also define the specific $\albet$-systems $\sfour$ and $\sfive$:
\begin{itemize}

\item $\sfour := \{\fd \pto \empstr, \bd \pto \empstr, \fd \pto \fd \fd, \bd \pto \bd \bd\}$;

\item $\sfive := \{\fd \pto \empstr, \bd \pto \empstr, \fd \pto \bd \fd, \bd \pto \bd \fd\}$.

\end{itemize}
Since $\sfour$ and $\sfive$ are $\albet$-systems, any property that holds for an arbitrary $\albet$-system holds for $\sfour$ and $\sfive$ as well. We will describe the significance of these two $\albet$-systems shortly. First, let us make the notion of a \emph{derivation} and \emph{language} precise in the context of $\albet$-systems.



\begin{definition}[Derivation, Language]\label{def:semi-thue-deriv-lang}
 Let $\g$ be a $\albet$-system. We write $\stra \pto \strb$ and say that  the string $\strb$ may be derived from the string $\stra$ in $\albetstr$ in \emph{one-step} \iffi there are strings $\stra', \strb' \in \albetstr$ and $\ques \pto \strc \in \thuesys$ such that $\stra = \stra' \cate \ques \cate \strb'$ and $\strb = \stra' \cate \strc \cate \strb'$. We define the \emph{derivation relation} $\dr$ to be the reflexive and transitive closure of $\pto$. For $\stra, \strb \in \albetstr$, we call $\stra \dr \strb$ a \emph{derivation of $\strb$ from $\stra$}, and define the \emph{length} of a derivation to be the minimal number of one-step derivations required to derive $\strb$ from $\stra$ in $\g$. Moreover, we define the \emph{language} $\glang(\stra) := \{\strb \ | \ \stra \dr \strb \}$, where $\stra \in \albet^{*}$. 
 \end{definition}

\begin{lemma}\label{lem:S4-with-S(G)-is-S4-or-S5} Let $\gpset$ be a set of generalized path conditions. 
\begin{itemize}

\item[(1)] $\stra = \ques_{1} \cdots \ques_{n} \in L_{\g(\gpset)}(\ques)$ \iffi $\conv{\stra} = \conv{\ques}_{n} \cdots \conv{\ques}_{1} \in L_{\g(\gpset)}(\conv{\ques})$. 
    
\item[(2)] If $L = L_{\sfour \cup \g(\gpset)}(\ques)$, then either $L = L_{\sfour}(\ques)$ or $L = L_{\sfive}(\ques)$.

\end{itemize}
\end{lemma}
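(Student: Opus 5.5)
The plan is to prove (1) by exploiting a symmetry of $\g(\gpset)$ under converse, and (2) by first computing $L_{\sfour}(\ques)$ and $L_{\sfive}(\ques)$ explicitly and then splitting on whether $\gpset$ contains a condition $\gpc(n,k)$ with $n \geq 1$.

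For (1), I first observe that $\g(\gpset)$ is closed under converse of production rules: $\ques \pto \strc \in \g(\gpset)$ iff $\conv{\ques} \pto \conv{\strc} \in \g(\gpset)$. This holds because the two rules contributed by $\gpc(n,k)$ are $\fd \pto \bd^{n}\fd^{k}$ and $\bd \pto \bd^{k}\fd^{n}$, and $\conv{(\bd^{n}\fd^{k})} = \bd^{k}\fd^{n}$. Next I check the one-step case. If $\stra'\ques'\strb' \pto \stra'\strc\strb'$ via $\ques' \pto \strc$, then $\conv{\strb'}\,\conv{\ques'}\,\conv{\stra'} \pto \conv{\strb'}\,\conv{\strc}\,\conv{\stra'}$ via the converse rule $\conv{\ques'} \pto \conv{\strc}$, using $\conv{(\stra\strb)} = \conv{\strb}\conv{\stra}$. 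Induction on the length of a derivation then gives: $\ques \dr \stra$ implies $\conv{\ques} \dr \conv{\stra}$. The converse direction follows by applying this to $\conv{\ques}$ and using that the converse operation is an involution.

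For (2), I first determine the two target languages.
\begin{itemize}
\item $L_{\sfour}(\ques) = \{\ques^{m} \mid m \in \mathbb{N}\}$. Every $\sfour$ rule rewrites $\fd$ into a string of $\fd$'s and $\bd$ into a string of $\bd$'s, so no other letter ever appears. Conversely, $\ques \pto \ques\ques$ and $\ques \pto \empstr$ generate every power $\ques^{m}$.
\item $L_{\sfive}(\ques) = \albetstr$. From $\fd$ we obtain $\fd \pto \bd\fd \pto \bd$, using $\fd \pto \empstr$. We also obtain $\fd \pto \bd\fd \pto \bd\fd\fd \pto \fd\fd$, using $\bd \pto \bd\fd$ and then $\bd \pto \empstr$. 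Thus $\fd$ can be duplicated and turned into $\bd$; together with the corresponding steps for $\bd$ ($\bd \pto \bd\fd \pto \fd$, and duplication) and erasure, any string is derivable.
\end{itemize}

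Now set $L = L_{\sfour \cup \g(\gpset)}(\ques)$. Since $\sfour \subseteq \sfour \cup \g(\gpset)$, we have $L_{\sfour}(\ques) \subseteq L \subseteq \albetstr$, and I split into two cases.
\begin{itemize}
\item \emph{Every $\gpc(n,k) \in \gpset$ has $n = 0$.} Then the added rules are $\fd \pto \fd^{k}$ and $\bd \pto \bd^{k}$. These still preserve being a power of the starting letter, so $L = L_{\sfour}(\ques)$.
\item \emph{Some $\gpc(n,k) \in \gpset$ has $n \geq 1$.} Then $\fd \pto \bd^{n}\fd^{k}$, and erasing with $\fd \pto \empstr$ and $\bd \pto \empstr$ yields $\fd \dr \bd$. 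Symmetrically, $\bd \pto \bd^{k}\fd^{n}$ yields $\bd \dr \fd$. Together with duplication $\ques \pto \ques\ques$ and erasure from $\sfour$, from $\ques$ we can produce $\ques^{m}$ for any $m$ and then convert each letter independently. Hence every string is derivable, so $L = \albetstr = L_{\sfive}(\ques)$.
\end{itemize}

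The main point needing care is the explicit computation $L_{\sfive}(\ques) = \albetstr$, since the statement compares against $L_{\sfive}$ rather than against $\albetstr$ directly. The remaining steps are routine inductions on derivation length.
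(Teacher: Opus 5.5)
Your proposal is correct and follows the paper's route. Part (1) uses the closure of $\g(\gpset)$ under taking converses of productions. Part (2) splits on whether some production of $\g(\gpset)$ introduces the converse letter, which is exactly your split on whether some $\gpc(n,k)\in\gpset$ has $n\geq 1$. Your version is more detailed than the paper's in two places. You spell out the one-step-then-induction argument for (1), and you verify $L_{\sfour}(\ques)=\{\ques^{m}\mid m\in\mathbb{N}\}$ and $L_{\sfive}(\ques)=\albetstr$ explicitly, which the paper leaves implicit.
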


\begin{proof} Claim (1) follows from the definition of $\g(\gpset)$ since for each production rule $\ques \pto \stra \in \g(\gpset)$ there is a production rule $\conv{\ques} \pto \conv{\stra} \in \g(\gpset)$. To prove claim (2), we first observe that $\g(\gpset)$ either contains a production rule of the form $\ques \pto \stra \conv{\ques} \strb$ or it does not. Since $\fd \pto \empstr, \bd \pto \empstr \in \sfour$, we have that $\ques \longrightarrow_{\sfour \cup \g(\gpset)}^{*} \conv{\ques}$, which, along with the other rules of $\sfour$, will permit us to derive any string from $\ques$, i.e. $L = L_{\sfive}(\ques)$. If $\g(\gpset)$ does not contain a production rule of the form $\ques \pto \stra \conv{\ques} \strb$, then it will also not contain a production rule of the form $\conv{\ques} \pto \strc \ques \strc'$ by the same reason justifying claim (1) above. Therefore, every production rule of $\g(\gpset)$ must be of the form $\fd \pto \fd^{n}$ or $\bd \pto \bd^{n}$ with $n \in \mathbb{N}$ and $\ques^{0} = \empstr$. Observe that all production rules in $\g(\gpset)$ can be derived in $\sfour$, i.e. $\fd \longrightarrow_{\sfour}^{*} \fd^{n}$ or $\bd \longrightarrow_{\sfour}^{*} \bd^{n}$, showing that $L = L_{\sfour}(\ques)$ as $\sfour$ can already derive whatever can be derived with $\g(\gpset)$.
\end{proof}
 
$\sfour$ encodes a notion of \emph{directed} reachability and $\sfive$ encodes a notion of \emph{undirected} reachability. Given a frame, recognizing what worlds are reachable from one another along paths of the accessibility relation $\R$ is relevant in our context as frame conditions such as $\gpc(n,k)$ connect initial and terminal worlds, and $\cid$ and $\cdd$ require inner domains to contain certain elements along $\R$-paths. To see how $\sfour$ and $\sfive$ encode notions of (un)directed reachability, let us think of $\fd$ as representing a forward move along the accessibility relation of a model, and $\bd$ as representing a backward move along the accessibility relation. Then, $\sfour$ defines the languages $\langsfour(\fd) = \{\fd^{n} \ | \ n \in \mathbb{N}\}$ and $\langsfour(\bd) = \{\bd^{n} \ | \ n \in \mathbb{N}\}$, where each string $\fd^{n}$ and $\bd^{n}$ (denoting a sequence of $n$ $\fd$'s and $n$ $\bd$'s, respectively, with $\fd^{0} = \bd^{0} = \empstr$) `connects' a world $w$ and $u$ if $u$ can be reached in $n$ forward steps or backward steps (resp.) along the accessibility relation from $w$. Similarly, $\sfive$ defines the language $\langsfive(\fd) = \langsfive(\bd) = \albetstr$ where each string $\ques_{1} \cdots \ques_{n}$ `connects' a world $w$ to $u$ if $u$ is reachable from $w$ in $n$ steps (forward and/or backward).




%% file: body-nested-systems2.tex

We now present our nested sequent systems for QMLs. Let $\vec{\phi}$ be a (finite) multiset of formulae and $\vt \subset \ter$ be  a (finite) multiset of terms, which we refer to as a \emph{signature}. We define \emph{nested sequents} accordingly:
\begin{enumerate}

\item Each \emph{flat sequent} of the form $\vt, \vec{\phi}$ is a nested sequent;

\item For $1 \leq i \leq n$, if $\nsii_{i}$ is a nested sequent, then $\vt, \vec{\phi}, [\nsii_{1}], \ldots, [\nsii_{n}]$ is a nested sequent.

\end{enumerate}
In the context of nested sequents, we use $\Gamma$, $\Delta$, $\nantii$, $\ldots$ to denote flat sequents, $\vt$, $\vs$, $\vr$, $\ldots$ to denote signatures, $\vec{\phi}$, $\vec{\psi}$, $\vec{\xi}$, $\ldots$ to denote multisets of formulae, and $\ns$, $\nsii$, $\nsiii$, $\ldots$ to denote nested sequents. The use of signatures in nested sequents is crucial for encoding the $\cid$, $\cdd$, $\ccd$, and $\cned$ conditions in inference rules. Signatures have been used in nested sequents for other first-order non-classical logics~\cite{Lyo23,LyoShiTiu25} to encode similar conditions.

For a nested sequent $\ns =  \Gamma, [\nsii_{1}], \ldots, [\nsii_{m}]$, we define a \emph{component} of the nested sequent to be a flat sequent appearing in $\ns$, that is, a component of $\ns$ is an element of the multiset $c(\ns)$, where the function $c$ is recursively defined as follows with $\uplus$ denoting the multiset union:
$$
c( \Gamma, [\nsii_{1}], \ldots, [\nsii_{m}]) = \{ \Gamma \} \uplus \!\! \biguplus_{1 \leq i \leq m} \!\! c(\nsii_{i}).
$$
We let $\names$ be a denumerable set of pairwise distinct labels and use $w$, $u$, $v$, $\ldots$ (occasionally annotated) to denote them. For each nested sequent $\ns = \Gamma , [\nsii_{1}], \ldots, [\nsii_{m}]$, we call $\Gamma$ the \emph{root} of $\ns$ and assume that every component of $\ns$ is assigned a unique name from $\names$. The incorporation of names in our nested sequents is crucial for the definition of our reachability rules below. A \emph{context} is a nested sequent with holes; a hole $\{\}$ takes the place of a formula in a nested sequent and $\ns\{\nsii\}$ denotes the sequent obtained by filling the hole in the context $\ns\{\}$ with the nested sequent $\nsii$. As usual, we write $\ns\{\emptyset\}$ to denote the removal of a hole $\{\}$ in a context $\ns\{\}$. For example, if $\ns\set{} = t, p(x), [s, q(a), \set{}]$ and $\nsii = x, r(x,y), [\exists x p(x)]$, then $\ns\set{\nsii} = t, p(x), [s, q(a), x, r(x,y), [\exists x p(x)]]$ and $\ns\set{\emptyset} = t, p(x), [s, q(a)]$. We sometimes refer to a component as a \emph{$w$-component} if $w$ is the name of that component, and we use the notation $\ns\{\nsii_{1}\}_{w_{1}}\cdots\{\nsii_{n}\}_{w_{n}}$ to indicate that $\nsii_{i}$ is `rooted at' the $w_{i}$-component of $\ns$ for $1 \leq i \leq n$. Often, we will more simply write $\ns\{\nsii_{1}\}\cdots\{\nsii_{n}\}$ if the names of the components are not of relevance. 
Observe that every nested sequent encodes a \emph{tree} whose nodes are (named) flat sequents.

\begin{definition}[Tree of a Nested Sequent]\label{definition:tns} Let $\ns =  \Gamma, [\nsii_{1}]_{w_1}, \ldots, [\nsii_{m}]_{w_n}$ be a nested sequent with $w$ the name of the component $\Gamma$. We recursively define the tree $\tr(\ns) = (V,E)$ as follows:
$$
V = \{(w,\Gamma)\} \cup \bigcup_{i = 1}^{n} V_{i}
\qquad
E = \{(w,w_{i}) \ | \ 1 \leq i \leq n\} \cup \bigcup_{i = 1}^{n} E_{i}
$$
such that $\tr(\nsii_{i}) = (V_{i},E_{i})$ for $1 \leq i \leq n$.
\end{definition}


As is typical of nested sequents, they admit interpretation directly on models \emph{and} possess an equivalent formula interpretation, that is, every nested sequent is equivalent to a formula in the language $\lang$. Both interpretations are useful, and so, we present both interpretations below. 

\begin{definition}[Sequent Semantics]\label{def:sequent-semantics} Let $\ns$ be a nested sequent with $w$ the name of the root, $\tr(\ns) = (V,E)$, $\M = \<\W,\R,\U,\D,\I\>$ be a model, and $\s$ be an $\M$-assignment. We define an \emph{$\M$-interpretation} to be a function $\mint : \names \rightarrow \W$. We say that $\ns$ is satisfied on $\M$ with $\s$ and $\mint$, written $\M, \s, \mint \Vdash \ns$, \iffi if conditions (1) and (2) hold, then (3) holds, where conditions (1)--(3) are listed below:
\begin{itemize}

\item[(1)] for each $(u,v) \in E$, $\mint(u)\R\mint(v)$;

\item[(2)] for each $(u,\vt, \msi) \in V$, $\I_{\mint(u)}^{\s}(\vt) \in \D_{\mint(u)}$;

\item[(3)] for some $(v,\vt, \msi) \in V$, $\M, \mint(v), \s \Vdash \bigvee \msi$.

\end{itemize}
We say that $\ns$ is \emph{valid} w.r.t. a class of frames $\fclassc$ \iffi for every model $\M$ based on a frame $\F \in \fclassc$, $\M$-assignment $\s$, and $\M$-interpretation $\mint$, $\M, \s, \mint \Vdash \ns$. We say that $\ns$ is \emph{invalid} otherwise.
\end{definition}

\begin{definition}[Formula Interpretation] We define the formula interpretation $\fm(\ns)$ of a nested sequent $\ns$ accordingly:
$$
\fm(\vec{t},\vec{\phi},[\nsii_1],\dots,[\nsii_n]) :=(\bigvee_{t\in\vec{t}}\negnnf{\E t}\lor\bigvee_{\phi\in\vec{\phi}}\phi) \lor \bigvee_{i=1}^{n}\Box \,\fm(\nsii_{i})
$$
where $\bigvee \emptyset := \bot$, as usual.
\end{definition}

Observe that $\fm(\ns)$ requires the existence predicate $\E$ to be expressible in the object language, which holds when the language contains the equality, as is the case for $\lang$.

\begin{proposition}\label{prop:equiv-model-form-interp}
Let $\M = \<\W,\R,\U,\D,\I\>$ be a model, $\s$ be an $\M$-assignment, $\mint$ be an $\M$-interpretation, and $\ns = \vec{t},\vec{\phi},[\nsii_1],\dots,[\nsii_n]$ be a nested sequent with $w$ the name of the root $\vec{t},\vec{\phi}$. Then, $\M, \s, \mint \Vdash \ns$ \iffi $\M, \mint(w), \s \Vdash \fm(\ns)$.
\end{proposition}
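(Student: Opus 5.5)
We prove the biconditional by induction on the depth of $\ns$, i.e.\ the height of the tree $\tr(\ns)$ of \dfn~\ref{definition:tns}. Throughout, $\M$, $\s$ and $\mint$ are fixed, and $w$ is the name of the root. The first step is to unfold the left-hand side. Conditions (1) and (2) of \dfn~\ref{def:sequent-semantics} split into a root part and parts for each subtree $\nsii_i$ rooted at $w_i$:
\begin{itemize}
\item for (1): the edges $(w,w_i)$, plus the edges of $\tr(\nsii_i)$;
\item for (2): the root triple $(w,\vt,\msi)$, plus the triples in $\tr(\nsii_i)$.
\end{itemize}
Condition (3) splits the same way: either the root witnesses it, or some node inside some $\nsii_i$ does. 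Hence $\M,\s,\mint\Vdash\ns$ amounts to the following: if $\I^{\s}_{\mint(w)}(\vt)\in\D_{\mint(w)}$ and $\mint(w)\R\mint(w_i)$ for all $i$, and the hypotheses of $\M,\s,\mint\Vdash\nsii_i$ hold for every $i$, then either $\M,\mint(w),\s\Vdash\bigvee\msi$ or some $\nsii_i$ satisfies (3).

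For the base case, $\ns=\vt,\msi$ is flat. Here $\fm(\ns)=\bigvee_{t\in\vt}\negnnf{\E t}\lor\bigvee\msi$. We use the observation after the semantic clauses: $\M,\mint(w),\s\Vdash\E t$ iff $\s(t)\in\D_{\mint(w)}$, where $\s(t)$ is read as $\I^{\s}_{\mint(w)}(t)$ for constants. Then "condition (2) implies $\bigvee\msi$" is exactly the truth of the disjunction $\bigvee\negnnf{\E t}\lor\bigvee\msi$ at $\mint(w)$. This is a purely propositional rewriting of an implication as a disjunction.

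For the inductive step, the root literals are handled exactly as in the base case. What remains is to match each disjunct $\Box\fm(\nsii_i)$ with the $i$-th subtree part of the unfolded condition. By the induction hypothesis applied to $\nsii_i$ with the same $\s$ and $\mint$, $\M,\s,\mint\Vdash\nsii_i$ iff $\M,\mint(w_i),\s\Vdash\fm(\nsii_i)$. From $\Box\fm(\nsii_i)$ at $\mint(w)$ together with the edge hypothesis $\mint(w)\R\mint(w_i)$, we get $\fm(\nsii_i)$ at $\mint(w_i)$, and hence the $i$-th subtree satisfies its implication. For the converse direction we must obtain $\fm(\nsii_i)$ at \emph{every} $\R$-successor $u$ of $\mint(w)$, not just at $\mint(w_i)$. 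To do this, we use that the names below the root are interpretation parameters: we apply the induction hypothesis to the $\M$-interpretation that agrees with $\mint$ except that it sends $w_i$ to $u$ and the names of $\nsii_i$ accordingly. This is legitimate because satisfaction of the subtree part only depends on $\mint$ restricted to the names occurring in $\nsii_i$, and these are disjoint from the names elsewhere by the uniqueness of names. Finally, we reassemble the pieces using the distributivity of "if the hypotheses, then some disjunct" over the disjoint union of the subtrees.

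I expect this last matching to be the main obstacle: aligning the universal quantification hidden in $\Box$ with the fixed value $\mint(w_i)$ demanded by condition (1). The plan is to isolate a small locality lemma first, stating that $\M,\s,\mint\Vdash\nsii_i$ depends only on $\mint$ on the names of $\nsii_i$. Using this lemma we carry out the reassignment of the subtree names, and once it is in place the rest of the induction is routine bookkeeping.
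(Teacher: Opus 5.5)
There is a genuine gap, and it is in the left-to-right direction. (The paper states this proposition without proof, so the comparison below is with the statement itself.)

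Your right-to-left direction is fine, even for a fixed $\mint$. If $\fm(\ns)$ holds at $\mint(w)$ and conditions (1) and (2) hold, then a true disjunct $\Box\fm(\nsii_i)$ together with $\mint(w)\R\mint(w_i)$ gives $\fm(\nsii_i)$ at $\mint(w_i)$, and the induction hypothesis then yields (3) inside $\nsii_i$.

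The problem is going from $\M,\s,\mint\Vdash\ns$ to $\fm(\ns)$ at $\mint(w)$. Your hypothesis speaks only about the one interpretation $\mint$. You then pass to an interpretation $\mint'$ that sends $w_i$ to an arbitrary successor $u$ and invoke the induction hypothesis for $\mint'$. That step needs $\M,\s,\mint'\Vdash\nsii_i$, and nothing you have assumed provides it. Your locality lemma cuts the other way: the subtree part depends precisely on the values you are changing. For fixed $\mint$, the $i$-th part of your unfolded condition only gives ``if $\mint(w)\R\mint(w_i)$ then $\fm(\nsii_i)$ holds at $\mint(w_i)$''. That is strictly weaker than $\Box\fm(\nsii_i)$ at $\mint(w)$.

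This cannot be patched, because the direction is false as literally stated. Let $\ns=[P]$ with $P$ a propositional variable, root named $w$ and child named $v$, so $\fm(\ns)$ is equivalent to $\Box P$. Take $\W=\{a,b,c\}$, $\R=\{(a,b),(a,c)\}$, $P$ true only at $b$, $\mint(w)=a$ and $\mint(v)=b$. Then (1)--(3) hold, so $\M,\s,\mint\Vdash\ns$, yet $\M,a,\s\not\Vdash\Box P$ because of $c$. Choosing $\mint(v)$ outside the successors of $\mint(w)$ even makes $\M,\s,\mint\Vdash\ns$ hold vacuously.

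What is true, and what gives the intended validity-level equivalence of $\ns$ and $\fm(\ns)$, is a quantified version: $\M,\mint(w),\s\Vdash\fm(\ns)$ iff $\M,\s,\mint'\Vdash\ns$ for every $\M$-interpretation $\mint'$ with $\mint'(w)=\mint(w)$. Your ingredients prove this, provided the induction hypothesis is this strengthened statement rather than the fixed-$\mint$ one.

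\begin{itemize}
\item Right-to-left is your argument applied to each such $\mint'$.
\item For left-to-right, argue contrapositively. If $\fm(\ns)$ fails at $\mint(w)$, then the root signature lies in $\D_{\mint(w)}$ and $\bigvee\msi$ fails there. Moreover, for each $i$ there is $u_i$ with $\mint(w)\R u_i$ and $\fm(\nsii_i)$ false at $u_i$.
\item The strengthened induction hypothesis then gives, for each $i$, some $\mint_i$ with $\mint_i(w_i)=u_i$ and $\M,\s,\mint_i\not\Vdash\nsii_i$.
\item The name sets of the $\nsii_i$ are pairwise disjoint and do not contain $w$. So you can glue $\mint(w)$ and the $\mint_i$ into a single $\mint'$ that satisfies (1) and (2) for $\ns$ while falsifying (3).
\end{itemize}

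This gluing step is exactly where your locality lemma and the uniqueness of names do their work.
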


A uniform presentation of our nested sequent systems is provided in \fig~\ref{fig:nested-calculi} and all systems are formally defined in \dfn~\ref{def:nested-systems} (after enough groundwork has been laid to define them properly). Each nested sequent system $\nql$ takes a set $\fcons$ of frame conditions as a parameter and is sound and complete for the logic $\ql$ (see Theorems~\ref{thm:soundness-nested} and~\ref{thm:completeness-nested} below). We note that  $\nqk = \nqk(\emptyset)$ is the calculus for $\baselog$. Each system contains the \emph{initial rule} $\ax$, the \emph{logical rules} $\disr$, $\conr$, $\existsr$, $\allr$, $\diar$, and $\boxr$, and the \emph{identity rules} $\nidref$, $\idrep$, $\idrig$, and $\drep$. The rules $\dpr$, $\ned$, and $\cdr$ are referred to as \emph{signature rules} and the $\drule$ rule is the only \emph{structural rule}. (NB. The side conditions of the rules are discussed in detail below.) Recall that $L$ stands for a literal and  $N$ stands for a negative literal,  which is relevant for reading the $\ax$ and $\idrep$ rules.

\begin{figure}[t]

\begin{center}
\begin{tabular}{c c c c}
\AxiomC{\phantom{$\ns$}}
\RightLabel{$\ax$}
\UnaryInfC{$\ns \hol  L, \negnnf{L} \hor$}
\DisplayProof

&

\AxiomC{$\ns \hol \phi,\psi\hor$}
\RightLabel{$\disr$}
\UnaryInfC{$\ns \hol \phi\lor\psi \hor$}
\DisplayProof

&

\AxiomC{$\ns \hol \phi\hor$}
\AxiomC{$\ns \hol \psi \hor$}
\RightLabel{$\conr$}
\BinaryInfC{$\ns \hol \phi \wedge \psi  \hor$}
\DisplayProof

&

\AxiomC{$\ns \hol t, \ex x \psi, \psi(t/x)\hor_w$}
\RightLabel{$\existsr$}
\UnaryInfC{$\ns \hol t, \ex x \psi \hor_w$}
\DisplayProof
\end{tabular}
\end{center}


\begin{center}
\begin{tabular}{c c c c}
\AxiomC{$\ns \hol y, \phi(y/x) \hor$}
\RightLabel{$\allr^{\dag_{1}}$}
\UnaryInfC{$\ns \hol \fa x\phi \hor$}
\DisplayProof

&

\AxiomC{$\ns \hol \Diamond\phi \hor_{w} \hol \phi \hor_{u}$}
\RightLabel{$\diar^{\dag_{2}(\fcons)}$}
\UnaryInfC{$\ns \hol \Diamond\phi \hor_{w} \hol \emptyset \hor_{u}$}
\DisplayProof

&

\AxiomC{$\ns \hol [ \phi] \hor $}
\RightLabel{$\boxr$}
\UnaryInfC{$\ns \hol  \Box \phi \hor$}
\DisplayProof

&

\AxiomC{$\ns \hol t \neq t \hor$}
\RightLabel{$\nidref$}
\UnaryInfC{$\ns \hol  \emptyset \hor$}
\DisplayProof
\end{tabular}
\end{center}


\begin{center}
\begin{tabular}{c c c}
\AxiomC{$\ns \hol t \neq s, N(t/z), N(s/z) \hor$}
\RightLabel{$\idrep$}
\UnaryInfC{$\ns \hol t \neq s, N(t/z)  \hor$}
\DisplayProof

&

\AxiomC{$\ns \hol s,t ,t \neq s \hor$}
\RightLabel{$\drep$}
\UnaryInfC{$\ns \hol t, t \neq s \hor$}
\DisplayProof

&

\AxiomC{$\ns \hol s\neq t \hor_w \hol s\neq t \hor_u$}
\RightLabel{$\idrig^{\dag_4}$}
\UnaryInfC{$\ns \hol  s\neq t \hor_w \hol \emptyset \hor_u$}
\DisplayProof

\end{tabular}
\end{center}


\begin{center}
\begin{tabular}{c c c c}

\AxiomC{$\ns \hol t \hor_{w} \hol t \hor_{u}$}
\RightLabel{$\dpr^{\dag_{3}(\fcons)}$}
\UnaryInfC{$\ns \hol  t \hor_{w} \hol \emptyset \hor_{u}$}
\DisplayProof
&

\AxiomC{$\ns \hol  [ \emptyset] \hor$}
\RightLabel{$\drule$}
\UnaryInfC{$\ns \hol \emptyset \hor$}
\DisplayProof

&

\AxiomC{$\ns \hol y \hor$}
\RightLabel{$\ned^{\dag_{1}}$}
\UnaryInfC{$\ns \hol \emptyset \hor$}
\DisplayProof

&

\AxiomC{$\ns \hol  t \hor$}
\RightLabel{$\cdr$}
\UnaryInfC{$\ns \hol \emptyset \hor$}
\DisplayProof
\end{tabular}
\end{center}


\begin{flushleft}
\textbf{Side Conditions:} Let $\gpset = \{\gpc(n,k) \ | \ \gpc(n,k) \in \fcons\}$.\\
\begin{minipage}{0.4\textwidth}$\dag_{1} :=$ $y$ is fresh.\\
$\dag_{2}(\fcons):=$ $w \prpath{L} u$ with $L = L_{\thuesys(\gpset)}(\fd)$.
\end{minipage}
\begin{minipage}{0.4\textwidth}
$\dag_{3}(\fcons) :=$  $w\neq u$ and: see \fig~\ref{fig:side-conditions}.\\
$\dag_{4}:=$  $w\neq u$.
\end{minipage}
\end{flushleft}

\caption{The nested system $\nql$ with $\fcons$ a set of frame conditions. 
\label{fig:nested-calculi}}
\begin{center}
\bgroup
\setlength{\tabcolsep}{5pt}
\def\arraystretch{1.1}
\begin{tabular}{| l | l |}
\hline
Frame Conditions in $\fcons$ &  Corresponding Additional Side Conditions for $\dpr$\\
\hline
$\cid \in \fcons$, but $\cdd \not\in \fcons$ & $\dag_{3}(\fcons) :=$ ``$w \prpath{L} u$ with $L := L_{\sfour \cup \thuesys(\gpset)}(\fd)$.''\\
$\cdd \in \fcons$, but $\cid \not\in \fcons$ & $\dag_{3}(\fcons) :=$ ``$w \prpath{L} u$ with $L := L_{\sfour \cup \thuesys(\gpset)}(\bd)$.''\\
$\cid, \cdd \in \fcons$ & $\dag_{3}(\fcons) :=$ ``$w \prpath{L} u$ with $L := L_{\sfive(\gpset)}(\fd)$.''\\
\hline
\end{tabular}
\egroup
\end{center}

\caption{The side condition $\dag_{3}(\fcons)$ for $\dpr$.}
\label{fig:side-conditions}
\end{figure}

Depending on the contents of $\fcons$, a nested calculus $\nql$ may contain (none/some/all of) the rules $\set{\dpr, \ned, \cdr, \drule}$ (see \dfn~\ref{def:nested-systems} for details). 
 The \emph{domain propagation rule} $\dpr$ semantically captures the fact that if a term is interpreted in the inner domain of a world, then it is interpreted in the inner domains of all future worlds, all past worlds, or all worlds, depending on if the model has increasing, decreasing, or constant inner domains, respectively. The \emph{classical domains rule} $\cdr$ semantically captures classical domains by letting one bottom-up introduce any term to a component of a nested sequent; this corresponds to the fact that all terms are interpreted in all domains of a model satisfying classical domains. Note that the $\allr$ and $\ned$ rules are subject to a freshness condition, that is, the variable $y$ must be \emph{fresh} in any application of the rule, i.e., $y$ cannot occur free in the conclusion of a rule application. In the \emph{non-empty domains rule} $\ned$, this freshness condition encodes the fact that \emph{some} object exists in the inner domain of each world. Last, the \emph{seriality rule} $\drule$ corresponds to the seriality condition. The exact semantic reading of each inference rule is made clear in the proof of soundness (see \thm~\ref{thm:soundness-nested}).

\begin{remark}\label{rmk:all-rule-subsumes-EBR} Observe that when the premise $\ns$ of an $\allr$ instance has a formula interpretation of the form $\fm(\ns) = \phi_{0} \lor \Box (\phi_{1} \lor \Box (\cdots \lor \Box (\phi_{n} \lor \Box \phi_{n+1})\ldots))$, then the $\allr$ rule application is essentially an application of $\ebr$. This shows that $\ebr$ is essentially a special instance of $\allr$, i.e., $\allr$ is stronger than and subsumes the $\ebr$ rule. As mentioned in Remark~\ref{rmk:constant-outer}, the inclusion of $\ebr$ in a QML forces the outer domains of normal models to be constant (see Corsi~\cite[pp. 1503-1504]{Cor02}), which explains why our nested sequent systems are sound and complete relative to QMLs with constant outer domains.
\end{remark}

As mentioned above, the parameter $\fcons$ dictates what logic $\ql$ the system $\nql$ is sound and complete relative to. This is achieved by (1) adding the $\dpr$, $\drule$, $\ned$, or $\cdr$ rules to the system and/or (2) changing the functionality of the reachability rules $\diar$ and  $\dpr$. Reachability rules are special in that they view nested sequents as automata, and enable both terms and formulae to be (bottom-up) propagated or consumed along certain paths (corresponding to strings generated by a $\albet$-system) within the tree $\tr(\ns)$ of a nested sequent $\ns$. To make the functionality of such rules precise, we define \emph{propagation graphs} and \emph{propagation paths} (cf.~\cite{CiaLyoRamTiu21,GorPosTiu11}). 

\begin{definition}[Propagation Graph]\label{def:propagation-graph} Let $\ns$ be a nested sequent with $\tr(\ns) = (V,E)$. We define the \emph{propagation graph} $\prgr{\ns} = (\prv,\pre,\prl)$ such that 

\begin{itemize}

\item[(1)] $w \in \prv$ \iffi $w$ is the name of a component that exists in $\ns$;

\item[(2)] $(w, \fd, u), (u, \bd, w) \in \pre$ \iffi $(w,u) \in E$;

\item[(3)] $\prl(w) = \vt$ \iffi $(w, \vt, \msi) \in V$.

\end{itemize}
We will often write $w \in \prgr{\ns}$ to mean $w \in \prv$, and $(w,\ques,u) \in \prgr{\ns}$ to mean $(w,\ques,u) \in \pre$.
\end{definition}

\begin{definition}[Propagation Path]\label{def:propagation-path} Given a propagation graph $\prgr{\ns} = (\prv,\pre,\prl)$, two names $u,w \in \prv$, and a character $\ques \in \albet$, we write $\prgr{\ns} \models u \prpath{\ques} w$ \iffi $(u,\ques,w) \in \pre$. Moreover, given a string $\ques\stra \in \albetstr$ where $\ques \in \albet$, we inductively define $\prgr{\ns} \models u \prpath{\ques\stra} w$ as `$\exists_{v \in \prv} \ \prgr{\ns}  \models u\prpath{\ques} v$ and $\prgr{\ns} \models v \prpath{\stra}w$', and we take $\prgr{\ns} \models u \prpath{\empstr} w$ to mean that $u = w$. Additionally, when $\prgr{\ns}$ is clear from the context we may simply write $u \prpath{\stra} w$ to express $\prgr{\ns} \models u \prpath{\stra} w$. Finally, given a language $\glang(\stra)$ of some $\albet$-system $\g$ and $\stra \in \albetstr$, we write $u \ \prpath{\glang(\stra)} \ w$ \iffi there is a string $\strb \in \glang(\stra)$ such that $u \prpath{\strb} w$.
\end{definition}

The following lemma is a consequence of \lem~\ref{lem:S4-with-S(G)-is-S4-or-S5} and will be useful in the sequel.

\begin{lemma}\label{lem:reversing-paths}
Let $\ns$ be a nested sequent, $\gpset$ be a set of generalized path conditions, $L = L_{\g(\gpset)}(\ques)$, and $L^{-1} = L_{\g(\gpset)}(\conv{\ques})$. Then, $\prgr{\ns} \models u \prpath{L} w$ \iffi $\prgr{\ns} \models w \ \prpath{L^{-1}} \ u$.
\end{lemma}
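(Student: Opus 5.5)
It suffices to show the left-to-right direction; the converse then follows by the same argument with $\ques$ replaced by $\conv{\ques}$, since $\conv{(\conv{\ques})} = \ques$, so the roles of $L$ and $L^{-1}$ simply swap.

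The first step is a purely graph-theoretic observation about propagation paths: for any string $\stra \in \albetstr$ and names $u,w$ of $\ns$, we have $\prgr{\ns} \models u \prpath{\stra} w$ \iffi $\prgr{\ns} \models w \prpath{\conv{\stra}} u$. I would prove this by induction on the length of $\stra$.
\begin{itemize}
\item For $\stra = \empstr$ both sides say $u = w$.
\item For a single character $\ques'$, clause (2) of \dfn~\ref{def:propagation-graph} gives $(u,\fd,w) \in \pre$ \iffi $(w,\bd,u) \in \pre$, which is exactly $u \prpath{\ques'} w$ \iffi $w \prpath{\conv{\ques'}} u$.
\item For the inductive step, write $\stra = \ques'\strb$. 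Then $u \prpath{\stra} w$ means there is a $v$ with $u \prpath{\ques'} v$ and $v \prpath{\strb} w$. By the base case and the induction hypothesis this is equivalent to $v \prpath{\conv{\ques'}} u$ and $w \prpath{\conv{\strb}} v$.
\end{itemize}
To conclude the inductive step, I need that paths compose along concatenation: $w \prpath{\strb_1\strb_2} u$ \iffi there is a $v$ with $w \prpath{\strb_1} v$ and $v \prpath{\strb_2} u$. This is a small auxiliary induction on $\strb_1$ using \dfn~\ref{def:propagation-path}. Applying it with $\strb_1 = \conv{\strb}$ and $\strb_2 = \conv{\ques'}$ gives $w \prpath{\conv{\strb}\,\conv{\ques'}} u$, and $\conv{\strb}\,\conv{\ques'} = \conv{\stra}$ by the definition of the converse on strings.

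The second step combines this with \lem~\ref{lem:S4-with-S(G)-is-S4-or-S5}(1). Suppose $u \prpath{L} w$, so there is some $\strb \in L_{\g(\gpset)}(\ques)$ with $u \prpath{\strb} w$. By the first step, $w \prpath{\conv{\strb}} u$. By \lem~\ref{lem:S4-with-S(G)-is-S4-or-S5}(1), $\conv{\strb} \in L_{\g(\gpset)}(\conv{\ques}) = L^{-1}$. Hence $w \prpath{L^{-1}} u$.

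The main point requiring care is the concatenation lemma for propagation paths, since \dfn~\ref{def:propagation-path} only unfolds paths from the left. Everything else is immediate from the symmetry of $\pre$ and from \lem~\ref{lem:S4-with-S(G)-is-S4-or-S5}(1).
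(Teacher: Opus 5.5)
Your proof is correct and follows the paper's route. The paper only remarks that this lemma is a consequence of Lemma~\ref{lem:S4-with-S(G)-is-S4-or-S5}(1), relying implicitly on the symmetry of the edge set $\pre$ of the propagation graph; you make that symmetry explicit through the path-reversal and concatenation inductions.
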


With the above notions, one can formally specify the operation of the reachability rules $\diar$ and $\dpr$. The side conditions dictating how such rules can be applied are listed in Figures~\ref{fig:nested-calculi} and~\ref{fig:side-conditions}. 
To make the operation of reachability rules clearer for the reader, we provide an example of a $\diar$ 
rule application below.

\begin{example}\label{ex:propagation-graph-path} Let $\ns = x, y, \psi, [\exists x P(x), P(y)], [a, \dia (\phi \lor \psi), \phi \lor \psi]$. Reading $\ns$ from left to right, we assume the names of the three components are $w$, $v$, and $u$. A graphical depiction of the propagation graph $\prgr{\ns} = (\prv,\pre,\prl)$ is given below, where we denote each vertex $w' \in \prv = \{w,v,u\}$ as a pair $(w',\prl(w'))$ to make its label clear.
\begin{center}
\begin{minipage}[t]{.5\textwidth}
\xymatrix{
  (v,\emptyset)\ar@/^-1pc/@{.>}[rr]|-{\bd} & &  (w,\{x,y\})\ar@/^1pc/@{.>}[rr]|-{\fd}\ar@/^-1pc/@{.>}[ll]|-{\fd} & &  (u,\{a\})\ar@/^1pc/@{.>}[ll]|-{\bd}
}
\end{minipage}
\end{center}
If $\gpc(1,1) \in \fcons$ (i.e., Euclideanity is a frame condition), then $\fd \pto \bd \cate \fd \in \thuesys(\gpset)$, and since $\prgr{\ns} \models u \prpath{\bd \fd} u$ with $\bd \cate \fd \in L_{\thuesys(\gpset)}(\fd)$, it follows that the $\diar$ rule can be applied to $\ns$, letting us derive the nested sequent $x, y, \psi, [\exists x P(x), P(y)], [a, \dia (\phi \lor \psi)]$. 
\end{example}

\begin{definition}[Nested Sequent Systems]\label{def:nested-systems} We define the base nested sequent system to be the following collection of rules:
$$
\nqk := \set{\ax, \disr, \conr, \existsr, \allr, \diar, \boxr, \nidref, \idrep, \idrig, \drep}.
$$
We take $\nqk = \nqk(\emptyset)$, meaning, the side condition for the 
$\diar$ rule is $\dag_{2}(\emptyset) =$ ``$w \prpath{R} u$''.  For each set $\fcons$ of frame conditions, we define the nested sequent system $\nql$ to be an extension of $\nqk$ with rules from the set $\set{\dpr, \ned, \cdr, \drule}$ that satisfies the following conditions:
\begin{itemize}

\item $\dpr \in \nql$ \iffi $\fcons \cap \set{\cid, \cdd} \neq \emptyset$;

\item $\ned \in \nql$ \iffi $\cned \in \fcons$;

\item $\cdr \in \nql$ \iffi $\ccd \in \fcons$;

\item $\drule \in \nql$ \iffi $\cser \in \fcons$.

\end{itemize}
Each reachability rule from the set $\set{\diar, \dpr}$ occurring in $\nql$ is assumed to satisfy its side condition as stipulated in \fig~\ref{fig:nested-calculi} and \fig~\ref{fig:side-conditions}.
\end{definition}

As usual, we refer to terms and formulae that are explicitly presented in the premises and conclusion of a rule as \emph{auxiliary} and \emph{principal}, respectively. For example, in the $\allr$ rule, $t$ and $\phi(t/x)$ are auxiliary and $\forall x\phi$ is principal. A \emph{derivation} in $\nql$ of a nested sequent $\ns$ is a (potentially infinite) tree whose nodes are labeled with nested sequents such that: 
\begin{itemize}

\item[(1)] The root is labeled with $\ns$;

\item[(2)] Every parent node is the conclusion of an instance of a  rule of $\nql$ with its children the premises.


\end{itemize}
Without loss of generality, we identify derivations that differ only in the names of bound variables,\label{note:nested-equiv-renaming} we assume that no variable has both a free and bound occurrence in a derivation, and we assume that every fresh variable used in a proof is \emph{globally fresh}, meaning there is a one-to-one correspondence between $\allr$ (and $\ned$) applications and their fresh variables. These assumptions are justified by our identification of formulae modulo the names of bound variables and by 
Lemma~\ref{lem:ps-admiss}. We define a \emph{branch} $\branch = \ns_{0}, \ns_{1}, \ldots, \ns_{n}, \ldots$ to be a maximal path of nested sequents in a derivation such that $\ns_{0}$ is the conclusion of the derivation and each nested sequent $\ns_{i+1}$ (if it exists) is a child of $\ns_{i}$. 

A \emph{proof} is a finite derivation where all leaves are instances of $\ax$. We use $\prf$ and annotated versions thereof to denote both derivations and proofs with the context differentiating the usage. If a proof of a nested sequent $\ns$ exists in $\nql$, then we write $\nql\vdash\ns$ to indicate this. Often, when discussing a proof transformation, we will use the following notation to denote that $\prf$ is a proof of a nested sequent $\ns$: 
\begin{center}
\AxiomC{$\dotprf{\prf}$}
\UnaryInfC{$\ns$}
\DisplayProof
\end{center}
The \emph{height} of a derivation is defined in the usual way as the length of a maximal branch in the derivation, which may be infinite if the derivation is infinite.

\begin{example} We provide examples of two proofs below. The proof shown below left proves an instance of the Barcan Formula $\abf$ 
and we suppose that the proof is given in a nested system $\nql$ with $\cdd \in \fcons$. The proof shown below right proves an instance of the Converse Barcan Formula $\acbf$ 
and we suppose that the proof is given in a nested system $\nql$ with $\cid \in \fcons$. Observe that the application of $\dpr$ is allowed in the left proof as $\cdd \in \fcons$ and the application of $\dpr$ is allowed in the right proof as $\cid \in \fcons$. In particular, the $\dpr$ rule is needed in the left proof to (bottom-up) shift the variable $y$ to the root of the nested sequent and the $\dpr$ rule is needed in the right proof to (bottom-up) shift the variable $y$ into the nesting $[P(y)]$; without this functionality neither proof can be (bottom-up) completed.\\
\resizebox{\textwidth}{!}{
\begin{tabular}{c c}
\AxiomC{$\phantom{P}$}
\RightLabel{$\ax$}
\UnaryInfC{$y, \exists x \dia \ngn{P(x)}, \dia  P(y) [y, P(y), \ngn{P(x)}]$}
\RightLabel{$\diar$}
\UnaryInfC{$y, \exists x \dia \ngn{P(x)}, \dia \ngn{P(y)} [y, P(y)]$}
\RightLabel{$\existsr$}
\UnaryInfC{$y, \exists x \dia \ngn{P(x)}, [y, P(y)]$}
\RightLabel{$\dpr$}
\UnaryInfC{$\exists x \dia \ngn{P(x)}, [y, P(y)]$}
\RightLabel{$\allr$}
\UnaryInfC{$\exists x \dia \ngn{P(x)}, [\forall x P(x)]$}
\RightLabel{$\boxr$}
\UnaryInfC{$\exists x \dia \ngn{P(x)}, \Box \forall x P(x)$}
\RightLabel{$\disr$}
\UnaryInfC{$\exists x \dia \ngn{P(x)} \lor \Box \forall x P(x)$}
\DisplayProof

&

\AxiomC{$\phantom{P}$}
\RightLabel{$\ax$}
\UnaryInfC{$y, \dia \exists x \ngn{P(x)}, [y, P(y), \exists x \ngn{P(x)}, \ngn{P(y)}]$}
\RightLabel{$\existsr$}
\UnaryInfC{$y, \dia \exists x \ngn{P(x)}, [y, P(y), \exists x \ngn{P(x)}]$}
\RightLabel{$\diar$}
\UnaryInfC{$y, \dia \exists x \ngn{P(x)}, [y, P(y)]$}
\RightLabel{$\dpr$}
\UnaryInfC{$y, \dia \exists x \ngn{P(x)}, [P(y)]$}
\RightLabel{$\boxr$}
\UnaryInfC{$y, \dia \exists x \ngn{P(x)}, \Box P(y)$}
\RightLabel{$\allr$}
\UnaryInfC{$\dia \exists x \ngn{P(x)}, \forall x \Box P(x)$}
\RightLabel{$\disr$}
\UnaryInfC{$\dia \exists x \ngn{P(x)} \lor \forall x \Box P(x)$}
\DisplayProof
\end{tabular}
}
\end{example}

\subsection{Soundness and Completeness}

We first prove the soundness of our nested systems, and afterward, discuss the proof of completeness. Let us first define a few concepts that are of use in proving soundness. 

Let $\M = \<\W,\R,\U, \D,\I\>$ be a model. We define $\R^{-1}$ to be the inverse of $\R$, i.e., for any $w, u \in \W$, $u \R^{-1} w$ \iffi $w \R u$. We define $\rho(\fd) := \R$ and $\rho(\bd) := \R^{-1}$. For $w,u \in \W$, we define $\M \models w \prpath{\ques} u$ \iffi $w \rho(\ques) u$. 
 Given a string $\ques\stra \in \albetstr$ where $\ques \in \albet$, we inductively define $\M \models w \prpath{\ques\stra} u$ as `$\exists_{v \in \W} \ \M \models w \prpath{\ques} v$ and $\M \models v \prpath{\stra} u$', and we take $\M \models w \prpath{\empstr} u$ to mean that $w = u$. Given a language $\glang(\stra)$ of some $\albet$-system $\g$ with $\stra \in \albetstr$, we write $\M \models w \ \prpath{\glang(\stra)} \ u$ \iffi  there is a string $\strb \in \glang(\stra)$ such that $\M \models w \prpath{\strb} u$.

\begin{lemma}\label{lem:paths-implies-edge-in-model}
Let $\M = \<\W,\R,\U, \D,\I\>$ be a model based on a frame in $\fclassc$, $w, u \in \W$, and $\gpset = \set{\gpc(n,k) \mid \gpc(n,k) \in \fcons \text{ and } n,k \in \mathbb{N}}$. If $\M \models w \prpath{L} u$ with $L = L_{\thuesys(\gpset)}(\ques)$, then $w \rho(\ques) u$.
\end{lemma}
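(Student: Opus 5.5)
We argue by induction on the length of the derivation $\ques \pto^{*}_{\thuesys(\gpset)} \strb$ witnessing $\strb \in L$, proving a slightly stronger claim so that the induction goes through. The claim is: for every string $\stra \in \albetstr$ and every $\strb$ with $\stra \pto^{*}_{\thuesys(\gpset)} \strb$, if $\M \models w \prpath{\strb} u$ then $\M \models w \prpath{\stra} u$. Taking $\stra = \ques$ gives the lemma, because $\M \models w \prpath{\ques} u$ means exactly $w \rho(\ques) u$.

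The base case is a derivation of length zero, where $\strb = \stra$ and there is nothing to show. For the inductive step, write the derivation as $\stra \pto \stra' \pto^{*} \strb$, where the first step rewrites $\stra = \strc \cate \ques' \cate \strc'$ into $\stra' = \strc \cate \strc'' \cate \strc'$ using a production $\ques' \pto \strc''$. The induction hypothesis applied to $\stra'$ gives $\M \models w \prpath{\stra'} u$. Since path satisfaction composes along concatenation (a routine induction on string length), we obtain worlds $v_1, v_2$ with $\M \models w \prpath{\strc} v_1$, $\M \models v_1 \prpath{\strc''} v_2$ and $\M \models v_2 \prpath{\strc'} u$. It therefore suffices to prove the one-step fact: if $\M \models v_1 \prpath{\strc''} v_2$ for a production $\ques' \pto \strc''$ in $\thuesys(\gpset)$, then $v_1 \rho(\ques') v_2$. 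Reassembling the three pieces then yields $\M \models w \prpath{\stra} u$.

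The one-step fact is the only substantive part, and it is where the frame condition is used. First take the production $\fd \pto \bd^{n}\fd^{k}$, which comes from $\gpc(n,k) \in \fcons$. A path $\bd^{n}\fd^{k}$ from $v_1$ to $v_2$ passes through some $z$ with $z \R^{n} v_1$ (the $n$ backward steps, read in reverse) and $z \R^{k} v_2$. Since $\M$ is based on a frame in $\fclassc$, the condition $\gpc(n,k)$ gives $v_1 \R v_2$. Next take the production $\bd \pto \bd^{k}\fd^{n}$. Here a point $z$ satisfies $z \R^{k} v_1$ and $z \R^{n} v_2$, and applying $\gpc(n,k)$ with the roles of the two endpoints exchanged gives $v_2 \R v_1$, that is, $v_1 \R^{-1} v_2$. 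The degenerate cases $n=0$ or $k=0$ need no separate treatment, since $\R^{0}$ is the identity and the empty string corresponds to equality.

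The main point requiring care is the bookkeeping in the backward-move case, namely checking that $\bd^{n}$ indeed corresponds to $(\R^{-1})^{n}$, i.e.\ to $\R^{n}$ read in reverse. Once that is verified, the rest is routine.
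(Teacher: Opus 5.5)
Your proof is correct and takes essentially the same route as the paper. Both argue by induction on the length of the semi-Thue derivation, and in both the key step uses $\gpc(n,k)$ to collapse a path labelled by the right-hand side of a production ($\fd \pto \bd^{n}\fd^{k}$ or $\bd \pto \bd^{k}\fd^{n}$) into a single $\rho(\fd)$- or $\rho(\bd)$-step. The only difference is cosmetic: the paper peels off the last rewriting step and applies the induction hypothesis to $\strb\ques_{0}\strc$, which needs no strengthening since the source symbol stays fixed, whereas you peel off the first step and so rightly generalize to arbitrary source strings; your explicit check of the two production cases fills in what the paper leaves as ``by definition''.
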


\begin{proof} By induction on the length of the derivation of $\stra \in L = L_{\thuesys(\gpset)}(\ques)$ such that $\M \models w \prpath{\stra} u$. 

\textit{Base case.} For the base case, suppose the length of the derivation of $\stra$ is $0$. By \dfn~\ref{def:semi-thue-deriv-lang}, the only derivation in $L_{\thuesys(\gpset)}(\fd)$ of length $0$ is the derivation consisting solely of $\ques$. Hence, $\stra = \ques$, meaning $\M \models w \prpath{\ques} u$, so $w \rho(\ques) u$ holds by definition.

\textit{Inductive step.} Let the derivation of $\stra$ be of length $n+1$. By \dfn~\ref{def:semi-thue-deriv-lang}, there is a derivation $\fd \pto^{*}_{\thuesys(\gpset)} \strb \ques_{0} \strc$ of length $n$ and production rule $\ques_{0} \pto \ques_{1} \cdots \ques_{m} \in \thuesys(\gpset)$ such that $\stra = \strb \ques_{1} \cdots \ques_{m} \strc$. Since $\M \models w \prpath{\stra} u$, it follows that there exist worlds $v_{0}, \ldots, v_{m} \in \W$ such that $v_{0} \rho(\ques_{1}) v_{1}, \ldots v_{m-1} \rho(\ques_{m}) v_{m}$. By definition, there is a generalized path condition corresponding to the production rule $\ques_{0} \pto \ques_{1} \cdots \ques_{m} \in \thuesys(\gpset)$, which implies that $v_{0} \rho(\ques_{0}) v_{m}$. Therefore, $\M \models w \prpath{\strb} v_{0}$, $\M \models v_{0} \prpath{\ques_{0}} v_{m}$, and $\M \models v_{m} \prpath{\strb} u$, meaning $\M \models w \ \prpath{\strb \ques_{0} \strc} \ u$. Since the string $\strb \ques_{0} \strc$ has a derivation of length $n$, it follows that $w \rho(\ques) u$ by IH.
\end{proof}

\begin{lemma}\label{lem:paths-imply-relation}
Let $\M = \<\W,\R,\U, \D,\I\>$ be a model based on a frame in $\fclassc$, $w, u \in \W$, $\s$ be an $\M$-assignment, and $\mint$ be an $\M$-interpretation. Moreover, let $\gpset = \set{\gpc(n,k) \mid \gpc(n,k) \in \fcons \text{ and } n,k \in \mathbb{N}}$. If $\M, \s, \mint \not\Vdash \ns$ and $\prgr{\ns} \models w \prpath{L} u$ with $L = L_{\thuesys(\gpset)}(\ques)$, then $\mint(w) \rho(\ques) \mint(u)$.
\end{lemma}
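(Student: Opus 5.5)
The plan is to reduce the statement to \lem~\ref{lem:paths-implies-edge-in-model} by transferring the path from the propagation graph $\prgr{\ns}$ to the model $\M$ along the interpretation $\mint$.

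First, assume $\M, \s, \mint \not\Vdash \ns$. By \dfn~\ref{def:sequent-semantics}, the only way for the implication ``(1) and (2) imply (3)'' to fail is that (1) and (2) hold while (3) fails. In particular, condition (1) holds: for every edge $(v,v') \in E$ of $\tr(\ns)$ we have $\mint(v) \R \mint(v')$. Edges of the propagation graph are of two kinds. An edge $(v,\fd,v')$ comes from $(v,v') \in E$ and yields $\mint(v) \rho(\fd) \mint(v')$. An edge $(v',\bd,v)$ also comes from $(v,v')\in E$ and yields $\mint(v') \R^{-1} \mint(v)$, that is, $\mint(v') \rho(\bd) \mint(v)$. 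So every edge $(v,\ques',v')$ of $\prgr{\ns}$ satisfies $\mint(v)\rho(\ques')\mint(v')$.

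Second, show by induction on the length of a string $\stra \in \albetstr$ that $\prgr{\ns} \models v \prpath{\stra} v'$ implies $\M \models \mint(v) \prpath{\stra} \mint(v')$.
\begin{itemize}
\item For $\stra=\empstr$ we have $v=v'$, hence $\mint(v)=\mint(v')$.
\item For $\stra=\ques'\strb$, choose an intermediate $v''$ with $v \prpath{\ques'} v''$ and $v'' \prpath{\strb} v'$. The previous step gives $\mint(v)\rho(\ques')\mint(v'')$, and the induction hypothesis gives $\M \models \mint(v'') \prpath{\strb} \mint(v')$. Combining these matches the inductive definition of $\M \models \cdot \prpath{\ques'\strb} \cdot$.
\end{itemize}

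Finally, $\prgr{\ns} \models w \prpath{L} u$ provides a string $\stra\in L = L_{\thuesys(\gpset)}(\ques)$ with $w \prpath{\stra} u$ in $\prgr{\ns}$. By the second step, $\M \models \mint(w) \prpath{\stra} \mint(u)$, so $\M \models \mint(w) \prpath{L} \mint(u)$. Since $\M$ is based on a frame in $\fclassc$, \lem~\ref{lem:paths-implies-edge-in-model} yields $\mint(w)\rho(\ques)\mint(u)$.

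There is no real obstacle. The only point needing care is the $\bd$-edges: one must check that they correspond to $\R^{-1}$ under $\mint$, which follows directly from clause (2) of \dfn~\ref{def:propagation-graph} together with condition (1).
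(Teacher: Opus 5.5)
Your proposal is correct and follows the paper's proof: take a string $\stra \in L$ witnessing $w \prpath{\stra} u$ in $\prgr{\ns}$, transfer it to $\M \models \mint(w) \prpath{\stra} \mint(u)$ using condition (1) of Definition~\ref{def:sequent-semantics}, and conclude with Lemma~\ref{lem:paths-implies-edge-in-model}. The only difference is that you write out the transfer step, namely the induction on $\stra$ and the check that $\bd$-edges go to $\R^{-1}$, which the paper compresses into a single appeal to the definition.
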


\begin{proof} By assumption, it follows that there exists a string $\stra \in L$ such that $\prgr{\ns} \models w \prpath{\stra} u$. By \dfn~\ref{def:sequent-semantics}, it follows that $\M \models \mint(w) \prpath{\stra} \mint(u)$, and so, $\mint(w) \rho(\ques) \mint(u)$ by \lem~\ref{lem:paths-implies-edge-in-model}.
\end{proof}

\begin{theorem}[Soundness]\label{thm:soundness-nested} If $\nql \vdash \ns$, then $\ns$ is valid w.r.t. the class of frames $\fclassc$.
\end{theorem}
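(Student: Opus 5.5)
We argue by induction on the height of the proof of $\ns$ in $\nql$, in the contrapositive form: for each rule, if the conclusion is falsified by some model $\M = \<\W,\R,\U,\D,\I\>$ based on a frame in $\fclassc$, some $\M$-assignment $\s$ and some $\M$-interpretation $\mint$ (i.e.\ conditions (1) and (2) of \dfn~\ref{def:sequent-semantics} hold but (3) fails), then some premise is falsified by a model based on the same frame, with a possibly modified $\s$ or $\mint$. For the initial rule $\ax$ this is immediate, since $L$ and $\negnnf{L}$ cannot both be false at $\mint(w)$ under $\s$.

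For the propositional, modal and quantifier rules we read off the semantic clauses.
\begin{itemize}
\item For $\disr$ and $\conr$ we use the clauses for $\lor$ and $\land$.
\item For $\existsr$, condition (2) gives $\I^{\s}_{\mint(w)}(t) \in \D_{\mint(w)}$. So falsity of $\ex x\psi$ at $\mint(w)$ yields falsity of $\psi(t/x)$ there, via the usual substitution lemma, which also needs rigidity of constants along $\R$-paths.
\item For $\allr$, falsity of $\fa x\phi$ at $\mint(w)$ gives some $o \in \D_{\mint(w)}$ with $\phi$ false under $\s^{x\trir o}$. Because $y$ is fresh, the assignment $\s^{y\trir o}$ falsifies the premise, and it puts $y$ into the inner domain as condition (2) demands. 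The fact that $\s$ is global (constant outer domain) is exactly what makes this work at an arbitrarily deep component.
\item For $\boxr$, falsity of $\Box\phi$ yields an $\R$-successor $v$, and we extend $\mint$ by sending the new name to $v$.
\item For $\diar$, \lem~\ref{lem:paths-imply-relation} with $L = L_{\thuesys(\gpset)}(\fd)$ gives $\mint(w)\R\mint(u)$, so falsity of $\Diamond\phi$ at $\mint(w)$ forces falsity of $\phi$ at $\mint(u)$.
\item For $\drule$, seriality supplies a successor for the new nesting.
\item For $\ned$, $\cned$ supplies some $o\in\D_{\mint(w)}$, and we set $\s(y)=o$ with $y$ fresh.
\item For $\cdr$, $\ccd$ puts every term in the inner domain.
\end{itemize}

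The identity rules use Proposition~\ref{prop:eq-sat-global}(2), assuming w.l.o.g.\ a connected model via Proposition~\ref{prop:eq-sat-global}(1). If the conclusion contains $s \neq t$ falsely at $\mint(w)$, then $s = t$ holds globally.
\begin{itemize}
\item For $\idrig$, this makes $s\neq t$ false at $\mint(u)$.
\item For $\idrep$, $N(t/z)$ and $N(s/z)$ get the same truth value.
\item For $\drep$, $s$ denotes the same object as $t$, so it lies in $\D_{\mint(w)}$ and condition (2) survives.
\item For $\nidref$, $t\neq t$ is always false.
\end{itemize}

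The main obstacle is $\dpr$, and we handle it by cases on Figure~\ref{fig:side-conditions}. Suppose $\cid\in\fcons$ and $\cdd\notin\fcons$, so that $w\prpath{L}u$ with $L = L_{\sfour\cup\thuesys(\gpset)}(\fd)$. By \lem~\ref{lem:S4-with-S(G)-is-S4-or-S5}(2), $L$ is either $L_{\sfour}(\fd)$ or $L_{\sfive}(\fd)$. In the first case $\mint(u)$ is reachable from $\mint(w)$ by a forward $\R$-path, so iterating $\cid$ gives $\I^{\s}(t)\in\D_{\mint(u)}$. In the second case, $\thuesys(\gpset)$ contains a rule of the form $\ques\pto\stra\conv{\ques}\strb$. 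We must then show that the frame conditions make the relevant backward step admissible for domains. The plan is to use a \lem~\ref{lem:paths-implies-edge-in-model}-style induction, treating each derivation step $\ques\pto\stra$ in $L_{\sfour\cup\thuesys(\gpset)}$ as either domain-preserving by $\cid$ or as introducing an $\R$-edge via $\gpc(n,k)$. From this we show that along any path labelled by a string derivable from $\fd$, domains are preserved from start to end. Concretely, the induction shows: if $\M\models v\prpath{\strc}v'$ with $\fd\pto^{*}\strc$, then $\D_v\subseteq\D_{v'}$. For the production steps from $\thuesys(\gpset)$, we collapse the subpath to a single edge as in \lem~\ref{lem:paths-implies-edge-in-model}. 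For the steps $\fd\pto\empstr$ and $\fd\pto\fd\fd$ the claim is trivial given $\cid$. The $\cdd$ case is symmetric via \lem~\ref{lem:reversing-paths}, and when both $\cid,\cdd\in\fcons$, domains agree along any undirected path, so the $\sfive$ side condition is sound directly. Along the way one must check that condition (1) is unchanged (no new edges) and condition (2) now holds at $u$.
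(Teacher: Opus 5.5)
Your overall route is the paper's: a rule-by-rule contrapositive argument that adjusts the falsifying assignment or interpretation for $\allr$, $\boxr$, $\ned$ and $\drule$. Your treatment of every rule other than $\dpr$ is fine.

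\textbf{The gap.} It is the $\sfive$ subcase of $\dpr$ when $\cid\in\fcons$ but $\cdd\notin\fcons$, and its mirror image for $\cdd$. Your invariant ``if $\M\models v\prpath{\strc}v'$ and $\fd\pto^{*}_{\sfour\cup\thuesys(\gpset)}\strc$, then $\D_v\subseteq\D_{v'}$'' is false. The induction breaks at the step you call trivial.
\begin{itemize}
\item If the last step is $\strb\fd\strc\pto\strb\strc$ by $\fd\pto\empstr$, the induction hypothesis concerns paths labelled $\strb\fd\strc$.
\item A path $v\prpath{\strb}x\prpath{\strc}v'$ yields such a path only if $x\R x$.
\item Erasing is harmless at top level, by reflexivity of $\subseteq$. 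It is not harmless once a $\thuesys(\gpset)$-production has put a $\bd$ next to the erased $\fd$.
\end{itemize}
Concretely, take $\gpset=\{\gpc(1,1)\}$. Then $\fd\pto\bd\fd\pto\bd$, so $\bd\in L_{\sfour\cup\thuesys(\gpset)}(\fd)$. Consider the frame with $\W=\{w,u\}$, $\R=\{(w,u),(u,u)\}$, $\D_w=\emptyset$ and $\D_u=\{o\}$. It is Euclidean with increasing but not decreasing domains, so $\cdd\notin\fcons$ even after the closure of \rmk~\ref{rmk:axs-closed-under-consequence}. Yet $u\prpath{\bd}w$ holds and $\D_u\not\subseteq\D_w$.

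\textbf{Why no better argument exists.} The case is genuinely unsound. With this side condition, $\dpr$ may move a term from a nesting to its parent, since that path is labelled $\bd$. So the paper's own example derivation of $\abf$ goes through in $\nql$ for $\fcons$ the closure of $\{\gpc(1,1),\cid\}$. But setting $\I_u(P)=\emptyset$ in the frame above falsifies $\exists x\dia\ngn{P(x)}\lor\Box\forall x P(x)$ at $w$.

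The paper's proof passes over exactly this point. It says ``the second case is similar'', and the proof of \lem~\ref{lem:shift-admiss} wrongly appeals to the remark to exclude this case. So the theorem holds only once the side condition is restricted, for instance to paths labelled by strings in $(L_{\thuesys(\gpset)}(\fd))^{*}$. For that language your collapsing idea works verbatim: apply \lem~\ref{lem:paths-implies-edge-in-model} to each segment and iterate $\cid$. The $\cdd$ row needs the analogous change.
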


\begin{proof} Let us fix a class $\fclassc$ of frames. When we speak of (in)validity below, we mean that a nested sequent is (in)valid relative to this class of frames. It is straightforward to show that every instance of $\ax$ is valid. Therefore, we argue soundness by showing that the other rules of $\nql$ are sound, i.e., if the conclusion is invalid, then at least one premise is invalid. We present the $\existsr$, $\diar$ 
and $\dpr$ cases; additional cases are proven in Appendix~\ref{app:proofs-sec-sound-compl}.

$\existsr$. Suppose that $\nsii = \ns \hol t, \Gamma, \ex x \psi \hor_w$ is invalid. Then, there exists a model $\M$ based on a frame in $\fclassc$, an $\M$-assignment $\s$, and an $\M$-interpretation $\mint$ such that $\M, \s, \mint \not\Vdash \nsii$. By \dfn~\ref{def:sequent-semantics}, $\I_{\mint(w)}^{\s}(t) \in \D_{\mint(w)}$ and $\M, \s, \mint(w) \not\Vdash \exists x \psi$, meaning, $\M, \s, \mint(w) \not\Vdash \psi(t/x)$. This shows that $\M, \s, \mint \not\Vdash  \ns \hol t, \Gamma, \ex x \psi, \psi(t/x)\hor_w$.


$\diar$. Suppose that $\nsii = \ns \hol  \Gamma, \Diamond\phi \hor_{w} \hol \Delta \hor_{u}$ is invalid. By assumption, there exists a model $\M$ based on a frame in $\fclassc$, an $\M$-assignment $\s$, and an $\M$-interpretation $\mint$ such that $\M, \s, \mint \not\Vdash \nsii$, which further implies that $\M, \s, \mint(w) \not\Vdash \Diamond \phi$. By the side condition on $\diar$, we know that $w \prpath{L} u$ with $L = L_{\thuesys(\gpset)}(\fd)$. By \lem~\ref{lem:paths-imply-relation} it follows that $\mint(w)\R\mint(u)$. Hence, $\M, \s, \mint(u) \not\Vdash \phi$. This shows that $\M, \s, \mint \not\Vdash \ns \hol \Gamma, \Diamond\phi \hor_{w} \hol \Delta, \phi \hor_{u}$.


$\dpr$. Suppose that $\nsii = \ns \hol  t, \Gamma \hor_{w} \hol \Delta \hor_{u}$ is invalid. We show the case where $\cid \in \fcons$ and $\cdd \not\in \fcons$. By assumption, there exists a model $\M$ based on a frame in $\fclassc$, an $\M$-assignment $\s$, and an $\M$-interpretation $\mint$ such that $\M, \s, \mint \not\Vdash \nsii$, which means that $\s(t) \in \D_{\mint(w)}$. By the side condition on $\drep$, we know that $w \prpath{L} u$ with $L := L_{\sfour \cup \thuesys(\gpset)}(\fd)$. By \lem~\ref{lem:S4-with-S(G)-is-S4-or-S5}, we know that either $L = L_{\sfour}(\fd)$ or $L = L_{\sfive}(\fd)$. We prove the first case as the second case is similar. Since $w \ \prpath{L} \ u$ with $L = L_{\sfour}(\fd)$, we know that there exists a sequence of names $v_{1}, \ldots, v_{n}$ such that $w \prpath{\fd} v_{1}, v_{1} \prpath{\fd} v_{2}, \ldots, v_{n} \prpath{\fd} u$. By Definitions~\ref{def:sequent-semantics} and~\ref{def:propagation-path}, $\mint(w)\R\mint(v_{1}), \mint(v_{1})\R\mint(v_{2}), \ldots, \mint(v_{n})\R\mint(u)$. By the $\cid$ condition and the fact that $\s(t) \in \D_{\mint(w)}$, we know that $\s(t) \in \D_{\mint(u)}$. Hence, $\M, \s, \mint \not\Vdash \ns \hol t, \Gamma \hor_{w} \hol t, \Delta \hor_{u}$. 
\end{proof}

Completeness is proven by extracting a counter-model from failed (potentially non-terminating) proof-search. To be specific, we assume we are given an unprovable nested sequent $\ns$ and (bottom-up) apply rules from $\nql$ in an exhaustive and fair manner, which yields a (potentially infinite) derivation of $\ns$. As $\ns$ is unprovable in $\nql$, this derivation cannot be a proof, meaning, there must be a branch in the proof that does not terminate at an instance of $\ax$. Using this branch, a model can be extracted that falsifies $\ns$, which proves the contrapositive of the completeness statement below (\thm~\ref{thm:completeness-nested}). 

Although this completeness strategy is common in the domain of proof theory, the actual implementation of this strategy in our setting proves to be rather complex. This is due to the large number of details one must keep track of and the non-triviality of extracting a counter-model from an infinite branch of nested sequents. Due to the length of the completeness proof, we have deferred it to the appendix; the reader can consult Appendix~\ref{app:proofs-sec-sound-compl} for the full details.

\begin{theorem}[Completeness]\label{thm:completeness-nested}  If $\ns$ is valid w.r.t. the class of frames $\fclassc$, then $\nql \vdash \ns$.
\end{theorem}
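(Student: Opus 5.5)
We prove the contrapositive: if $\nql \not\vdash \ns$, then $\ns$ is invalid w.r.t.\ $\fclassc$. The plan is to run a fair, exhaustive bottom-up proof search, pick a branch that never reaches $\ax$, and read a counter-model off that branch.

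\emph{Step 1: the search procedure.} Starting from $\ns$, we apply the rules of $\nql$ bottom-up in rounds under a fair schedule. Every rule instance that is applicable at some stage is eventually applied, including every $\existsr$ instance for every term in the relevant signature, every $\diar$ and $\dpr$ instance licensed by the reachability side conditions, every $\idrig$, $\idrep$ and $\drep$ instance, and $\nidref$ for every term in the signatures. Since all rules except $\conr$ are unary and every rule retains its principal formula or signature (the $\existsr$ and $\diar$ premises keep $\ex x\psi$ and $\Diamond\phi$), saturation never destroys information. Terms introduced by $\allr$ and $\ned$ are taken from an enumeration of globally fresh variables. For $\cdr$, we add every term from a fixed enumeration of $\ter$ to every component, fairly. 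For $\drule$, we give every leaf component a child.

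\emph{Step 2: the open branch and its limit.} Because $\ns$ is unprovable, the resulting derivation is not a proof, so by K\"onig's lemma there is a branch $\branch=\ns_0,\ns_1,\ldots$ containing no $\ax$ instance. Sequents only grow upward along a branch, modulo the replacement of $\phi\lor\psi$ and of the principal formulae of $\conr$, $\allr$ and $\boxr$, which can be handled by keeping track of occurrences. We therefore form the limit: its set of names $\W_0$ is the union of all component names, its tree edges are the union of all tree edges, and the formula set $\Phi_w$ and signature $T_w$ of each $w$ are unions over the branch. By fairness, the limit is saturated in the following sense. If $\phi\lor\psi\in\Phi_w$, then $\phi,\psi\in\Phi_w$. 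If $\phi\land\psi\in\Phi_w$, then one conjunct is in $\Phi_w$. If $\ex x\psi\in\Phi_w$ and $t\in T_w$, then $\psi(t/x)\in\Phi_w$. If $\fa x\phi\in\Phi_w$, then $y\in T_w$ and $\phi(y/x)\in\Phi_w$ for some $y$. If $\Box\phi\in\Phi_w$, then a child of $w$ contains $\phi$. The sets are closed under $\diar$ along $L_{\thuesys(\gpset)}(\fd)$-paths and $T$ is closed under $\dpr$ along the appropriate paths. The inequality literals are closed under reflexivity, replacement of negative literals, and rigidity across components. Finally, no complementary pair $L,\negnnf{L}$ occurs in a single $\Phi_w$.

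\emph{Step 3: the model.} We define the model as follows.
\begin{itemize}
\item The relation is $w\R u$ iff $\prgr{\cdot}\models w\prpath{L}u$ in the limit graph with $L=L_{\thuesys(\gpset)}(\fd)$. Since this language is closed under the productions, which mirror $\gpc(n,k)$, the relation satisfies each $\gpc(n,k)\in\fcons$; this is a converse of \lem~\ref{lem:paths-implies-edge-in-model}, using \lem~\ref{lem:reversing-paths} to handle the $\bd$ productions. Seriality follows from $\drule$.
\item The outer domain is $\U=\ter/{\sim}$, where $t\sim s$ iff $t\neq s$ occurs in some $\Phi_w$. By $\idrig$ the inequality literals are uniform across components, so $\sim$ is well defined, and it is an equivalence relation: reflexive by $\nidref$, and symmetric and transitive by $\idrep$ applied to the literal $t\neq z$.
\item The inner domain is $\D_w=\{[t]: s\in T_w \text{ for some } s\sim t\}$. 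The rule $\drep$ makes $\D_w$ coherent under $\sim$, $\dpr$ gives $\cid$ and/or $\cdd$ (using \lem~\ref{lem:S4-with-S(G)-is-S4-or-S5} to reduce to the $\sfour$ or $\sfive$ case), $\ned$ gives $\cned$, and $\cdr$ gives $\ccd$.
\item The interpretation sets $\I_w(P)\ni\<[\vt]\>$ iff $\ngn{P(\vec s)}\in\Phi_w$ for some $\vec s\sim\vt$, and assigns constants and variables their own classes.
\item The $\M$-interpretation $\mint$ is the identity on names.
\end{itemize}

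\emph{Step 4: the truth lemma.} By induction on formulae we show that $\phi\in\Phi_w$ implies $\M,w,\s\not\Vdash\phi$. For literals this uses $\idrep$ closure together with the absence of $\ax$. The $\fa$ and $\ex$ cases use the saturation conditions on terms in $T_w$. The $\Diamond$ case uses the definition of $\R$ together with $\diar$ closure. The $\Box$ case uses $\boxr$. Conditions (1) and (2) of \dfn~\ref{def:sequent-semantics} hold by construction, so $\ns$ is falsified.

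The main obstacle is the equality layer. We must check that $\sim$ is a congruence that is uniform across all worlds and compatible with the inner domains and with $\I_w$, and also that the interaction between fair $\cdr$/$\ned$ term introduction and $\existsr$ saturation still yields a fair schedule. I would settle fairness first by a standard dovetailing enumeration of rule instances indexed by stage, and then verify the congruence properties directly from the $\idrep$, $\idrig$ and $\drep$ closures.
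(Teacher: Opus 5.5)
Your proposal is correct and follows essentially the paper's argument: a fair bottom-up proof search and K\"onig's lemma give an open branch, whose limit tree yields a term model. In that model $\R$ is $L_{\thuesys(\gpset)}(\fd)$-reachability, $\U=\ter/{\sim}$ with $\sim$ read off the $\neq$-literals (made uniform by $\idrig$), inner domains come from the signatures, and $\I$ comes from negative atoms, followed by the same truth lemma. One point to tighten: $\nidref$ must be applied to every term of $\ter$, as the paper does via a well-order on $\ter$, and not only to signature terms, because reflexivity of $\sim$ on all of $\ter$ and the truth-lemma case for $t=t$ depend on it.
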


%% file: body-proof-prop2.tex
In this section, we show that each nested calculus $\ql$ satisfies a wide array of interesting properties, which we leverage in our proof of syntactic cut-elimination. In particular, we show that various rules are \emph{(height-preserving) admissible}, meaning if the premises of the rule have proofs (of height $h_{1}, \ldots, h_{n}$), then the conclusion of the rule has a proof (of height $h \leq \max\{h_{1}, \ldots, h_{n}\}$). We refer to a height-preserving admissible rule as \emph{hp-admissible}.

\begin{figure}
\centering

\begin{tikzpicture}
\node[] (a) [] {$\ps$};
\node[] (b) [right of= a, xshift=1cm] {$\tw$};
\node[] (d) [below of= a] {$\ew$};
\node[] (c) [right of= d, xshift=1cm] {$\wk$};
\node[] (f) [right of= c, xshift=1cm] {Lem.~\ref{lem:hp-inver}};
\node[] (g) [below of= c] {$\tc$};
\node[] (e) [below of= d] {$\nec$};
\node[] (h) [right of= g, xshift=1cm] {$\ctr$};
\node[] (i) [below of= h, xshift=1cm] {$\ec$};
\node[] (j) [below of= e] {$\shift$};
\node[] (k) [right of= f, xshift=1cm] {$\ebr$};

\path[->] (b) edge[bend left=25] node[above] {} (k);
\path[->] (a) edge[] node[above] {} (b);
\path[->] (a) edge[] node[above] {} (c);

\path[->] (b) edge[] node[above] {} (f);
\path[->] (c) edge[] node[above] {} (f);
\path[->] (d) edge[bend right=25] node[above] {} (f);

\path[->] (g) edge[] node[above] {} (h);
\path[->] (g) edge[bend right=25] node[above] {} (i);
\path[->] (h) edge[bend right=25] node[above] {} (i);
\path[->] (i) edge[bend right=25] node[above] {} (h);

\path[->] (f) edge[bend right=25] node[above] {} (h);
\path[->] (f) edge[bend left=25] node[above] {} (i);

\path[->] (f) edge[] node[above] {} (k);
\end{tikzpicture}

\caption{A diagram depicting which (hp-)admissibility (and hp-invertibility) results are sufficient to prove others. An arrow from one rule to another indicates that the (hp-)admissibility of the source rule is sufficient to prove the (hp-)admissibility of the target rule. Note that `Lem.~\ref{lem:hp-inver}' indicates that all rules in $\nql$ are hp-invertible, i.e., that the $i$-inverse of all rules in $\nql$ are hp-admissible.\label{fig:admiss-invert-dependencies}}
\end{figure}
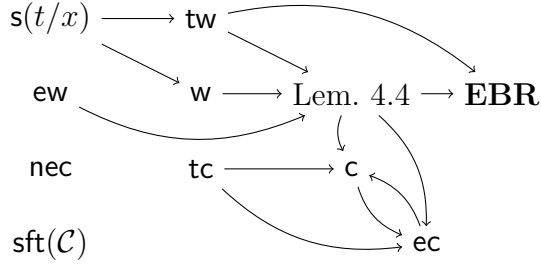

We will show all rules in \fig~\ref{fig:admiss-rules} hp-admissible in $\nql$, i.e., we prove the \emph{substitution rule} $\ps$, \emph{weakening rule} $\wk$, \emph{term weakening rule} $\tw$, \emph{external weakening rule} $\ew$, \emph{necessitation rule} $\nec$, \emph{contraction rule} $\ctr$, \emph{term contraction rule} $\tc$, \emph{external contraction rule} $\ec$, and \emph{shift rule} $\shift$ hp-admissible. These hp-admissibility results are crucial for establishing cut-elimination in the subsequent section. Regarding the substitution rule $\ps$, we define $\ns(t/x)$ to be the nested sequent $\ns$ obtained by replacing every occurrence of the variable $x$ in $\ns$ by $t$, regardless of if $x$ occurs in a signature or formula. Notice also that substitution has been defined in such a way that it may involve renaming of bound variables to avoid capture of free ones and, hence, it is not need to require $t$ to be free for $x$ in the formulas where we apply $\ps$. As we will see below, various hp-admissibility results will be used to prove further structural rules hp-admissible. To make the dependencies between these various lemmas clear, we have provided a diagram in \fig~\ref{fig:admiss-invert-dependencies} that displays which (hp-)admissible rules are sufficient to prove other rules (hp-)admissible.

The shift rule $\shift$ is a novel contribution of this paper and is interesting as it uniformly captures reasoning with generalized path conditions in a single rule.\footnote{Note that $\ncon, \nsii := \ncon,\nant,[\nsii_{1}], \ldots,[\nsii_{n}]$ for a flat sequent $\ncon$ and nested sequent $\nsii = \nant, [\nsii_{1}], \ldots, [\nsii_{n}]$.} This is advantageous as it does not require one to introduce a special structural rule for \emph{each} individual generalized path condition, as is the approach in other settings~\cite{Bru09,LyoOrl23} and which requires separate (hp-)admissibility proofs for each added rule. Moreover, as pointed out by Brünnler~\cite{Bru09}, having a distinct structural rule for each generalized path condition can lead to a loss of modularity, requiring the addition of even more structural rules to secure completeness. This allows for us to prove the hp-admissibility of a single, generic rule that simplifies our proof of cut-elimination and makes the proof uniform over the class of QMLs we consider. 

Beyond establishing the hp-admissibility of the rules in \fig~\ref{fig:admiss-rules}, we prove that all non-initial rules of $\nql$ are \emph{height-preserving invertible}. If we let $\ru^{-1}_{i}$ be the $i$-inverse of the rule $\ru$ whose conclusion is the $i^{th}$ premise of the $n$-ary rule $\ru$ and premise is the conclusion of $\ru$, then we say that $\ru$ is \emph{(height-preserving) invertible} \iffi $\ru^{-1}_{i}$ is (height-preserving) admissible for each $1 \leq i \leq n$. We refer to height-preserving invertible rules as \emph{hp-invertible}.

In the following subsection, we prove a sequence of hp-admissibility and hp-invertibility results, which will be sufficient to establish cut-elimination in the \sect~\ref{subsec:cut-elim}.

\begin{figure}[t]

\begin{center}
\begin{tabular}{c c c c c}
\AxiomC{$\ns$}
\RightLabel{$\ps$}
\UnaryInfC{$\ns(t/x)$}
\DisplayProof

&

\AxiomC{$\ns \hol \emptyset \hor$}
\RightLabel{$\wk$}
\UnaryInfC{$\ns \hol   \phi \hor$}
\DisplayProof

&
\AxiomC{$\ns \hol \emptyset \hor$}
\RightLabel{$\tw$}
\UnaryInfC{$\ns \hol t \hor$}
\DisplayProof
&

\AxiomC{$\ns \hol \emptyset \hor$}
\RightLabel{$\ew$}
\UnaryInfC{$\ns \hol  [\emptyset] \hor$}
\DisplayProof

&

\AxiomC{$\ns$}
\RightLabel{$\nec$}
\UnaryInfC{$[ \ns ]$}
\DisplayProof
\end{tabular}
\end{center}


\begin{center}
\begin{tabular}{c c c c}
\AxiomC{$\ns \hol  \phi, \phi  \hor$}
\RightLabel{$\ctr$}
\UnaryInfC{$\ns \hol  \phi  \hor$}
\DisplayProof

&

\AxiomC{$\ns \hol t,t   \hor$}
\RightLabel{$\tc$}
\UnaryInfC{$\ns \hol t \hor$}
\DisplayProof

&

\AxiomC{$\ns \hol [ \nant ], [ \ncon ] \hor$}
\RightLabel{$\ec$}
\UnaryInfC{$\ns \hol [ \nant, \ncon ] \hor$}
\DisplayProof

&

\AxiomC{$\ns \hol  \nant , [\nsii] \hor_{w} \hol  \ncon  \hor_{v}$}
\RightLabel{$\shift^{\dag(\fcons)}$}
\UnaryInfC{$\ns \hol  \nant  \hor_{w} \hol  \ncon, \nsii \hor_{v}$}
\DisplayProof
\end{tabular}
\end{center}


\begin{flushleft}
\textbf{Side Conditions:}\\
$\dag(\fcons) := w \prpath{L} v$ with $L = L_{\g(\gpset)}(\fd)$.
\end{flushleft}

\caption{Height-preserving admissible rules in $\nql$.\label{fig:admiss-rules}}
\end{figure}

\subsection{Admissibility and Invertibility Properties}

We first show that a `generalized form' of the $\ax$ rule is admissible in $\nql$. That is to say, while the $\ax$ rule stipulates that 
all nested sequents of the form $\ns \hol  L, \negnnf{L}  \hor$ with $L$ and $\negnnf{L}$ \emph{literals} are provable in $\nql$, the following lemma determines that all nested sequents of the form $\ns \hol  \phi, \negnnf{\phi}  \hor$ with $\phi, \negnnf{\phi} \in \lang$ are provable in $\nql$. This property is crucial to ensure that proofs are closed under the substitution of arbitrary formulae for propositional atoms, which is an essential factor in determining completeness (e.g., see the proof of~\cite[Lemma 12]{Lyo21thesis}).

\begin{lemma}[Generalized Axioms]\label{lem:generalized-id}
For any $\phi  \in \lang$, $\ns \hol  \phi, \negnnf{\phi}  \hor$ is provable in $\nql$.
\end{lemma}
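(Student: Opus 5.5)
I would prove the claim by induction on the length (equivalently, the construction) of $\phi$, uniformly in the context $\ns\hol\hor$. Since the hole may sit in any component of $\ns$, the statement is strengthened automatically: the induction hypothesis holds for every nested sequent context, including contexts obtained by adding nestings or terms. Because $\negnnf{\negnnf{\phi}} = \phi$, each connective is treated together with its dual; so it suffices to handle $\phi$ a literal, $\phi = \psi \lor \chi$, $\phi = \exists x \psi$, and $\phi = \Diamond\psi$.

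The base case is immediate: if $\phi$ is a literal, then $\ns\hol\phi,\negnnf{\phi}\hor$ is an instance of $\ax$.

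For $\phi = \psi\lor\chi$, so $\negnnf{\phi} = \negnnf{\psi}\land\negnnf{\chi}$, I would apply $\conr$ bottom-up and then $\disr$ in each premise. This yields $\ns\hol\psi,\chi,\negnnf{\psi}\hor$ and $\ns\hol\psi,\chi,\negnnf{\chi}\hor$. These follow from the induction hypothesis applied to the contexts $\ns\hol\chi,\{\}\hor$ and $\ns\hol\psi,\{\}\hor$ respectively; the extra formula simply lives in the context, so no weakening lemma is needed.

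For $\phi = \exists x\psi$, with $\negnnf{\phi} = \forall x\negnnf{\psi}$, I would apply $\allr$ bottom-up with a fresh $y$, giving $\ns\hol\exists x\psi, y, \negnnf{\psi}(y/x)\hor$. Then I apply $\existsr$ with the term $y$, which is now available in the signature, giving $\ns\hol\exists x\psi, y, \psi(y/x), \negnnf{\psi}(y/x)\hor$. Since $\negnnf{\psi}(y/x) = \negnnf{(\psi(y/x))}$ and $\psi(y/x)$ has the same length as $\psi$, the induction hypothesis closes this leaf. The induction is therefore on formula length rather than strict subformulas, which handles substitution instances.

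For $\phi = \Diamond\psi$, with $\negnnf{\phi} = \Box\negnnf{\psi}$, I would apply $\boxr$ to obtain $\ns\hol\Diamond\psi,[\negnnf{\psi}]\hor$. Then I apply $\diar$ from the $w$-component to the new child $u$; this is legal because $w\prpath{\fd}u$ and $\fd\in L_{\g(\gpset)}(\fd)$ by reflexivity of $\dr$, and that holds for every $\fcons$. This gives $\ns\hol\Diamond\psi,[\psi,\negnnf{\psi}]\hor$, which is closed by the induction hypothesis with context $\ns\hol\Diamond\psi,[\{\}]\hor$.

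The only mildly delicate points are two. First, one must check that the $\diar$ side condition is met by a single forward edge, which is immediate. Second, one must check the substitution/freshness bookkeeping in the quantifier case, justified by the paper's conventions on bound-variable renaming. No admissibility results beyond the rules themselves are required.
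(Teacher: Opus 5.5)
Your proof is correct and follows the paper's argument. It inducts on formula length, closes the quantifier pair by $\allr$ on a fresh $y$ followed by $\existsr$ instantiated with $y$, and closes the modal pair by $\boxr$ followed by $\diar$ along the single edge $\fd \in L_{\g(\gpset)}(\fd)$. Your $\exists$/$\Diamond$ cases are literally the paper's $\forall$/$\Box$ cases, read from the other member of the pair. One bookkeeping remark: symbol length is not preserved by $\negnnf{\cdot}$ (e.g.\ $P$ versus $\ngn{P}$). So when you fold a connective into its dual, invoke the IH on the shorter member of $\{\chi,\negnnf{\chi}\}$, which is harmless since both give the same sequent, or simply induct on the number of logical operators.
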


\begin{proof} The result is shown by induction on the length of $\phi$.\footnote{We use `IH' to indicate an application of the induction hypothesis.} We show the case where $\phi$ is of the form $\forall x \psi$ or $\Box \phi$ as the remaining cases are simple or similar.
\begin{center}
\begin{tabular}{c @{\hskip 1cm} c}
\AxiomC{$\dotprf{\prf}$}
\RightLabel{IH}
\UnaryInfC{$\ns \hol y,\exists x \negnnf{\psi} ,\psi(y/x),\negnnf{\psi(y/x)} \hor_{w} $}
\RightLabel{$\existsr$}
\UnaryInfC{$\ns \hol  y, \exists x \negnnf{\psi} ,\psi(y/x) \hor_{w} $}
\RightLabel{$\allr$}
\UnaryInfC{$\ns \hol  \exists x \negnnf{\psi} , \forall x \psi \hor_{w} $}
\DisplayProof

&

\AxiomC{$\dotprf{\prf}$}
\RightLabel{IH}
\UnaryInfC{$\ns \hol \Diamond \negnnf{\psi} , [ \psi,\negnnf{\psi}] \hor_{w} $}
\RightLabel{$\diar$}
\UnaryInfC{$\ns \hol \Diamond \negnnf{\psi} , [ \psi] \hor_{w} $}
\RightLabel{$\boxr$}
\UnaryInfC{$\ns \hol  \Diamond\negnnf{ \psi} , \Box \psi \hor_{w} $}
\DisplayProof
\end{tabular}
\end{center}
Note that the application of the $\diar$ rule in the above right proof is \emph{always} permitted in $\nql$, regardless of the contents of $\fcons$. This is because $\fd \in L_{\thuesys(\gpset)}(\fd)$ by \dfn~\ref{def:semi-thue-deriv-lang}.
\end{proof}

The following lemmas (\ref{lem:ps-admiss}--\ref{lem:shift-admiss}) are all shown by induction on the height of the given proof of the premise.

\begin{lemma}[Substitution]\label{lem:ps-admiss}
The $\ps$ rule is hp-admissible in $\nql$.
\end{lemma}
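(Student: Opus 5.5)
We argue by induction on the height $h$ of a proof $\prf$ of $\ns$, showing that $\ns(t/x)$ has a proof of height at most $h$. Throughout we use the conventions fixed after \dfn~\ref{def:nested-systems}: derivations are identified up to renaming of bound variables, no variable occurs both free and bound, and eigenvariables of $\allr$ and $\ned$ are globally fresh. If $x$ does not occur free in $\ns$, the claim is trivial, so assume otherwise.

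\emph{Base case.} If $\ns = \ns'\hol L, \negnnf{L}\hor$ is an instance of $\ax$, then $\ns(t/x) = \ns'(t/x)\hol L(t/x), \negnnf{L}(t/x)\hor$. Since substitution commutes with literal negation, i.e.\ $\negnnf{L}(t/x) = \negnnf{(L(t/x))}$, this is again an instance of $\ax$.

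\emph{Inductive step.} Consider the last rule $\ru$ applied in $\prf$. For most rules the substitution simply commutes with the rule. We apply IH to each premise and then reapply $\ru$. This works because of three facts:
\begin{itemize}
\item the tree $\tr(\ns)$, the names of components, and hence the propagation graph and the path side conditions of $\diar$, $\dpr$, $\idrig$, and $\drule$ are untouched by substitution;
\item $(\phi\circ\psi)(t/x) = \phi(t/x)\circ\psi(t/x)$ for each connective $\circ$, and substitution commutes with $\Box$ and $\dia$;
\item signatures are substituted pointwise, so the signature rules $\dpr$ and $\cdr$ transform into instances of themselves with the substituted term.
\end{itemize}
Several rules need a little care.
\begin{itemize}
\item For $\existsr$, we use $\psi(s/y)(t/x) = \psi(t/x)(s(t/x)/y)$, which holds after renaming the bound $y$ away from $x$ and $t$.
\item For $\idrep$, we write $N(t'/z)(t/x) = N(t/x)(t'(t/x)/z)$, choosing $z$ distinct from $x$ and not occurring in $t$.
\item For $\nidref$ and $\drep$, the substituted premise is again an instance of the same rule with the terms $s(t/x)$.
\end{itemize}

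The main obstacle is the pair of rules with freshness conditions, $\allr$ and $\ned$. Suppose the premise is $\ns'\hol y, \phi(y/x')\hor$ with $y$ fresh. If $y \neq x$ and $y$ does not occur in $t$, then IH followed by $\allr$ suffices. If $y$ coincides with $x$ or occurs in $t$, we first apply IH to the premise with the substitution $(z/y)$, where $z$ is a variable fresh for everything in sight. This yields a proof of $\ns'\hol z, \phi(z/x')\hor$ of the same height. We then apply IH again with $(t/x)$. Both applications are legitimate because IH is stated for all substitutions at proofs of smaller height, so the induction statement must quantify over all $t$ and $x$. Finally we apply $\allr$ (respectively $\ned$) with eigenvariable $z$. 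The same renaming handles $\ned$. Since each transformation replaces a rule by a single instance of the same rule, heights never increase, and hp-admissibility follows.
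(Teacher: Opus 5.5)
Your proof is correct and follows essentially the same route as the paper: induction on height, with every rule commuting with the substitution except $\allr$ and $\ned$. In those two cases one first uses IH to rename the eigenvariable to a fresh $z$, then uses IH again with $(t/x)$, and finally reapplies the rule. The differences are cosmetic: the paper always performs the renaming and treats clashes of the bound variable with $x$ or $t$ as explicit subcases, whereas you rename only when needed and absorb bound-variable clashes into $\alpha$-renaming.
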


\begin{proof} The base case is trivial as any application of $\ps$ to $\ax$  yields another instance of the rule. With the exception of the $\ned$ and $\allr$ cases of the inductive step, each case is resolved by invoking IH, and then applying the corresponding rule. In the $\ned$ and $\allr$ cases, one must take care that the freshness condition is not violated; such a situation for $\allr$ is shown below left, where we assume that $z$ is distinct from both $x$ and $t$.
\begin{center}
\begin{tabular}{c @{\hskip 1em} c @{\hskip 1em} c}
\AxiomC{$\dotprf{\prf}$}
\UnaryInfC{$\ns \hol  y, \phi(y/z) \hor$}
\RightLabel{$\allr$}
\UnaryInfC{$\ns \hol  \forall z \phi \hor$}
\RightLabel{$\ps$}
\UnaryInfC{$(\ns \hol \fa z\phi \hor )(t/x)$}
\DisplayProof

&$\leadsto$&

\AxiomC{$\dotprf{\prf}$}
\UnaryInfC{$\ns \hol  y, \phi(y/z) \hor$}
\RightLabel{IH}
\UnaryInfC{$\ns \hol  z', \phi(z'/z) \hor$}
\RightLabel{IH}
\UnaryInfC{$(\ns \hol  z', \phi(z'/z) \hor)(t/x)$}
\RightLabel{$\allr$}
\UnaryInfC{$(\ns \hol  \forall z \phi\hor)(t/x)$}
\DisplayProof
\end{tabular}
\end{center}
To resolve the case above left, we apply IH twice, first replacing $y$ with a fresh variable $z'$, then applying the substitution $(t/x)$, and last, applying $\allr$, as show above right. Observe that rule $\allr$ can be applied since $(z',\phi(z'/z))(t/x)$ is identical with $z',(\phi(t/x))(z'/z)$ thanks to the freshness of $z'$ and the fact that $z$ is distinct from both $t$ and $x$. If, instead, $z$ is $x$ then $(z',\phi(z'/z))(t/x)$ is $z',\phi(z'/z)$. Finally, if $z$ is $t$, then we rewrite $\phi(y/z)$ as $(\phi(z''/z))(y/z'')$ for some fresh $z''$, thus ensuring that $(z',(\phi(z''/z))(z'/z''))(z/x)$ is the same as $z',((\phi(z''/z))(z/x))(z'/z'')$ and, by applying rule $\allr$, we conclude $\ns(z/x)\hol  (\forall z''\phi(z''/z))(z/x)\hor$  which is identical to  $(\ns\{\forall z\phi\hor)(t/x)$ .
\end{proof}

\begin{lemma}[Weakening Rules]\label{lem:wk-admiss}
The $\wk$, $\tw$, $\ew$, and $\nec$ rules are hp-admissible in $\nql$.
\end{lemma}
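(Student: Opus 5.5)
We prove each of $\wk$, $\tw$, $\ew$ and $\nec$ hp-admissible by induction on the height $h$ of the given proof $\prf$ of the premise. In each case we transform $\prf$ into a proof of the conclusion that has the same shape, rule by rule. For $\tw$ we first use \lem~\ref{lem:ps-admiss}: if the weakened term $t$ is a variable that occurs as an eigenvariable of some $\allr$ or $\ned$ inference in $\prf$, we rename that eigenvariable to a globally fresh one. This preserves height, so freshness conditions cannot be violated after we add $t$. The same precaution covers $\wk$ when $\phi$ contains free variables that clash with eigenvariables in $\prf$.

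\emph{Base case.} If $\ns\{\emptyset\}$ is an instance of $\ax$, then so is $\ns\{\phi\}$, $\ns\{t\}$ or $\ns\{[\emptyset]\}$, and likewise $[\ns]$: the pair $L, \negnnf{L}$ is still present in some component.

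\emph{Inductive step.} Suppose the last rule $\ru$ has premises $\ns_1,\ldots,\ns_m$. We apply IH to each premise, inserting $\phi$, $t$ or $[\emptyset]$ at the corresponding position, or wrapping the premise in a nesting for $\nec$. We then reapply $\ru$. This is legitimate because none of the side conditions is affected by the new material.

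\begin{itemize}
\item Freshness of $\allr$ and $\ned$ is preserved by the renaming done above.
\item For the reachability rules $\diar$ and $\dpr$, and for $\idrig$, the side conditions only ask for a path $w \prpath{L} u$ in $\prgr{\ns}$. Adding a formula, a term, or a new empty leaf nesting $[\emptyset]$ keeps every existing edge and every existing vertex. All previously available propagation paths therefore remain, which takes care of $\ew$.
\item For $\nec$, the tree $\tr([\ns])$ contains $\tr(\ns)$ as a subtree under a new root, so every path inside $\tr(\ns)$ survives. Every rule instance whose context $\ns'\{\}$ sits inside $\ns$ lifts to the context $[\ns'\{\}]$.
\item Rules whose principal position coincides with the inserted material cause no difficulty. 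For instance, if $\ns\{\emptyset\}$ comes from $\boxr$, the context still holds the formula, and we apply IH to the premise $\ns\{[\psi]\}$ with $\phi$ added in the same component.
\item For $\drule$ and $\ew$ interacting, adding $[\emptyset]$ elsewhere in the tree is harmless.
\end{itemize}

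In every case the height of the resulting proof is at most $h$.

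The only genuinely delicate point is the eigenvariable issue for $\tw$ and $\wk$: the new term $t$ or the free variables of $\phi$ might coincide with a variable chosen fresh by an $\allr$ or $\ned$ inference higher up in $\prf$. We resolve it by invoking the hp-admissibility of $\ps$ (\lem~\ref{lem:ps-admiss}) to rename that eigenvariable to a globally fresh one before applying IH. Height is unchanged by the renaming, so the whole transformation stays height-preserving. The remaining cases are routine.
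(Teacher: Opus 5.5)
Your proposal is correct and follows essentially the same route as the paper. Both argue by induction on the height of the given proof. Both note that $\ew$ and $\nec$ permute trivially above every rule. Both permute $\wk$ and $\tw$ above all rules after first renaming the eigenvariables of $\allr$ and $\ned$ instances via the hp-admissible $\ps$ rule (\lem~\ref{lem:ps-admiss}). You also make explicit that the reachability side conditions survive because the added material never removes vertices or edges of the propagation graph, a detail the paper leaves implicit.
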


\begin{proof} First, observe that applying any of the above rules to $\ax$ yields another instance of $\ax$, which establishes the hp-admissibility of each rule in the base case. For the $\ew$ and $\nec$ rules, the inductive step is trivial as both rules permute above all non-initial rules in $\nql$. For the $\wk$ and $\tw$ rules, the inductive step requires slightly more care. With the exception of the $\allr$ and $\ned$ rules, both $\wk$ and $\tw$ permute above all rules of $\nql$. If, however, an application of $\wk$ or $\tw$ is preceded by an application of $\allr$ or $\ned$, then one must ensure that the freshness condition of the latter rules is preserved after permuting $\wk$ or $\tw$ above $\allr$ or $\ned$. This can be accomplished by first applying the hp-admissibility of \lem~\ref{lem:ps-admiss} to substitute a fresh variable that occurs nowhere in the given proof for the fresh variable in the $\allr$ or $\ned$ instance. Afterwards, one can safely permute $\wk$ or $\tw$ above $\allr$ or $\ned$ while preserving the freshness condition.
\end{proof}

\begin{lemma}[Invertibility]\label{lem:hp-inver}
Every rule in $\nql$ is hp-invertible.
\end{lemma}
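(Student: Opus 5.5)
The rules of $\nql$ fall into two groups, and the plan treats them separately. The first group consists of the rules whose premise is the conclusion plus extra material: $\existsr$, $\diar$, $\nidref$, $\idrep$, $\drep$, $\idrig$, $\dpr$, $\cdr$, and $\drule$. Each of these is a \emph{keep-and-add} rule. Its premise is obtained from its conclusion by adding a formula, a term, or an empty nesting, and the principal data is retained. So its inverse is an instance of one of the weakening rules $\wk$, $\tw$, or $\ew$. These are hp-admissible by \lem~\ref{lem:wk-admiss}, which settles all of these cases at once.

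For $\ned$, the premise $\ns\hol y\hor$ comes from $\ns\hol\emptyset\hor$ by $\tw$ (with $y$ chosen fresh), so $\ned$ is handled the same way. The only rules left are $\disr$, $\conr$, $\allr$, and $\boxr$, whose premises genuinely decompose the principal formula.

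For these four, I will argue by induction on the height $h$ of a proof of the conclusion. The goal is to show that the premise (respectively each premise, for $\conr$) has a proof of height at most $h$.

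In the base case, the conclusion is an instance of $\ax$. The principal literals of $\ax$ cannot be the compound formula $\phi\lor\psi$, $\phi\land\psi$, $\fa x\phi$, or $\Box\phi$. So the literal pair survives in the premise, which is again an instance of $\ax$.

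In the inductive step, consider the last rule $\ru$ applied. If the formula we are inverting is principal in $\ru$, then the premise of $\ru$ is already the desired sequent, of smaller height, up to renaming of the eigenvariable in the $\allr$ case. That renaming is handled by \lem~\ref{lem:ps-admiss}, which is hp-admissible. If the formula is not principal, we apply IH to the premise(s) of $\ru$ and then reapply $\ru$. This is sound because every rule is context-preserving: the side conditions of $\diar$, $\dpr$, $\idrig$, and $\shift$-style paths depend only on the tree structure and the signatures. Replacing $\phi\lor\psi$ by $\phi,\psi$, or $\phi\land\psi$ by one conjunct, leaves both of these unchanged. Inverting $\allr$ only adds a term to a signature, and inverting $\boxr$ only adds a new leaf nesting. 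Adding a leaf can only create new propagation paths, never destroy existing ones, so every path witnessing a side condition in the original proof still exists.

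The main obstacle is the $\allr$ and $\boxr$ inverses, for two reasons.

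For $\allr^{-1}$, the inverted sequent $\ns\hol y,\phi(y/x)\hor$ must keep $y$ fresh with respect to every $\allr$ and $\ned$ application in the original proof. I will first use \lem~\ref{lem:ps-admiss} to rename all eigenvariables in the given proof away from $y$, which costs nothing in height, and then run the permutation argument.

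For $\boxr^{-1}$, there is an additional case: the $\Box\phi$ to be inverted may have been copied by rules such as $\idrep$ or $\existsr$ instances, and so may occur inside propagated material. Since $\Box\phi$ is not a literal and is never propagated by $\diar$, I expect only the routine IH case to arise. Care is still needed to ensure that the new nesting $[\phi]$ introduced by inversion is consistently named, so that any later $\diar$ or $\dpr$ step targeting the nesting created by the original $\boxr$ application still finds its target component.
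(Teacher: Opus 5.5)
Your proposal is correct and follows the paper's proof. The paper makes the same split: the rules whose premise only extends the conclusion ($\existsr$, $\diar$, $\nidref$, $\idrep$, $\idrig$, $\dpr$, $\drep$, $\drule$, $\ned$, $\cdr$) have inverses that are instances of $\wk$, $\tw$, or $\ew$ (\lem~\ref{lem:wk-admiss}), while $\disr$, $\conr$, $\allr$, and $\boxr$ are handled by the usual induction on height; your eigenvariable renaming via \lem~\ref{lem:ps-admiss} supplies a detail the paper leaves implicit.

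The extra worries in your last paragraph are harmless but unnecessary. The inverted occurrence lives in the conclusion, $\idrep$ only ever adds negative literals, and component names are arbitrary labels, so only the routine non-principal case arises.
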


\begin{proof} The hp-invertibility of $\existsr$, $\diar$, $\nidref$, $\idrep$, $\idrig$, $\dpr$, $\drep$, $\drule$, $\ned$, and $\cdr$ follow from the hp-admissibility of $\wk$, $\tw$, and $\ew$ (\lem~\ref{lem:wk-admiss} above). The other cases ($\disr$, $\conr$, $\allr$, and $\boxr$) are argued as usual by induction on the height of the given proof.
\end{proof}

\begin{lemma}[Contractions]\label{lem:ec-admiss}
The $\tc$, $\ctr$, and $\ec$ rules are hp-admissible in $\nql$.
\end{lemma}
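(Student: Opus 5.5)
We prove the three claims simultaneously by induction on the height $h$ of the given proof of the premise. The dependency diagram in \fig~\ref{fig:admiss-invert-dependencies} indicates the order: $\tc$ uses only IH, while $\ctr$ and $\ec$ each use the other as well as the hp-invertibility of \lem~\ref{lem:hp-inver}. A plain induction on height with all three claims in the IH suffices, since every appeal to one claim inside another is at strictly smaller height.

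\textbf{Base case.} If the premise is an instance of $\ax$, the conclusion of $\tc$, $\ctr$ or $\ec$ is again an instance of $\ax$. For $\ctr$, note that if the principal pair is $\phi,\phi$ with $\phi$ a literal, then $\negnnf{\phi}$ still appears alongside a copy of $\phi$. For $\ec$, merging two nestings keeps a complementary pair of literals in the same component.

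\textbf{Inductive step for $\tc$.} If the last rule does not have an occurrence of $t$ as active, we apply IH to the premise(s) and reapply the rule. If $t$ is active (as in $\existsr$, $\dpr$, $\drep$), one copy of $t$ persists in the conclusion, so the rule can be reapplied after contracting by IH.

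\textbf{Inductive step for $\ctr$.} If $\phi$ is not principal, we permute. If $\phi$ is principal, we use hp-invertibility:
\begin{itemize}
\item For $\disr$, we apply $\disr^{-1}$ to the other copy and then IH twice.
\item For $\conr$ and $\boxr$, we do likewise. For $\boxr$, inversion yields two nestings $[\psi],[\psi]$; we merge them with IH for $\ec$ and then contract $\psi,\psi$ by IH for $\ctr$.
\item For $\allr$, inversion on the second copy introduces a new variable $y'$. We rename $y'$ to $y$ with \lem~\ref{lem:ps-admiss} (which yields $y,y$) and apply IH for $\tc$ and $\ctr$.
\item For $\existsr$, $\diar$, $\idrep$ and $\idrig$, the principal formula is kept in the premise, so IH applies directly.
\end{itemize}

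\textbf{Inductive step for $\ec$.} Most rules permute. The delicate cases are the reachability rules $\diar$, $\dpr$ and $\idrig$, because merging two nestings changes the propagation graph. We expect the main obstacle to be showing that paths survive the merge. Merging the components $u$ and $u'$ (children of the same $w$) into one component corresponds to a graph homomorphism that identifies $u$ and $u'$. Every edge $(a,\ques,b)$ maps to an edge, so $a\prpath{\stra}b$ implies $h(a)\prpath{\stra}h(b)$ for the same string $\stra$, and the side conditions (membership of $\stra$ in the relevant language) remain satisfied. The freshness conditions of $\allr$ and $\ned$ are handled as in \lem~\ref{lem:wk-admiss}, by first renaming via \lem~\ref{lem:ps-admiss}. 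Since the principal structure of $\ec$ is a pair of nestings, no principal case arises except $\drule$. If $\drule$ created one of the two empty nestings, say $[\emptyset]$, we drop that step and merge trivially using IH.
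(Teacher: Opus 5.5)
Your treatment of $\tc$ and your principal cases for $\ctr$ are exactly the paper's: inversion, then $\mathsf{s}(y/z)$ and $\tc$ for $\allr$; inversion, then $\ec$, then $\ctr$ for $\boxr$. The $\ec$ step, however, contains a step that fails.

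Your homomorphism argument shows only that the \emph{path} part of the side conditions survives the merge. But $\dpr$ and $\idrig$ also require source and target to be distinct components ($w\neq u$ in $\dag_{3}(\fcons)$ and $\dag_{4}$), and identifying $u$ with $u'$ can destroy exactly this condition. Concretely, let the premise $\ns\{[s\neq t,\Gamma],[\Delta]\}$ of $\ec$ come by $\idrig$ from $\ns\{[s\neq t,\Gamma],[s\neq t,\Delta]\}$, with the rule acting between precisely the two roots that $\ec$ merges. The induction hypothesis for $\ec$ gives $\ns\{[s\neq t,s\neq t,\Gamma,\Delta]\}$. Now $\idrig$ cannot be reapplied, because source and target would be one and the same component. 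Such an application would amount to a contraction, which is exactly what $\dag_{4}$ forbids. The same happens for $\dpr$ moving a term $t$ between the two merged roots. This is possible, e.g., when $\cid,\cdd\in\fcons$, since then $u\prpath{\bd\fd}u'$ satisfies $\dag_{3}(\fcons)$. So ``reapply the rule'' is not available in these cases.

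The repair is to contract instead of reapplying. Apply the induction hypothesis for $\ctr$ (resp.\ $\tc$) to the merged premise, which has height at most $h-1$, to obtain $\ns\{[s\neq t,\Gamma,\Delta]\}$ (resp.\ $\ns\{[t,\Gamma,\Delta]\}$). This is precisely where $\ec$ depends on $\ctr$ and $\tc$. It is the half of the mutual dependency you announce at the outset but never actually use. Your simultaneous induction already supplies it, so once these cases are added the argument goes through.

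Two smaller points:
\begin{itemize}
\item $\drule$ is not a principal case for $\ec$ at all. The nesting it introduces does not occur in its conclusion, so it simply permutes.
\item Your claim that IH ``applies directly'' for $\idrep$ fails in the degenerate instance $N:=z\neq s$. There $N(t/z)$ is a second copy of $t\neq s$, and the two contracted copies are exactly the two active formulae, so $\idrep$ cannot be reapplied after contraction. Instead, contract the premise $\ns\{t\neq s,t\neq s,s\neq s\}$ by IH to $\ns\{t\neq s,s\neq s\}$ and close with $\nidref$.
\end{itemize}
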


\begin{proof} It is trivial to show that $\tc$ is hp-admissible since any application of $\tc$ to $\ax$ yields another instance of $\ax$, and $\tc$ permutes above every rule in $\nql$. Proving the hp-admissibility of $\ctr$ and $\ec$ requires more work and is proven by simultaneous induction on the height of the given proof. The base cases are simple since any application of $\ctr$ or $\ec$ to $\ax$ gives another instance of $\ax$, and so, we focus on the inductive step. For the inductive step, we show the $\allr$ and $\boxr$ cases and note that the other cases are simpler or similar.

$\allr$. We consider the case where an application of $\ctr$ is preceded by an application of $\allr$ and where the principal formula of $\allr$ is auxiliary in $\ctr$. This case is shown below left and is resolved as shown below right. We first invoke the hp-invertibility of $\allr$ with respect to some fresh  $z$ (\lem~\ref{lem:hp-inver}), then apply the hp-admissibility of $\mathsf{s}(y/z)$ and $\tc$, next we apply IH relative to $\ctr$, and last apply the $\allr$ rule.
\begin{center}
\begin{tabular}{c @{\hskip 1cm} c @{\hskip 1cm} c}
\AxiomC{$\dotprf{\prf}$}
\UnaryInfC{$\ns \hol y, \phi(y/x), \fa x\phi \hor$}
\RightLabel{$\allr$}
\UnaryInfC{$\ns \hol \fa x\phi, \fa x\phi \hor$}
\RightLabel{$\ctr$}
\UnaryInfC{$\ns \hol \fa x\phi \hor$}
\DisplayProof

&$\leadsto$&

\AxiomC{$\dotprf{\prf}$}
\UnaryInfC{$\ns \hol y, \phi(y/x), \fa x\phi \hor$}
\RightLabel{\lem~\ref{lem:hp-inver}} 
\UnaryInfC{$\ns \hol y, z, \phi(y/x), \phi(z/x) \hor$}
\RightLabel{$\mathsf{s}(y/z)$}
\UnaryInfC{$\ns \hol y, y, \phi(y/x), \phi(y/x) \hor$}
\RightLabel{$\tc$}
\UnaryInfC{$\ns \hol y, \phi(y/x), \phi(y/x) \hor$}
\RightLabel{IH}
\UnaryInfC{$\ns \hol y, \phi(y/x) \hor$}
\RightLabel{$\allr$}
\UnaryInfC{$\ns \hol \fa x\phi \hor$}
\DisplayProof
\end{tabular}
\end{center}

$\boxr$. We consider the case where an application of $\ctr$ is preceded by an application of $\boxr$ and where the principal formula of $\boxr$ is auxiliary in $\ctr$. This case is shown below left and is resolved as shown below right. We first invoke the hp-invertibility of $\boxr$ (\lem~\ref{lem:hp-inver}), then apply IH relative to $\ec$, apply IH relative to $\ctr$, and last apply the $\boxr$ rule.
\begin{center}
\begin{tabular}{c @{\hskip 1cm} c @{\hskip 1cm} c}
\AxiomC{$\dotprf{\prf}$}
\UnaryInfC{$\ns \hol \Box \phi, [\phi] \hor$}
\RightLabel{$\boxr$}
\UnaryInfC{$\ns \hol \Box \phi, \Box \phi \hor$}
\RightLabel{$\ctr$}
\UnaryInfC{$\ns \hol \Box \phi \hor$}
\DisplayProof

&$\leadsto$&

\AxiomC{$\dotprf{\prf}$}
\UnaryInfC{$\ns \hol \Box \phi, [\phi] \hor$}
\RightLabel{\lem~\ref{lem:hp-inver}} 
\UnaryInfC{$\ns \hol [\phi], [\phi] \hor$}
\RightLabel{IH}
\UnaryInfC{$\ns \hol [\phi,\phi] \hor$}
\RightLabel{IH}
\UnaryInfC{$\ns \hol [\phi] \hor$}
\RightLabel{$\boxr$}
\UnaryInfC{$\ns \hol \Box \phi \hor$}
\DisplayProof
\end{tabular}
\end{center}
Note that this case demonstrates why $\ctr$ and $\ec$ must be proven hp-admissible \emph{simultaneously} as the hp-admissibility of $\ctr$ in the case depends on that of $\ec$.
\end{proof}

\begin{lemma}[Shift Rule]\label{lem:shift-admiss}
The $\shift$ rule is hp-admissible in $\nql$.
\end{lemma}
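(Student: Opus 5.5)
We argue by induction on the height $h$ of the given proof of the premise $\ns \hol \nant, [\nsii] \hor_{w} \hol \ncon \hor_{v}$, where $w \prpath{L} v$ with $L = L_{\g(\gpset)}(\fd)$ holds in the premise. The guiding observation is that moving the nesting $[\nsii]$ from $w$ into $v$ can only add propagation paths, never destroy the ones the proof needs. Write $u$ for the name of the root of $\nsii$. In the premise $w \prpath{\fd} u$; in the conclusion $u$'s component is merged into $v$. The operation therefore identifies $u$ with $v$, and an edge $w \prpath{\fd} u$ is replaced by the path $w \prpath{\stra} v$ for some $\stra \in L$.

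\textbf{Base case and easy steps.} If the premise is an instance of $\ax$, the conclusion is too, since literals are only moved between components, and a literal pair $L,\negnnf{L}$ in $u$ ends up together in $v$. In the inductive step, suppose the last rule $\ru$ is not a reachability rule or $\idrig$, i.e.\ one of $\disr,\conr,\existsr,\allr,\boxr,\nidref,\idrep,\drep,\ned,\cdr,\drule$. Then $\ru$ acts locally within one component, or creates a fresh nesting ($\boxr$, $\drule$). For $\allr$ and $\ned$ we first rename the eigenvariable with \lem~\ref{lem:ps-admiss}. We then apply IH to the premise(s) and reapply $\ru$. The needed path $w \prpath{L} v$ survives in the premises, because rules only add structure bottom-up. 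One subtlety: if $\ru$ acts inside $\nsii$'s root or inside $v$, its principal material simply sits in the merged $v$-component afterwards.

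\textbf{Path-sensitive rules.} These are $\diar$, $\dpr$, and $\idrig$. The key auxiliary claim is a path-transfer property. Let $\ns_1$ be the premise and $\ns_2$ its shifted version, and let $f$ map names of $\ns_1$ to names of $\ns_2$, sending $u \mapsto v$ and fixing all other names. If $\prgr{\ns_1} \models x \prpath{\strb} y$, then $\prgr{\ns_2} \models f(x) \prpath{\strb'} f(y)$ for some $\strb'$ with $\strb \dr \strb'$ in $\g(\gpset)$. We prove this by replacing every occurrence of the edge $w \prpath{\fd} u$ by $\stra$. By Lemma~\ref{lem:S4-with-S(G)-is-S4-or-S5}(1) and Lemma~\ref{lem:reversing-paths}, every occurrence of the converse edge $u \prpath{\bd} w$ is replaced by $\conv{\stra} \in L_{\g(\gpset)}(\bd)$. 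Since languages are closed under derivation, a side condition $x \prpath{L'} y$ with $L' = L_{\g(\gpset)}(\ques)$ transfers to $f(x) \prpath{L'} f(y)$. For $\dpr$ the language is generated by $\sfour\cup\g(\gpset)$, or by $\sfive$; these contain $\g(\gpset)$, so transfer works there as well. For $\idrig$ no path condition is needed, only $f(x)\neq f(y)$. If $f(x) = f(y)$, i.e.\ the rule propagated between $u$ and $v$, the premise's propagated formula is already in the merged component, and we just use IH together with the hp-admissibility of $\ctr$ and $\tc$ (\lem~\ref{lem:ec-admiss}). In that degenerate case the rule application is dropped, which only lowers the height.

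The main obstacle is exactly this path-transfer step: we must check carefully that replacing an edge by an $L$-path yields a string still in the appropriate language. This needs a small lemma that if $\fd \dr \stra$ then $\strb_1\fd\strb_2 \dr \strb_1\stra\strb_2$, plus the converse version for $\bd$. Both follow from the definition of one-step derivation and Lemma~\ref{lem:S4-with-S(G)-is-S4-or-S5}(1).
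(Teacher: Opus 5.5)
Your strategy is the paper's own: induction on height, permuting $\shift$ above the last rule, and restoring the side conditions of $\diar$ and $\dpr$ by splicing the shift's witness string in place of the step into $\nsii$. Your edge-substitution map $f$ is a cleaner and more general form of the paper's splice, which only handles a single crossing into $\nsii$. The gap is the step where you replace the edge $w \prpath{\fd} u$ by a walk $w \prpath{\stra} v$ \emph{in the shifted sequent}. You assume $w \prpath{L} v$ only in the premise. The witnessing walk may itself pass through $\nsii$, i.e.\ through the very edge being removed, so it need not exist after the shift, and transferring it with $f$ would be circular.

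This is not a technicality: under the premise reading the rule is not admissible. Let $\fcons = \set{\gpc(0,2), \gpc(2,0)}$, so $\fd \pto \fd\fd$ and $\fd \pto \bd\bd$ belong to $\g(\gpset)$. Consider the instance with premise $\ngn{q}, [\,[\Diamond q]\,]$ (components $v$, $w$, $u$ from the outside in) and conclusion $\ngn{q}, \Diamond q, [\emptyset]$.
\begin{itemize}
\item The side condition holds in the premise via the walk $w \prpath{\fd\bd\bd} v$, which dips into $u$, since $\fd \pto \fd\fd \pto \fd\bd\bd$.
\item The premise is provable by one $\diar$ along $u \prpath{\bd\bd} v$ (as $\fd \pto \bd\bd$), followed by $\ax$.
\item The conclusion is falsified on the two-world frame $v \R w$ with $q$ true only at $v$. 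This frame has no $\R$-chains of length $2$, so it lies in $\fclassc$; closing $\fcons$ under consequence, as in Remark~\ref{rmk:axs-closed-under-consequence}, only enlarges $L$ and leaves $\fclassc$ unchanged.
\end{itemize}
By soundness the conclusion is unprovable.

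The repair is to evaluate $\dag(\fcons)$ in the conclusion $\ns \hol \nant \hor_{w} \hol \ncon, \nsii \hor_{v}$; it suffices that it hold in $\ns \hol \nant \hor_{w} \hol \ncon \hor_{v}$. This is how $\shift$ is used in the $\boxr$/$\diar$ case of cut-elimination, where the path comes from a $\diar$ side condition in a sequent not containing $[\psi]$. It is also what the paper's $\diar$ case tacitly assumes when it asserts that the spliced path exists in the shift's conclusion. With this reading, the witness walk lives in the shifted sequent and your map $f$ is well defined without circularity. The condition is inherited by the shifted premises of the last rule, because bottom-up rule applications only add structure, and the rest of your argument goes through.

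One further caution concerns the degenerate case $f(x) = f(y)$. Dropping the rule and contracting is correct for $\dpr$ and $\idrig$, where the propagated item is a copy of the source item. It is not correct for $\diar$, where the propagated $\phi$ is not a copy of $\Diamond \phi$. There you must instead reapply $\diar$ along the transferred loop $v \prpath{\strb'} v$, which is allowed since $\diar$ imposes no distinctness condition.
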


\begin{proof} Once again, we proceed by induction on the height of the proof of the premise. If the $\shift$ rule is applied to an initial rule $\ax$  then the result is another instance of $\ax$, showing that the base case of induction goes through. For the inductive step, we consider one instance of the $\diar$ case and one instance of the $\dpr$ case. The remaining cases are analogous or straightforward.

$\diar$. Let us consider an application of $\shift$ preceded by an application of $\diar$, as shown below left. Let $L=L_{\thuesys(\gpset)}(\fd)$. The side condition on the $\diar$ rule ensures that $w \prpath{L} k$ in $\mathcal{G}_0$ and the side condition on rule $\shift$ ensures that $u \prpath{L} v$ in $\mathcal{G}_0$. We have to show that it is possible to permute the instance of $\shift$ above that of $\diar$ without impairing its side condition, thus giving the proof shown below right.
\begin{center}
\begin{tabular}{c  c  c}
\AxiomC{$\dotprf{\prf}$}
\UnaryInfC{$\ns_{0} \hol \dia \phi \hor_{w} \hol  \phi   \hor_{k} \hol \Gamma, [\nsii] \hor_{u} \hol \Delta \hor_{v}$}
\RightLabel{$\diar$}
\UnaryInfC{$\ns_{0} \hol \dia \phi \hor_{w} \hol   \emptyset  \hor_{k} \hol \Gamma, [\nsii] \hor_{u} \hol \Delta \hor_{v}$}
\RightLabel{$\shift$}
\UnaryInfC{$\ns_{1}\hol \dia \phi \hor_{w} \hol  \emptyset  \hor_{k} \hol \Gamma\hor_{u} \hol \Delta, \nsii\hor_{v}$}
\DisplayProof

&$\leadsto$&

\AxiomC{$\dotprf{\prf}$}
\UnaryInfC{$\ns_{0} \hol \dia \phi \hor_{w} \hol  \phi   \hor_{k} \hol \Gamma, [\nsii] \hor_{u} \hol \Delta \hor_{v}$}
\RightLabel{IH}
\UnaryInfC{$\ns_{1}\hol \dia \phi \hor_{w} \hol  \phi  \hor_{k} \hol \Gamma\hor_{u} \hol \Delta, \nsii\hor_{v}$}
\RightLabel{$\diar$}
\UnaryInfC{$\ns_{1}\hol \dia \phi \hor_{w} \hol  \emptyset  \hor_{k} \hol \Gamma\hor_{u} \hol \Delta, \nsii\hor_{v}$}
\DisplayProof
\end{tabular}
\end{center}
If the $w$-component and the $k$-component both occur in $\nsii$ or neither occurs in $\nsii$, then $\shift$ can be freely permuted above $\diar$. Let us suppose then that the $k$-component occurs in $\nsii$ and the $w$-component does not (the other case is similar). Let $u'$ be the name of the component that serves as the root of $\nsii$. It follows that the propagation path from $w$ to $k$ in $\ns_{0}$ is of the form $w \prpath{\stra} u \prpath{\fd} u' \prpath{\strb} k$ with $\stra \fd \strb \in L$. By the side condition of the $\shift$ rule, we know there exists a string $\strc \in L$ such that $u \prpath{\strc} v$. As $L  = L_{\g(\gpset)}(\fd)$, this implies that $\fd \longrightarrow^{*}_{\g(\gpset)} \strc$, which further implies that $\stra \strc \strb \in L$. This string corresponds to the propagation path $w \prpath{\stra} u \prpath{\strc} v \prpath{\strb} k$ in $\ns_{1}$, showing that after $\shift$ has been applied the side condition on $\diar$ still holds, and thus, the two rules may indeed be permuted as shown above right.

$\dpr$. Next, suppose we have an instance of $\dpr$ followed by an instance of $\shift$, as shown below left. We consider a non-trivial case and suppose (1) $\cid \in \fcons$, but $\cdd \not\in \fcons$, and (2) the $k$-component exists in $\nsii$. Note that the side condition on $\dpr$ states that $w \prpath{L} k$ with $L := L_{\sfour \cup \thuesys(\gpset)}(\fd)$. 
\begin{center}
\begin{tabular}{c @{\hskip 1cm} c}
\AxiomC{$\dotprf{\prf}$}
\UnaryInfC{$\ns_{0} \hol t \hor_{w} \hol  t \hor_{k} \hol \Gamma, [\nsii] \hor_{u} \hol \Delta \hor_{v}$}
\RightLabel{$\dpr$}
\UnaryInfC{$\ns_{0} \hol t \hor_{w} \hol  \emptyset \hor_{k} \hol \Gamma, [\nsii] \hor_{u} \hol \Delta \hor_{v}$}
\RightLabel{$\shift$}
\UnaryInfC{$\ns_{1} \hol t \hor_{w} \hol  \emptyset \hor_{k} \hol \Gamma \hor_{u} \hol \Delta, \nsii \hor_{v}$}
\DisplayProof

&

\AxiomC{$\dotprf{\prf}$}
\UnaryInfC{$\ns_{0} \hol t \hor_{w} \hol  t \hor_{k} \hol \Gamma, [\nsii] \hor_{u} \hol \Delta \hor_{v}$}
\RightLabel{$\shift$}
\UnaryInfC{$\ns_{1} \hol t \hor_{w} \hol t \hor_{k} \hol \Gamma \hor_{u} \hol \Delta, \nsii \hor_{v}$}
\RightLabel{$\dpr$}
\UnaryInfC{$\ns_{1} \hol t \hor_{w} \hol  \emptyset \hor_{k} \hol \Gamma \hor_{u} \hol \Delta, \nsii \hor_{v}$}
\DisplayProof
\end{tabular}
\end{center}
By \lem~\ref{lem:S4-with-S(G)-is-S4-or-S5}, we know that $L = L_{\sfour}(\fd)$ or $L = L_{\sfive}(\fd)$. By our assumption and Remark~\ref{rmk:axs-closed-under-consequence}, we need only consider the case when $L = L_{\sfour}(\fd)$. 

If the $w$-component exists in $\nsii$, then the two rules are clearly permutable, so let us suppose that the $w$-component does not occur in $\nsii$. All of this implies that $w \prpath{L} u$ and $u \prpath{L} k$ in $\ns_{0}$. As the shift rule $\shift$ was applied, we know that $\prgr{\ns_{0}} \models u \prpath{L'} v$ with $L' = L_{\g(\gpset)}(\fd)$ by its side condition. Since $L' \subseteq L$, we have that $u \prpath{L} v$ holds in $\ns_{0}$ as well. Let us now permute the two rules and argue that the side condition of $\dpr$ still holds, i.e., that $\prgr{\mathcal{G}_1} \models w \prpath{L} k$.

Since $w \prpath{L} u$ holds in $\ns_{0}$, $w \prpath{L} u$ holds in $\ns_{1}$ as well, and the paths $u \prpath{L} v$ and $v \prpath{L} k$ hold in $\ns_{1}$ by what was said above. (NB. $\prgr{\ns_{1}} \models v \prpath{L} k$ follows from the fact that $\prgr{\ns_{0}} \models u \prpath{L} k$ and $L = L_{\sfour}(\fd)$.) Observe that $\fd \fd \fd \in L$ and that each string associated with a path $w \prpath{L} u$, $u \prpath{L} v$, or $v \prpath{L} k$ is derivable from $\fd$ in $L$. Hence, we have that $w \prpath{L} k$ in $\ns_{1}$, which shows that $\dpr$ may in fact be applied in the proof above right.
\end{proof}

In Remark~\ref{rmk:constant-outer}, we explained that our use of constant outer domains in models of QMLs stemmed from the admissibility of $\ebr$ in each nested calculus. 
We end this subsection by proving this fact. In the proof, one can see that $\ebr$ is essentially an instance of the $\allr$ rule used in our nested systems. This observation 
signifies that if one wishes to construct a nested sequent calculus for a QML characterized by models with \emph{non-constant} outer domains, then one will most likely need to use an alternative rule for the universal quantifier. This suggests that the current nested sequent framework may need to be generalized if one wishes to supply nested sequent calculi for a wider class of QMLs (e.g., various QMLs discussed in Corsi~\cite{Cor02}).

\begin{theorem}[Extended Barcan Rule]\label{theoremEBR}
$\ebr$ is admissible in $\nql$.
\end{theorem}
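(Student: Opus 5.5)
We must show: if $\nql \vdash \phi_{0} \imp \Box (\phi_{1} \imp \cdots \imp \Box (\phi_{n} \imp \Box\phi_{n+1})\ldots)$, then $\nql \vdash \phi_{0} \imp \Box (\phi_{1} \imp \cdots \imp \Box (\phi_{n} \imp \Box\forall x \phi_{n+1})\ldots)$, where $x \notin \fv{\phi_{0}, \ldots, \phi_{n}}$. Since the language is in negation normal form, $\psi \imp \chi$ abbreviates $\negnnf{\psi} \lor \chi$. So the premise is the one-formula nested sequent $\negnnf{\phi_{0}} \lor \Box(\negnnf{\phi_{1}} \lor \cdots \lor \Box(\negnnf{\phi_{n}} \lor \Box \phi_{n+1})\ldots)$.

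The plan is to decompose this premise bottom-up using the hp-invertibility of $\disr$ and $\boxr$ (\lem~\ref{lem:hp-inver}). Applying the inverses alternately, $\disr^{-1}$ then $\boxr^{-1}$, $n+1$ times gives a proof of the nested sequent
$$\nsii(\phi_{n+1}) := \negnnf{\phi_{0}}, [\negnnf{\phi_{1}}, [\cdots [\negnnf{\phi_{n}}, [\phi_{n+1}]]\cdots]].$$
Next, let $y$ be a variable fresh for everything in sight. Apply the hp-admissible substitution $\mathsf{s}(y/x)$ (\lem~\ref{lem:ps-admiss}). Because $x$ is not free in $\negnnf{\phi_{0}}, \ldots, \negnnf{\phi_{n}}$, this changes only the innermost component, yielding a proof of $\nsii(\phi_{n+1}(y/x))$. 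Then apply term weakening $\tw$ (\lem~\ref{lem:wk-admiss}) to add $y$ to the signature of the innermost component, obtaining $\nsii(y, \phi_{n+1}(y/x))$.

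Now apply $\allr$ in the innermost component to get $\nsii(\forall x \phi_{n+1})$. Its freshness side condition holds because $y$ was chosen fresh and, crucially, $x$ (and hence $y$) does not occur free in the $\phi_{i}$ for $i \le n$. Finally, reassemble the formula by alternately applying $\boxr$ and $\disr$ from the inside out, which yields a proof of the conclusion of $\brn{n+1}$. This realizes the observation of \rmk~\ref{rmk:all-rule-subsumes-EBR}: $\ebr$ is $\allr$ applied deep inside a nesting.

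The main point needing care is the freshness bookkeeping. We must check that $y$ does not occur in the context, which is guaranteed by choosing $y$ globally fresh and by the hypothesis on $\fv{\phi_{0},\ldots,\phi_{n}}$. We must also check that the substitution via $\ps$ touches nothing outside the innermost nesting. Everything else is a direct invocation of the hp-invertibility and hp-admissibility lemmas, so no induction is needed at this stage.
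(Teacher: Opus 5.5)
Your proposal is correct and is essentially the paper's proof: you invert $\disr$ and $\boxr$ via Lemma~\ref{lem:hp-inver}, add the eigenvariable to the innermost nesting with $\tw$, apply $\allr$, and rebuild with $\boxr$ and $\disr$. The one difference is your preliminary renaming via $\mathsf{s}(y/x)$, which is harmless but not needed: the paper uses $x$ itself as the eigenvariable, and $x$ is already fresh for the conclusion because $x \notin \fv{\phi_{0}, \ldots, \phi_{n}}$ and $x$ is bound in $\forall x \phi_{n+1}$.
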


\begin{proof} To show the admissibility of $\ebr$, we need to show that $\brn{n+1}$ is admissible in $\nql$ for each $n \in \mathbb{N}$. Let $n \in \mathbb{N}$ and suppose that
$$
\phi_{0} \imp \Box (\phi_{1} \imp \cdots \imp \Box (\phi_{n} \imp \Box \phi_{n+1})\ldots)
$$
has a proof $\prf$ such that $x \not\in \fv{\phi_{0}, \ldots, \phi_{n}}$, which gives the top nested sequent in the proof below.
\begin{center}
\AxiomC{$\dotprf{\prf}$}
\UnaryInfC{$\phi_{0} \imp \Box (\phi_{1} \imp \cdots \imp \Box (\phi_{n} \imp \Box\phi_{n+1})\ldots)$}
\RightLabel{=}
\UnaryInfC{$\negnnf{\phi}_{0} \lor \Box (\negnnf{\phi}_{1} \lor \cdots \lor \Box (\negnnf{\phi}_{n} \lor \Box\phi_{n+1})\ldots)$}
\RightLabel{\lem~\ref{lem:hp-inver}}
\UnaryInfC{$\negnnf{\phi}_{0}, [\negnnf{\phi}_{1}, \ldots, [\negnnf{\phi}_{n}, [\phi_{n+1}]]\ldots]$}
\RightLabel{$\tw$}
\UnaryInfC{$\negnnf{\phi}_{0}, [\negnnf{\phi}_{1}, \ldots, [\negnnf{\phi}_{n}, [x,\phi_{n+1}]]\ldots]$}
\RightLabel{$\allr$}
\UnaryInfC{$\negnnf{\phi}_{0}, [\negnnf{\phi}_{1}, \ldots, [\negnnf{\phi}_{n}, [\forall x \phi_{n+1}]]\ldots]$}
\RightLabel{$\boxr$, $\disr$}
\UnaryInfC{$\negnnf{\phi}_{0} \lor \Box (\negnnf{\phi}_{1} \lor \cdots \lor \Box (\negnnf{\phi}_{n} \lor \Box\forall x \phi_{n+1})\ldots)$}
\RightLabel{=}
\UnaryInfC{$\phi_{0} \imp \Box (\phi_{1} \imp \cdots \imp \Box (\phi_{n} \imp \Box\forall x 
\phi_{n+1})\ldots)$}
\DisplayProof
\end{center}
To complete our proof above, we first recall that $\imp$ is defined in terms of negation and disjunction, apply the hp-invertibility of $\boxr$ and $\disr$ (\lem~\ref{lem:hp-inver}), apply $\tw$ followed by the $\allr$ rule, and then apply a sufficient number of $\boxr$ and $\disr$ rule applications to obtain the conclusion of $\brn{n+1}$.
\end{proof}

\subsection{Syntactic Cut-Elimination}\label{subsec:cut-elim}

Despite the seminal works on nested sequent calculi by Kashima~\cite{Kas94} and Bull~\cite{Bul92}, the first direct proofs of syntactic cut-elimination for nested sequent systems (in the context of propositional modal logics) were given by Br\"unnler~\cite{Bru09} and Poggiolesi~\cite{Pog09}. In the first-order setting, the proof of cut-elimination is substantially more complicated because we must keep track of terms, propagation paths, and ensure that side conditions of rules continue to hold after upward permutations of cuts (shown in \thm~\ref{thm:cut-elim} below). Nevertheless, a favorable feature of our cut-elimination proof is that it is \emph{uniform} over the class of logics we consider. That is, our proof does not require the introduction of \emph{ad hoc} structural rules to deal with special cut-elimination cases; cf.~\cite{Bru09,LyoOrl23}. This is primarily due to our use of the hp-admissible shift rule $\shift$, which succinctly unifies reasoning with all generalized path conditions in a single rule.

\begin{theorem}\label{thm:cut-elim}
The following $\cut$ 
rule is admissible in $\nql$.
\smallskip
\begin{center}
\AxiomC{$\ns \hol \phi \hor$}
\AxiomC{$\ns \hol \negnnf{\phi} \hor$}
\RightLabel{$\cut$}
\BinaryInfC{$\ns \hol \emptyset \hor$}
\DisplayProof


\end{center}
\end{theorem}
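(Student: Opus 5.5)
We prove admissibility of $\cut$ by a primary induction on the complexity (length) of the cut formula $\phi$ and a secondary induction on the sum $h_1+h_2$ of the heights of the proofs $\prf_1$ of $\ns\hol\phi\hor$ and $\prf_2$ of $\ns\hol\negnnf{\phi}\hor$. It suffices to eliminate a topmost cut, i.e.\ one whose premises are cut-free. We use the hp-admissible rules of \fig~\ref{fig:admiss-rules} and the hp-invertibility of Lemma~\ref{lem:hp-inver} freely.

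\textbf{Base cases.} Suppose one premise is an instance of $\ax$. If $\phi$ is not principal in that axiom, the conclusion $\ns\hol\emptyset\hor$ is itself an instance of $\ax$. If $\phi$ is a literal $L$ principal in $\ax$, then $\ns\hol\emptyset\hor$ contains $\negnnf{L}$, and the other premise is $\ns\hol\negnnf{L}\hor$. Hence $\ctr$ (Lemma~\ref{lem:ec-admiss}) applied to that premise yields $\ns\hol\emptyset\hor$.

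\textbf{Non-principal cases.} Suppose the last rule $\ru$ of one premise, say $\prf_1$, does not have $\phi$ principal. We permute the cut upward. Apply the hp-invertibility of $\ru$, or weakening ($\wk$, $\tw$, $\ew$) when $\ru$ adds only terms, nestings or formulae, to $\prf_2$; this produces matching contexts for each premise of $\ru$. We then cut with a smaller height sum and reapply $\ru$. For $\allr$ and $\ned$ we first rename the eigenvariable with $\ps$ (Lemma~\ref{lem:ps-admiss}) so that it is fresh for $\prf_2$. The reachability rules $\diar$, $\dpr$, $\idrig$ need care: their side conditions depend only on the tree structure of the sequent, and cutting a formula does not alter that structure. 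So the side condition holds verbatim after permutation. Similarly, rules such as $\idrep$, $\drep$, $\existsr$ whose conclusion contains $\phi$ in a non-principal way just carry $\phi$ along.

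\textbf{Principal cases.} Here $\phi$ is principal in both premises, and we reduce to cuts on strictly smaller formulae.
\begin{itemize}
\item $\lor/\land$: routine, using invertibility and $\ctr$.
\item $\forall x\psi$ versus $\exists x\negnnf{\psi}$: the left premise comes from $\ns\hol y,\psi(y/x)\hor$, and the right from $\ns\hol t,\exists x\negnnf{\psi},\negnnf{\psi}(t/x)\hor$ with $t$ in the signature of that component. First cut $\exists x\negnnf{\psi}$ away from the right premise against $\ns\hol t,\forall x\psi\hor$, obtained by $\tw$; this cut has the same formula but smaller height sum. Then apply $\mathsf{s}(t/y)$ to the left premise to get $\ns\hol t,t,\psi(t/x)\hor$, use $\tc$, and cut on $\psi(t/x)$, which is a smaller formula.
\item $\Box\psi$ versus $\dia\negnnf{\psi}$: this is the expected main obstacle. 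The left premise, after inverting $\boxr$, gives $\ns\hol[\psi]_u\hor_v$ with fresh $u$ a child of $v$. The right premise ends with $\diar$ propagating $\negnnf{\psi}$ from the $v$-component to some component $k$ with $v\prpath{L}k$, where $L=L_{\g(\gpset)}(\fd)$. First eliminate $\dia\negnnf{\psi}$ by a cut of lower height against the weakened $\ns\hol\Box\psi,\negnnf{\psi}\hor$, which gives $\ns\hol\negnnf{\psi}\hor_k$. To cut against $\psi$ we need $\psi$ in the $k$-component. This is exactly what $\shift$ provides: applying $\shift$ (Lemma~\ref{lem:shift-admiss}) to $\ns\hol[\psi]\hor_v\hol\emptyset\hor_k$, justified by $v\prpath{L}k$, moves the nesting's contents into $k$ and yields $\ns\hol\psi\hor_k$. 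If $k$ is $v$'s child, it is merged using $\ec$. Then cut on $\psi$, which is smaller.
\end{itemize}
The delicate points are that $\shift$'s side condition uses the same language $L_{\g(\gpset)}(\fd)$ as $\diar$, and that the nesting $[\psi]$ is empty apart from $\psi$, so no stray terms are moved. The identity literals are handled within the literal base case.
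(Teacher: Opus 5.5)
Your proposal is correct and follows the paper's proof essentially step for step. It uses the same lexicographic induction on cut-formula length and height sum, the same upward permutation of non-principal cuts via hp-invertibility and weakening, and the same principal reductions: $\mathsf{s}(t/y)$ followed by $\tc$ for $\forall$ versus $\exists$, and $\wk$, a lower-height cut and $\shift$ for $\Box$ versus $\dia$. Two small remarks. Your handling of the axiom case by a single application of $\ctr$ is a simplification of the paper's subcase split. In the $\forall$ versus $\exists$ case, the left premise must be weakened by the formula $\negnnf{\psi}(t/x)$ using $\wk$, not by $\tw$, since $t$ already lies in the shared context.
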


\begin{proof} 
We prove the result by simultaneous induction on the lexicographic ordering of pairs of the form $(\ell(\phi),h_{1}+h_{2})$, where $\ell(\phi)$ is the length of the cut formula $\phi$, and $h_{1}$ and $h_2$ are the heights of the proofs of the left and right premises of $\cuts$, respectively. We assume w.l.o.g. that $\cut$ 
occurs only once in the given proof and is the last inference; the general result follows by successively applying the cut elimination procedure to topmost instances of $\cut$ 
in a given proof. Note that $\cut$ 
can always be permuted above applications of $\drule$ by utilizing the hp-admissibility of $\ew$, and thus, we may omit consideration of $\drule$ below as all such cases hold. 

We organize the proof into three exhaustive cases: (1) at least one premise of $\cut$ 
is an instance of $\ax$, (2) the cut formula is not principal in at least one premise, and (3) the cut formula is principal in both premises.

(1) Let us suppose that the left premise of $\cut$ is an instance of $\ax$. If the cut-formula is not principal in the left premise, or if the right premise of $\cut$ is an instance of $\ax$ as well, then the conclusion of $\cut$ will be an instance of $\ax$, so the $\cut$ may be deleted entirely. 

Otherwise, the cut-formula is principal in the left premise and we have two subcases to consider: either (i) the cut-formula is not principal in the last inference $\ru$ applied in the proof of the right premise, or (ii) the cut-formula is principal in $\ru$ and $\ru$ is an instance of one of  $\idrep$
, $\drep$ and $\idrig$. In the first  subcase we can permute the cut upward in the proof of the right premise.  We first apply to the left premise either one instance of $\ru^{-1}$ (or two if $\ru$ is an instance of $\conr$), that is hp-admissible by Lemma \ref{lem:hp-inver}. Next, we apply an instance of $\cut$ that is admissible by the inductive hypothesis on the sum of the height of the derivations of the two premises. We conclude by another instance of rule $\ru$. We illustrate this subcase with the following example:

\begin{center}
\AxiomC{$\phantom{\ns}$}
\RightLabel{$\ax$}
\UnaryInfC{$\ns\hol\Box \psi\hor_w\hol L,\negnnf{L}\hor_v$}
\AxiomC{$\dotprf{\prf_{2}}$}
\UnaryInfC{$\ns\hol[\psi]\hor_w\hol \negnnf{L},\negnnf{L}\hor_v$}
\RightLabel{$\boxr$}
\UnaryInfC{$\ns\hol\Box \psi\hor_w\hol \negnnf{L},\negnnf{L}\hor_v$}
\RightLabel{$\cut$}
\BinaryInfC{$\ns\hol[\psi]\hor_w\hol \negnnf{L}\hor_v$}
\DisplayProof
\end{center}
where $w$ and  $v$ need not be  distinct. The case is resolved as shown below:
\begin{center}
\AxiomC{$\phantom{\ns}$}
\RightLabel{$\ax$}
\UnaryInfC{$\ns\hol\Box \psi\hor_w\hol L,\negnnf{L}\hor_v$}
\RightLabel{Lem.~\ref{lem:hp-inver}}
\UnaryInfC{$\ns\hol[\psi]\hor_w\hol {L},\negnnf{L}\hor_v$}
\AxiomC{$\dotprf{\prf_{2}}$}
\UnaryInfC{$\ns\hol[ \psi]\hor_w\hol \negnnf{L},\negnnf{L}\hor_v$}
\RightLabel{IH}
\BinaryInfC{$\ns\hol[\psi]\hor_w\hol \negnnf{L}\hor_v$}
\RightLabel{$\boxr$}
\UnaryInfC{$\ns\hol\Box \psi\hor_w\hol \negnnf{L}\hor_v$}
\DisplayProof
\end{center}

In the second subcase we can obtain a cut-free proof of the same conclusion by applying an hp-admissible instance of $\ctr$ to the conclusion of the right premise of $\cut$,  as shown in the following example, where $w$ and $v$ are distinct:

\begin{flushleft}
\AxiomC{$\phantom{\ns}$}
\RightLabel{$\ax$}
\UnaryInfC{$\ns\hol s\neq t,s=t\hor_w\hol \emptyset\hor_v$}
\AxiomC{$\dotprf{\prf_{2}}$}
\UnaryInfC{$\ns\hol s\neq t,s\neq t\hor_w\hol s\neq t\hor_v$}
\RightLabel{$\idrep$}
\UnaryInfC{$\ns\hol s\neq t,s\neq t\hor_w\hol \emptyset\hor_v$}
\RightLabel{$\cut$}
\BinaryInfC{$\ns\hol s\neq t\hor_w\hol \emptyset\hor_v$}
\DisplayProof
$\quad\leadsto$
\end{flushleft}
\begin{flushright}
\AxiomC{$\dotprf{\prf_{2}}$}
\UnaryInfC{$\ns\hol s\neq t,s\neq t\hor_w\hol s\neq t\hor_v$}
\RightLabel{$\idrep$}
\UnaryInfC{$\ns\hol s\neq t,s\neq t\hor_w\hol \emptyset\hor_v$}
\RightLabel{$\ctr$}
\UnaryInfC{$\ns\hol s\neq t\hor_w\hol \emptyset\hor_v$}
\DisplayProof
\end{flushright}

\medskip

(2) Let us suppose that the cut formula is not principal in the left premise of $\cut$; the case where the cut formula is not principal in the right premise of $\cut$ is argued similarly. Let $\ru$ be the rule proving the left premise of $\cut$. To resolve the case, we apply a $\cut$ between the premise of $\ru$ and the conclusion of the proof obtained by applying the hp-invertibility of $\ru$ to the right premise of $\cut$.  To illustrate, we consider the case where the left premise is by an instance of $\allr$ and $\cut$ is applied, as shown below.

\begin{flushleft}
\AxiomC{$\dotprf{\prf_{1}}$}
\UnaryInfC{$\ns\hol y,\psi(y/x)\hor_w\hol\phi \hor_v$}
\RightLabel{$\allr$}
\UnaryInfC{$\ns\hol\fa x\psi\hor_w\hol\phi\hor_v$}
\AxiomC{$\dotprf{\prf_{2}}$}
\UnaryInfC{$\ns\hol\fa x\psi\hor_w\hol\negnnf{\phi}\hor_v$}
\RightLabel{$\cut$}
\BinaryInfC{$\ns\hol\fa x\psi\hor_w\hol\emptyset\hor_v$}
\DisplayProof
$\quad\leadsto$
\end{flushleft}
\begin{flushright}
\AxiomC{$\dotprf{\prf_{1}}$}
\UnaryInfC{$\ns\hol y,\psi(y/x)\hor_w\hol\phi \hor_v$}
\AxiomC{$\dotprf{\prf_{2}}$}
\UnaryInfC{$\ns\hol\fa x\psi\hor_w\hol\negnnf{\phi}\hor_v$}
\RightLabel{Lem.~\ref{lem:hp-inver}}
\UnaryInfC{$\ns\hol y,\psi(y/x)\hor_w\hol\negnnf{\phi}\hor_v$}
\RightLabel{$\cut$}
\BinaryInfC{$\ns\hol y,\psi(y/x)\hor_w\hol\emptyset \hor_v$}\RightLabel{$\allr$}
\UnaryInfC{$\ns\hol\fa x\psi\hor$}
\DisplayProof
\end{flushright}

(3) We consider only two non-trivial cases and note that the remaining case are simpler or similar. 
First, we consider the case where the cut formula is principal in an application of $\allr$ in the left premise and an application of $\existsr$ in the right premise. 
\begin{center}
\AxiomC{$\dotprf{\prf_{1}}$}
\UnaryInfC{$\ns \hol  t, y, \psi(y/x) \hor$}
\RightLabel{$\allr$}
\UnaryInfC{$\ns \hol  t, \fa x \psi \hor$}

\AxiomC{$\dotprf{\prf_{2}}$}
\UnaryInfC{$\ns \hol  t, \exists x \negnnf{\psi}, \negnnf{\psi(t/x)} \hor$}
\RightLabel{$\existsr$}
\UnaryInfC{$\ns \hol  t, \exists x \negnnf{\psi} \hor$}
\RightLabel{$\cut$}
\BinaryInfC{$\ns \hol t \hor $}
\DisplayProof
\end{center}
To remove the $\cut$, we first reduce the height of the $\cut$ as shown in the left branch of the proof below, relying on the hp-admissible $\wk$ rule. We then use the hp-admissible substitution and term contraction rules as shown in the right branch, and apply a $\cut$ between formulae with a smaller length.
\begin{center}
\AxiomC{$\dotprf{\prf_{1}}$}
\UnaryInfC{$\ns \hol  t, y, \psi(y/x) \hor$}
\RightLabel{$\allr$}
\UnaryInfC{$\ns \hol  t, \fa x \psi \hor$}
\RightLabel{$\wk$}
\UnaryInfC{$\ns \hol  t, \fa x \psi, \negnnf{\psi(t/x)} \hor$}

\AxiomC{$\dotprf{\prf_{2}}$}
\UnaryInfC{$\ns \hol  t, \exists x \negnnf{\psi}, \negnnf{\psi}(t/x) \hor$}
\RightLabel{IH}
\BinaryInfC{$\ns \hol  t, \negnnf{\psi}(t/x) \hor$}

\AxiomC{$\dotprf{\prf_{1}}$}
\UnaryInfC{$\ns \hol  t, y, \psi(y/x) \hor$}
\RightLabel{$\mathsf{s}(t/y)$}
\UnaryInfC{$\ns \hol  t, t, \psi(t/x) \hor$}
\RightLabel{$\tc$}
\UnaryInfC{$\ns \hol  t, \psi(t/x) \hor$}

\RightLabel{IH}
\BinaryInfC{$\ns \hol t \hor $}
\DisplayProof
\end{center}
Last, we show how to resolve the case where the cut formula is principal in an application of $\boxr$ in the left premise and an application of $\diar$ in the right premise. 
\begin{center}
\AxiomC{$\dotprf{\prf_{1}}$}
\UnaryInfC{$\ns \hol [ \psi] \hor_{w} \hol \emptyset \hor_{u}$}
\RightLabel{$\boxr$}
\UnaryInfC{$\ns \hol \Box \psi \hor_{w} \hol \emptyset \hor_{u}$}

\AxiomC{$\dotprf{\prf_{2}}$}
\UnaryInfC{$\ns \hol \Diamond \negnnf{\psi}  \hor_{w} \hol \negnnf{\psi}  \hor_{u}$}
\RightLabel{$\diar$}
\UnaryInfC{$\ns \hol \Diamond \negnnf{\psi}  \hor_{w} \hol \emptyset \hor_{u}$}
\RightLabel{$\cut$}
\BinaryInfC{$\ns \hol \emptyset \hor_{w} \hol \emptyset \hor_{u}$}
\DisplayProof
\end{center}
We first reduce the height of the $\cut$ as shown in the left branch of the proof below, relying on the hp-admissible $\wk$ rule. Due to the side condition on the $\diar$ rule, we know that $w \prpath{L} u$ with $L = L_{\thuesys(\gpset)}(\fd)$. Therefore, we may apply the hp-admissible $\shift$ rule as shown in the right branch of the proof below, and then apply a $\cut$ on the formula $\psi$, which is of a smaller length.
\begin{center}
\AxiomC{$\dotprf{\prf_{1}}$}
\UnaryInfC{$\ns \hol [ \psi] \hor_{w} \hol \emptyset \hor_{u}$}
\RightLabel{$\boxr$}
\UnaryInfC{$\ns \hol \Box \psi \hor_{w} \hol \emptyset \hor_{u}$}
\RightLabel{$\wk$}
\UnaryInfC{$\ns \hol \Box \psi \hor_{w} \hol \negnnf{\psi}\hor_{u}$}

\AxiomC{$\dotprf{\prf_{2}}$}
\UnaryInfC{$\ns\hol \Diamond \negnnf{\psi}  \hor_{w}\hol \negnnf{\psi} \hor_{u}$}
\RightLabel{IH}
\BinaryInfC{$\ns \hol \emptyset \hor_{w} \hol \negnnf{\psi} \hor_{u}$}

\AxiomC{$\dotprf{\prf_{1}}$}
\UnaryInfC{$\ns \hol [ \psi] \hor_{w} \hol \emptyset \hor_{u}$}
\RightLabel{$\shift$}
\UnaryInfC{$\ns \hol \emptyset \hor_{w} \hol \psi \hor_{u}$}

\RightLabel{IH}
\BinaryInfC{$\ns \hol \emptyset \hor_{w} \hol \emptyset \hor_{u}$}

\DisplayProof
\end{center}
This concludes the proof of cut-elimination.
\end{proof}

\subsubsection{Properties of Equality.} We end this section by establishing a few natural properties associated with equality atoms.  Proposition~\ref{prop:sym} shows the admissibility of the \emph{symmetry rule} $\sym$ and of the \emph{transitivity rule} $\tra$. 
Proposition~\ref{prop:id-axs} confirms that nested sequents of a certain form are provable in $\nql$. The significance of these nested sequents is that they serve as analogs of 
standard axiomatic sequents used in sequent calculi for predicate logic with equality (cf.~\cite[Table~6.12]{NegPla11}). Proposition~\ref{prop:admiss-equality-rules} confirms the admissibility of the 
\emph{general replacement rule} $\repgen$, which  corresponds to an intuitive property of equality.

\begin{proposition}\label{prop:sym}
The following rules are admissible in $\nql$:
\begin{center}
\begin{tabular}{@{\hskip 0em} c @{\hskip .25em} c }
\AxiomC{$\ns\hol t\neq s,s\neq t\hor$}
\RightLabel{$\sym$}
\UnaryInfC{$\ns\hol t\neq s\hor$}
\DisplayProof

&

\AxiomC{$\ns\hol t\neq s,s\neq r,t\neq r\hor$}
\RightLabel{$\tra$}
\UnaryInfC{$\ns\hol t\neq s, s\neq r\hor$}
\DisplayProof
\end{tabular}
\end{center}
\end{proposition}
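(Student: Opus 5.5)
Both rules can be derived directly from the identity rules $\idrep$ and $\nidref$, together with the hp-admissible weakening rule $\wk$ (\lem~\ref{lem:wk-admiss}). No appeal to $\cut$ is needed. The one point needing care is choosing the negative literal $N$ and the orientation of the inequality in each $\idrep$ application. The rule $\idrep$, with principal inequality $t \neq s$, keeps $N(t/z)$ in the conclusion and deletes $N(s/z)$. So the literal we want to erase must be the instance at the \emph{second} term of the inequality.

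For $\sym$, start from a proof of $\ns\hol t\neq s, s\neq t\hor$ and weaken with $\wk$ to obtain $\ns\hol t\neq s, t\neq t, s\neq t\hor$. Take $N(z) := z \neq t$ with $z$ fresh, so that $N(t/z) = (t\neq t)$ and $N(s/z) = (s \neq t)$. Then $\idrep$ applied to the principal inequality $t \neq s$ yields $\ns\hol t\neq s, t\neq t\hor$. Finally, $\nidref$ removes $t\neq t$ and gives $\ns\hol t\neq s\hor$.

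For $\tra$, take a proof of $\ns\hol t\neq s, s\neq r, t\neq r\hor$ and use $\wk$ to add $s\neq t$, obtaining $\ns\hol t\neq s, s\neq t, s\neq r, t\neq r\hor$. Take $N(z) := z\neq r$ with $z$ fresh. Apply $\idrep$ with principal inequality $s \neq t$: here $N(s/z) = (s\neq r)$ is retained and $N(t/z) = (t\neq r)$ is the deleted formula. This gives $\ns\hol t\neq s, s\neq t, s\neq r\hor$. Now apply the already established admissible rule $\sym$ to the pair $t\neq s, s\neq t$, which yields $\ns\hol t\neq s, s\neq r\hor$ as required.

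The main obstacle, minor as it is, is precisely this orientation issue in $\tra$. Using $t\neq s$ directly as the principal inequality would delete $s\neq r$ rather than $t\neq r$. This is why we first introduce the symmetric inequality $s \neq t$ by weakening and then discharge it afterwards with $\sym$. Proving $\sym$ first therefore matters. The remaining checks are routine: $\idrep$ carries no side conditions and $z$ is chosen fresh, so degenerate cases (e.g.\ $t$, $s$, $r$ not distinct, or constants in place of variables) go through verbatim.
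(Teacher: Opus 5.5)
Your proposal is correct and matches the paper's proof essentially step for step. For $\sym$ you weaken in $t \neq t$, apply $\idrep$ with $N = (z \neq t)$, and close with $\nidref$. For $\tra$ you weaken in $s \neq t$, apply $\idrep$ with principal inequality $s \neq t$ and $N = (z \neq r)$ to erase $t \neq r$, and then discharge $s \neq t$ with the already established $\sym$, which is exactly the paper's derivation.
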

\begin{proof}
The admissibility of $\sym$ and $\tra$ are proven below and rely on the admissibility of the $\wk$ rule (\lem~\ref{lem:wk-admiss}).
\begin{center}
\resizebox{\textwidth}{!}{
\begin{tabular}{c c}
\AxiomC{$\dotprf{\prf}$}
\UnaryInfC{$\ns\hol t\neq s, s\neq t\hor$}
\RightLabel{$\equiv$}
\UnaryInfC{$\ns\hol t \neq s, z \neq t(s/z)\hor$}
\RightLabel{$\wk$}
\UnaryInfC{$\ns\hol t \neq s, z \neq t(s/z), z \neq t(t/z)\hor$}
\RightLabel{$\idrep$}
\UnaryInfC{$\ns\hol t \neq s, z \neq t(t/z)\hor$}
\RightLabel{$\equiv$}
\UnaryInfC{$\ns\hol t \neq s, t \neq t\hor$}
\RightLabel{$\nidref$}
\UnaryInfC{$\ns\hol t \neq s \hor$}
\DisplayProof

&

\AxiomC{$\dotprf{\prf}$}
\UnaryInfC{$\ns\hol t\neq s, s\neq r, t\neq r \hor$}
\RightLabel{$\equiv$}
\UnaryInfC{$\ns\hol t \neq s, z \neq r(s/z), z \neq r(t/z) \hor$}
\RightLabel{$\wk$}
\UnaryInfC{$\ns\hol t \neq s, s \neq t, z \neq r(s/z), z \neq r(t/z) \hor$}
\RightLabel{$\idrep$}
\UnaryInfC{$\ns\hol t \neq s, s \neq t, z \neq r(s/z) \hor$}
\RightLabel{$\sym$}
\UnaryInfC{$\ns\hol t \neq s, z \neq r(s/z) \hor$}
\RightLabel{$\equiv$}
\UnaryInfC{$\ns\hol t \neq s, s \neq r \hor$}
\DisplayProof
\end{tabular}
}
\end{center}
The steps marked with $\equiv$ are syntactical rewriting.
\end{proof}

\begin{proposition}\label{prop:id-axs}
The nested sequents $\ns\hol t=t\hor$ and $\ns\hol t \neq s, \phi(t/z), \negnnf{\phi}(s/z)\hor$ are provable in $\nql$ for any $t \in \ter$ and $\phi \in \lang$.
\end{proposition}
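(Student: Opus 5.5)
For the first claim, $\ns\hol t=t\hor$, I would bottom-up apply $\nidref$ to obtain $\ns\hol t=t, t\neq t\hor$. This is an instance of $\ax$, since $t\neq t$ is $\negnnf{(t=t)}$.

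For the second claim I would argue by induction on the length of $\phi$, proving the statement uniformly for all contexts $\ns\hol\hor$ and all terms $t,s$.

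\emph{Literal case.} Let $\phi$ be a literal $L$. If $\negnnf{L}(s/z)$ is a negative literal $N(s/z)$, then $L(t/z)=\negnnf{N}(t/z)$. Bottom-up $\idrep$ with principal $t\neq s$ and $N(t/z)$ is not yet available, because we only have $N(s/z)$. So I first apply $\sym$ in reverse: by Proposition~\ref{prop:sym} it suffices to prove $\ns\hol t\neq s, s\neq t, \negnnf{N}(t/z), N(s/z)\hor$. Then $\idrep$ with $s\neq t$ and $N(s/z)$ adds $N(t/z)$, and the resulting sequent is an instance of $\ax$ on $N(t/z),\negnnf{N}(t/z)$. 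If instead $\phi(t/z)$ is the negative literal $N(t/z)$, I apply $\idrep$ directly to add $N(s/z)$, which closes by $\ax$ against $\negnnf{N}(s/z)$. The only wrinkle is equalities where $z$ occurs on both sides; these are still handled because $N(\cdot/z)$ in $\idrep$ is an arbitrary negative literal with a placeholder.

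\emph{Propositional cases.} For $\lor$ and $\land$ I use the standard steps ($\disr$, $\conr$, then IH), with $\wk$ from Lemma~\ref{lem:wk-admiss} available if needed.

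\emph{Quantifier case.} Let $\phi=\forall x\psi$, with $x$ renamed away from $t,s,z$. Bottom-up $\allr$ introduces a fresh $y$, giving the sequent $\ns\hol t\neq s, y, \psi(t/z)(y/x), \exists x\negnnf{\psi}(s/z)\hor$. Then $\existsr$ with the term $y$ adds $\negnnf{\psi}(s/z)(y/x)$. This equals $\negnnf{\psi(y/x)}(s/z)$, and IH applies to $\psi(y/x)$, which has the same length as $\psi$. The $\exists$ case is symmetric.

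\emph{Modal case, the main obstacle.} Let $\phi=\Box\psi$. Bottom-up $\boxr$ gives $\ns\hol t\neq s, \Diamond\negnnf{\psi}(s/z), [\psi(t/z)]\hor$, and $\diar$ (always permitted along an $\fd$-edge, as in Lemma~\ref{lem:generalized-id}) adds $\negnnf{\psi}(s/z)$ inside the nesting. To invoke IH, the inequality $t\neq s$ must also appear inside the nesting, and this is exactly what $\idrig$ supplies: with $w$ the parent and $u$ the child, $w\neq u$ holds, and its side condition $\dag_4$ has no reachability requirement. After this step IH closes the nesting. The $\Diamond$ case is symmetric: apply $\idrig$ to copy $t\neq s$ into the relevant nesting, then use $\diar$ and IH. The crux is making sure $\idrig$ applies in the required direction, since it moves $s\neq t$ from $w$ to $u$ with no path condition, and making sure the variable conventions in the substitution cases do not break the freshness conditions; Lemma~\ref{lem:ps-admiss} justifies renaming.
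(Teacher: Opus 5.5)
Your proposal is correct and follows essentially the same route as the paper. For $t=t$ you use $\nidref$ plus $\ax$. For the second claim you induct on $\phi$: the literal case goes through $\sym$ and $\idrep$, the quantifier case through $\allr$ with a fresh variable followed by $\existsr$, and the modal case through $\boxr$, $\diar$, and $\idrig$ to copy $t\neq s$ into the nesting before invoking the induction hypothesis. You also spell out the negative-literal subcase, which closes by a direct $\idrep$ without $\sym$ and which the paper leaves implicit.
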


\begin{proof} $\ns\hol t=t\hor$ is provable using $\ax$ and $\nidref$. To show $\ns\hol t \neq s, \phi(t/z), \negnnf{\phi}(s/z)\hor$ is provable in $\nql$, we argue by induction on the length of $\phi$.

\textit{Base case.} The base case is simple to resolve as $\phi$ is a literal and the desired nested sequent can be proven 
as shown below:
\begin{center}
\AxiomC{$\phantom{\ns}$}
\RightLabel{$\ax$}
\UnaryInfC{$\ns \hol s\neq t,t \neq s, P(t/z), \neg{P}(s/z) ,\neg{P}(t/z)\hor$}
\RightLabel{$\idrep$}
\UnaryInfC{$\ns \hol  s\neq t,t \neq s, P(t/z), \neg{P}(s/z)  \hor$}
\RightLabel{$\sym$}
\UnaryInfC{$\ns \hol  t \neq s, P(t/z), \neg{P}(s/z)  \hor$}
\DisplayProof
\end{center}

\textit{Inductive step.} For the inductive step, we consider the cases where $\phi$ is of the form $\forall x \psi$ or $\Box \psi$; the remaining cases are similar. Let $\phi \equiv \forall x \psi$ and note that by IH the top sequent in the proof below is provable. We may assume w.l.o.g. that the fresh variable $y$ used in the $\allr$ is chosen so that $y$ is distinct from $z$, and we may also assume w.l.o.g. that $x \not\in \set{t,s,z}$ since we identify derivations that differ only in the names of bound variables (as discussed on p.~\pageref{note:nested-equiv-renaming}). Based on these assumptions, it follows that $\psi(y/x)(t/z) \equiv \psi(t/z)(y/x)$ and $\negnnf{\psi}(y/x)(s/z) \equiv \negnnf{\psi}(s/z)(y/x)$, which justifies the rewriting step marked with $\equiv$ in the following proof:
\begin{center}
\AxiomC{$\dotprf{\prf}$}
\RightLabel{IH}
\UnaryInfC{$\ns \hol y, t \neq s, \psi(y/x)(t/z), \exists x \negnnf{\psi}(s/z), \negnnf{\psi}(y/x)(s/z) \hor$}
\RightLabel{$\equiv$}
\UnaryInfC{$\ns \hol y, t \neq s, \psi(t/z)(y/x), \exists x \negnnf{\psi}(s/z), \negnnf{\psi}(s/z)(y/x) \hor$}
\RightLabel{$\existsr$}
\UnaryInfC{$\ns \hol y, t \neq s, \psi(t/z)(y/x), \exists x \negnnf{\psi}(s/z) \hor$}
\RightLabel{$\allr$}
\UnaryInfC{$\ns \hol t \neq s, \fa x \psi(t/z), \exists x \negnnf{\psi}(s/z) \hor$}
\DisplayProof
\end{center}
In the case that $\phi\equiv \Box\psi$, the desired conclusion is obtained as shown below.
\begin{center}
\AxiomC{$\dotprf{\prf}$}
\RightLabel{IH}
\UnaryInfC{$\ns \hol  t \neq s ,\dia \negnnf{\psi}(s/z), [t \neq s, \psi(t/z), \negnnf{\psi}(s/z)] \hor$}
\RightLabel{$\idrig$}
\UnaryInfC{$\ns \hol  t \neq s, \dia \negnnf{\psi}(s/z), [\psi(t/z), \negnnf{\psi}(s/z)] \hor$}
\RightLabel{$\diar$}
\UnaryInfC{$\ns \hol  t \neq s, \dia \negnnf{\psi}(s/z), [\psi(t/z)] \hor$}
\RightLabel{$\boxr$}
\UnaryInfC{$\ns \hol  t \neq s, \Box \psi(t/z), \dia \negnnf{\psi}(s/z) \hor$}
\DisplayProof
\end{center}
Note that the top sequent in the proof above is provable by IH.
\end{proof}

\begin{proposition}\label{prop:admiss-equality-rules}
The following rule is admissible in $\nql$:
\begin{center}

\AxiomC{$\ns \hol t \neq s, \phi(t/z), \phi(s/z) \hor$}
\RightLabel{$\repgen$}
\UnaryInfC{$\ns \hol  t \neq s, \phi(t/z)  \hor$}
\DisplayProof

\end{center}
\end{proposition}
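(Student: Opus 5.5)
The plan is to derive $\repgen$ from the admissible $\cut$ rule (\thm~\ref{thm:cut-elim}), using Proposition~\ref{prop:id-axs} as a side derivation. Suppose we are given a proof of $\ns \hol t \neq s, \phi(t/z), \phi(s/z) \hor$. We want to remove $\phi(s/z)$, so we cut on it. The first premise is exactly the given sequent, read as $\ns \hol t\neq s, \phi(t/z), \phi(s/z)\hor$. The second premise must be $\ns \hol t\neq s, \phi(t/z), \negnnf{\phi}(s/z)\hor$. Note that $\negnnf{(\phi(s/z))} = \negnnf{\phi}(s/z)$ by the definition of $\negnnf{}$ on formulae. Applying $\cut$ to these two premises yields $\ns\hol t\neq s,\phi(t/z)\hor$, which is the conclusion of $\repgen$.

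The remaining work is to check that the second premise is provable. This is precisely the second claim of Proposition~\ref{prop:id-axs}, instantiated with the context $\ns\hol\ \hor$ and the formula $\phi$. It is provable in $\nql$ for any $\phi\in\lang$ and terms $t,s$.

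Hence $\repgen$ reduces to one application of $\cut$, and its admissibility follows from the admissibility of $\cut$. The only point needing care is bound-variable capture in the substitutions $\phi(t/z)$ and $\phi(s/z)$. This is handled by the paper's convention of identifying formulae up to renaming of bound variables, which is already assumed in Proposition~\ref{prop:id-axs}, so the two cut formulae match syntactically.

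If a cut-free argument were preferred instead, I would proceed by induction on the length of $\phi$, mimicking Proposition~\ref{prop:id-axs}. Literals would be handled with $\idrep$ together with $\sym$, or by contraction with $\ctr$. The modal case would use the hp-invertibility of the rules (\lem~\ref{lem:hp-inver}) and $\idrig$ to propagate $t\neq s$ into the nesting. That route is the main obstacle, because the $\dia$ and $\exists$ cases require the duplicated formula in a different component. So I would use the cut-based argument above.
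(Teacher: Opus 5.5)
Your proposal is correct and is the same as the paper's proof. It uses a single $\cut$ on $\phi(s/z)$, between the given premise and the sequent $\ns \hol t \neq s, \phi(t/z), \negnnf{\phi}(s/z) \hor$ supplied by Proposition~\ref{prop:id-axs}, and admissibility then follows from \thm~\ref{thm:cut-elim}.
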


\begin{proof} 
The admissibility of $\repgen$ is a consequence of Proposition~\ref{prop:id-axs} and the cut-elimination theorem (\thm~\ref{thm:cut-elim}), as shown below:
\begin{center}
\AxiomC{$\dotprf{\prf}$}
\UnaryInfC{$\ns \hol t \neq s, \phi(t/z), \phi(s/z) \hor$}

\AxiomC{$\phantom{\ns}$}
\RightLabel{Proposition~\ref{prop:id-axs}}
\UnaryInfC{$\ns \hol t \neq s, \phi(t/z), \negnnf{\phi}(s/z) \hor$}
\RightLabel{$\cut$}
\BinaryInfC{$\ns \hol t \neq s, \phi(t/z) \hor$}
\DisplayProof
\end{center}
We remark that 
$\repgen$ can be proven admissible by induction on the height of a given proof, rather than relying on the admissibility of $\cut$. 
\end{proof}

%% file: conclusion2.tex
In this work, we unified Horn-characterizable QMLs supporting equality within a single nested sequent framework. This was achieved by incorporating signatures in nested sequents and using reachability rules in nested systems. The latter rules are unique in that they propagate or consume formulae or terms along paths in a nested sequent, permitting one to toggle between various QMLs by simply changing the parameterizing $\albet$-systems. This allowed us to provide the first sound and cut-free complete nested systems for a broad class of QMLs characterized by relational models that assign inner and outer domains to worlds (cf. Corsi~\cite{Cor02}), and which are subject to a variety of distinct frame and domain conditions. 

In future work, we plan to investigate whether the formula interpretation of nested sequents can be leveraged to extract axiomatic systems for QMLs, in the spirit of~\cite{IshKik07}, which applied a similar approach to sub-intuitionistic predicate logics. We note that the specification of axiomatic systems for various QMLs remains open. 

Another promising line of research involves generalizing our nested sequent framework to capture broader and alternative classes of QMLs. Three directions are particularly noteworthy. First, as discussed throughout the paper, the Extended Barcan Rule ($\ebr$) is entailed by the standard universal quantifier rule, thereby limiting our nested systems to QMLs characterized by models with constant outer domains. It would be interesting to find an alternative $\allr$ rule that allows cut-free nested sequent systems that are sound and complete relative to classes of normal models with non-constant outer domains (cf.~\cite{Cor02}). Second, one could investigate nested systems for QMLs with non-rigid---and possibly non-denoting---terms, drawing inspiration from treatments of definite descriptions based on $\lambda$-abstraction, such as that of Fitting and Mendelsohn~\cite{FitMen98}, which has been explored in the context of labeled calculi~\cite{O21}. Third, it may be fruitful to extend the framework to capture generalizations of the relational semantics considered in this paper, such as counterpart-theoretic semantics (cf.~\cite{BraGhi07,O24}).

%% file: app-proofs.tex
\section{Proofs for \sect~\ref{sec:nested-calculi}}\label{app:proofs-sec-sound-compl}

\begin{customthm}{\ref{thm:soundness-nested}} If $\nql \vdash \ns$, then $\ns$ is valid w.r.t. the class of frames $\fclassc$.
\end{customthm}

\begin{proof} Additional cases for the soundness theorem are provided below:

$\allr$. Suppose that $\nsii = \ns \hol \Gamma, \fa x\phi \hor_{w}$ is invalid. By assumption, there exists a model $\M$ based on a frame in $\fclassc$, an $\M$-assignment $\s$, and an $\M$-interpretation $\mint$ such that $\M, \s, \mint \not\Vdash \nsii$. Hence, $\M, \s, \mint(w) \not\Vdash \fa x \phi$ by \dfn~\ref{def:sequent-semantics}, which implies that there exists an $o \in \D_{\mint(w)}$ such that $\M, \tau, \mint(w) \not\Vdash \phi(y/x)$ with $\tau = \s^{y \trir o}$. In particular, we have that $\tau(y) \in \D_{\mint(w)}$. By \dfn~\ref{def:sequent-semantics} and the assumption that $y$ is fresh, we have that $\M, \tau, \mint \not\Vdash \ns \hol y, \Gamma, \phi(y/x) \hor_{w}$.

$\boxr$. Suppose that $\nsii = \ns \hol  \Gamma, \Box \phi \hor_{w}$ is invalid. By assumption, there exists a model $\M$ based on a frame in $\fclassc$, an $\M$-assignment $\s$, and an $\M$-interpretation $\mint$ such that $\M, \s, \mint \not\Vdash \nsii$, which further implies that $\M, \s, \mint(w) \not\Vdash \Box \phi$. Hence, there exists a world $u \in \W$ of $\M$ such that $\M, \s, u \not\Vdash \phi$. Let $u$ be the name of the component containing just $\phi$ displayed in the premise $\ns \hol \Gamma, [\phi] \hor_{w}$ of $\boxr$. We define $\mint'(v) := \mint(v)$ if $v \neq u$ and let $\mint'(u) := u$ otherwise. Then, one can see that $\M, \s, \mint' \not\Vdash \ns \hol \Gamma, [\phi] \hor_{w}$.

$\nidref$. Suppose that $\nsii = \ns \hol  \Gamma \hor_{w}$ is invalid. By assumption, there exists a model $\M$ based on a frame in $\fclassc$, an $\M$-assignment $\s$, and an $\M$-interpretation $\mint$ such that $\M, \s, \mint \not\Vdash \nsii$. For every term $t$, we have that $\M, \s, \mint(w) \Vdash t  = t$, and thus, $\M, \s, \mint \not\Vdash \ns \hol  \Gamma, t \neq t \hor_{w}$.

$\idrep$. Suppose that $\nsii = \ns \hol  \Gamma, t \neq s, N(t/z)  \hor_{w}$ is invalid. By assumption, there exists a model $\M$ based on a frame in $\fclassc$, an $\M$-assignment $\s$, and an $\M$-interpretation $\mint$ such that $\M, \s, \mint \not\Vdash \nsii$, which means that $\M, \s, \mint(w) \Vdash t = s$ and $\M, \s, \mint(w) \Vdash \negnnf{N}(t/z)$. These two facts imply that $\M, \s, \mint(w) \Vdash \negnnf{N}(s/z)$, which further implies that $\M, \s, \mint \not\Vdash \ns \hol \Gamma, t \neq s, N(t/z),N(s/z) \hor_{w}$.

$\idrig$. Suppose that $\nsii = \ns \hol  \Gamma, s\neq t \hor_{w} \hol \Delta \hor_{u}$ is invalid. By assumption, there exists a model $\M$ based on a frame in $\fclassc$, an $\M$-assignment $\s$, and an $\M$-interpretation $\mint$ such that $\M, \s, \mint \not\Vdash \nsii$, which means that $\M, \s, \mint(w) \Vdash s=t$. By Proposition~\ref{prop:eq-sat-global}, it follows that $\M, \s, \mint(u) \Vdash s=t$, and so, $\M, \s, \mint \not\Vdash \ns \hol  \Gamma, s\neq t \hor_{w} \hol \Delta, s\neq t \hor_{u}$.

$\drep$. Suppose that $\nsii = \ns \hol t, \Gamma, t \neq s \hor_w$ is invalid.  By assumption, there exists a model $\M$ based on a frame in $\fclassc$, an $\M$-assignment $\s$, and an $\M$-interpretation $\mint$ such that $\M, \s, \mint \not\Vdash \nsii$, which means that $\M, \s, \mint(w) \Vdash t = s$ and $\s(t) \in \D_{\mint(w)}$. This implies that $\s(t)=\s(s)$ and, hence, that $\s(s)\in\D_{\mint(w)}$. We conclude that $\M, \s, \mint \not\Vdash  \ns \hol s,t, \Gamma, t \neq s \hor_w$ 

$\ned$. Suppose that $\cned \in \fcons$ and $\nsii = \ns \hol \Gamma \hor_{w}$ is invalid. By assumption, there exists a model $\M$ based on a frame in $\fclassc$, an $\M$-assignment $\s$, and an $\M$-interpretation $\mint$ such that $\M, \s, \mint \not\Vdash \nsii$. By the $\cned$ condition, we know that there exists an object $o \in \D_{\mint(w)}$. Let $y$ be a variable not occurring in $\nsii$, i.e., let $y$ be fresh. Let us define $\tau(t) := \s(t)$ if $t \neq y$ and $\tau(y) := o$ otherwise. Then, $\tau(y) \in \D_{\mint(w)}$ and since $y$ is fresh, we have that $\M, \s, \mint \not\Vdash \ns \hol y, \Gamma \hor_{w}$.

$\cdr$. Suppose that $\ccd \in \fcons$ and $\nsii = \ns \hol \Gamma \hor_{w}$ is invalid. By assumption, there exists a model $\M$ based on a frame in $\fclassc$, an $\M$-assignment $\s$, and an $\M$-interpretation $\mint$ such that $\M, \s, \mint \not\Vdash \nsii$. By the $\ccd$ condition, we know that $\D_{\mint(w)} = \U$. Since $\I_{\mint(w)}(a) \in \U$ for any $a \in \con$ and $\s(x) \in \U$ for any $x \in \var$, we know that $\I_{\mint(w)}^{\s}(t) \in \U = \D_{\mint(w)}$ for any $t \in \ter$. Therefore, $\M, \s, \mint \not\Vdash \ns \hol t, \Gamma \hor_{w}$.
\end{proof}

\begin{customthm}{\ref{thm:completeness-nested}} If $\ns$ is valid w.r.t. the class of frames $\fclassc$, then $\nql \vdash \ns$.
\end{customthm}

\begin{proof} We suppose that $\nql$ does not prove a nested sequent $\ns$ and show that $\ns$ is invalid relative to $\fclassc$. To prove this, we first define a proof-search procedure $\prove$ that bottom-up applies rules from $\nql$ to $\ns$ building an infinite derivation thereof. Second, we show how a model $\M = \langle \W,\R,\U,\D,\I \rangle$ can be extracted from this infinite derivation. Let $w$ be the name of the root of $\ns$ and let $\prec$ be a well-founded, strict linear order over the set $\ter$. We now describe the proof-search procedure $\prove$.\\

\noindent
$\prove$. Let us take $\ns$ as input and continue to the next step.\\

$\ax$. Let $\branch_{1}, \ldots, \branch_{n}$ be all branches occurring in the current derivation and let $\ns_{1}, \ldots, \ns_{n}$ be the top sequents of each respective branch. For each $1 \leq i \leq n$, we halt the computation of $\prove$ on each branch $\branch_{i}$ where $\ns_{i}$ is an initial sequent. If $\prove$ halts on each branch $\branch_{i}$, then $\prove$ returns $\true$ because a proof of the input has been constructed. However, if $\prove$ does not halt on each branch $\branch_{i}$ with $1 \leq i \leq n$, then let $\branch_{j_{1}}, \ldots, \branch_{j_{k}}$ be the remaining branches for which $\prove$ does not halt. For each such branch, copy the top sequent above itself, and continue to the next step.

\medskip

$\diar$. Let $\branch_{1}, \ldots, \branch_{n}$ be all branches occurring in the current derivation and let $\ns_{1}, \ldots, \ns_{n}$ be the top sequents of each respective branch. For each $1 \leq i \leq n$, we consider $\branch_{i}$ and extend the branch with bottom-up applications of $\diar$ rules. Let $\branch_{k+1}$ be the current branch under consideration, and assume that $\branch_{1},\ldots,\branch_{k}$ have already been considered. Let $\dia \phi_{1}, \ldots, \dia \phi_{m}$ be all diamond formulae occurring in $\ns_{k+1}$ occurring in components named $w_{1}, \ldots, w_{m}$, respectively. We consider each formula $\Diamond \phi_{i}$ in turn and bottom-up apply the $\diar$ rule for each name $u$ such that $w_{i} \prpath{L} u$ with $L = L_{\thuesys(\gpset)}(\fd)$. After these operations have been performed for each branch $\branch_{i}$ with $1 \leq i \leq n$, we continue to the next step.

\medskip

$\allr$. Let $\branch_{1}, \ldots, \branch_{n}$ be all branches occurring in the current derivation and let $\ns_{1}, \ldots, \ns_{n}$ be the top sequents of each respective branch. For each $1 \leq i \leq n$, we consider $\branch_{i}$ and extend the branch with bottom-up applications of $\allr$ rules. Let $\branch_{k+1}$ be the current branch under consideration, and assume that $\branch_{1},\ldots,\branch_{k}$ have already been considered. Let $\forall x_{1} \phi_{1}, \ldots, \forall x_{m} \phi_{m}$ be all universal formulae occurring in $\ns_{k+1}$. We consider each formula $\forall x_{i} \phi_{i}$ in turn, and bottom-up apply the $\allr$ rule with a fresh variable $z$ that does not occur anywhere in the derivation under construction. After these operations have been performed for each branch $\branch_{i}$ with $1 \leq i \leq n$, we continue to the next step.

\medskip

$\nidref$. Let $\branch_{1}, \ldots, \branch_{n}$ be all branches occurring in the current derivation and let $\ns_{1}, \ldots, \ns_{n}$ be the top sequents of each respective branch. For each $1 \leq i \leq n$, we consider $\branch_{i}$ and extend the branch with bottom-up applications of $\nidref$ rules. Let $\branch_{k+1}$ be the current branch under consideration, and assume that $\branch_{1},\ldots,\branch_{k}$ have already been considered. Let $w_{1}, \ldots, w_{m}$ be the names of all components in $\ns_{k+1}$. For each $w_{j}$-component (with $1 \leq j \leq m$), we pick the smallest term $t$ according to the ordering $\prec$ such that $t \neq t$ does not occur in the $w_{j}$-component and we bottom-up apply the $\nidref$ rule to introduce $t \neq t$ to the $w_{j}$-component. After these operations have been performed for each branch $\branch_{i}$ with $1 \leq i \leq n$, we continue to the next step.

\medskip

$\idrep$. Let $\branch_{1}, \ldots, \branch_{n}$ be all branches occurring in the current derivation and let $\ns_{1}, \ldots, \ns_{n}$ be the top sequents of each respective branch. For each $1 \leq i \leq n$, we consider $\branch_{i}$ and extend the branch with bottom-up applications of $\idrep$ rules. Let $\branch_{k+1}$ be the current branch under consideration, and assume that $\branch_{1},\ldots,\branch_{k}$ have already been considered. Let $w_{1}, \ldots, w_{m}$ be the names of all components in $\ns_{k+1}$. For each $w_{j}$-component (with $1 \leq j \leq m$) and each pair of formulae $t \neq s$ and $N(t/z)$, we bottom-up apply the $\idrep$ rule to introduce $N(s/z)$ to the $w_{j}$-component. After these operations have been performed for each branch $\branch_{i}$ with $1 \leq i \leq n$, we continue to the next step.

\medskip

$\idrig$. Let $\branch_{1}, \ldots, \branch_{n}$ be all branches occurring in the current derivation and let $\ns_{1}, \ldots, \ns_{n}$ be the top sequents of each respective branch. For each $1 \leq i \leq n$, we consider $\branch_{i}$ and extend the branch with bottom-up applications of $\idrig$ rules. Let $\branch_{k+1}$ be the current branch under consideration, and assume that $\branch_{1},\ldots,\branch_{k}$ have already been considered. Let $w_{1}, \ldots, w_{m}$ be the names of all components in $\ns_{k+1}$. For each $w_{j}$-component (with $1 \leq j \leq m$) and each equality literal $t \neq s$ occurring in the $w_{j}$-component, we bottom-up apply the $\idrig$ rule $m-1$ times to add $t \neq s$ to each $w_{\ell}$-component with $1 \leq \ell \neq j \leq m$. After these operations have been performed for each branch $\branch_{i}$ with $1 \leq i \leq n$, we continue to the next step.



\medskip

$\dpr$. If $\dpr \in \nql$, then we do the following: Let $\branch_{1}, \ldots, \branch_{n}$ be all branches occurring in the current derivation and let $\ns_{1}, \ldots, \ns_{n}$ be the top sequents of each respective branch. For each $1 \leq i \leq n$, we consider $\branch_{i}$ and extend the branch with bottom-up applications of $\dpr$ rules. Let $\branch_{k+1}$ be the current branch under consideration, and assume that $\branch_{1},\ldots,\branch_{k}$ have already been considered. Let $w_{1}, \ldots, w_{m}$ be the names of all components of $\ns_{k+1}$. We consider each $w_{j}$-component in turn and bottom-up apply the $\dpr$ rule for each term $t$ occurring in the $w_{j}$-component and each name $u$ such that $w_{j} \prpath{L} u$ with $L$ determined by the side condition of $\dpr$. After these operations have been performed for each branch $\branch_{i}$ with $1 \leq i \leq n$, we continue to the next step.



\medskip

$\ned$. If $\ned \in \nql$, then we do the following: If $\dpr \in \nql$, then we do the following: Let $\branch_{1}, \ldots, \branch_{n}$ be all branches occurring in the current derivation and let $\ns_{1}, \ldots, \ns_{n}$ be the top sequents of each respective branch. For each $1 \leq i \leq n$, we consider $\branch_{i}$ and extend the branch with bottom-up applications of $\dpr$ rules. Let $\branch_{k+1}$ be the current branch under consideration, and assume that $\branch_{1},\ldots,\branch_{k}$ have already been considered. Let $w_{1}, \ldots, w_{m}$ be the names of all components of $\ns_{k+1}$. We consider each $w_{j}$-component in turn and bottom-up apply the $\ned$ rule introducing fresh variable $y$ that does not occur anywhere in the derivation under construction. After these operations have been performed for each branch $\branch_{i}$ with $1 \leq i \leq n$, we continue to the next step.

\medskip

$\cdr$. If $\cdr \in \nql$, then we do the following: If $\ned \in \nql$, then we do the following: If $\dpr \in \nql$, then we do the following: Let $\branch_{1}, \ldots, \branch_{n}$ be all branches occurring in the current derivation and let $\ns_{1}, \ldots, \ns_{n}$ be the top sequents of each respective branch. For each $1 \leq i \leq n$, we consider $\branch_{i}$ and extend the branch with bottom-up applications of $\cdr$ rules. Let $\branch_{k+1}$ be the current branch under consideration, and assume that $\branch_{1},\ldots,\branch_{k}$ have already been considered. Let $w_{1}, \ldots, w_{m}$ be the names of all components in $\ns_{k+1}$. For each $w_{j}$-component (with $1 \leq j \leq m$), we pick the smallest term $t$ according to the ordering $\prec$ such that $t$ does not occur in the $w_{j}$-component and we bottom-up apply the $\cdr$ rule to introduce $t$ to the $w_{j}$-component. After these operations have been performed for each branch $\branch_{i}$ with $1 \leq i \leq n$, we continue to the next step.

\medskip

$\lor$, $\land$, $\boxr$, $\existsr$, $\drep$, $\drule$. The remaining cases are handled similarly. After these operations have been performed for each branch $\branch_{i}$ with $1 \leq i \leq n$, we loop back to the `$\ax$' step.

\medskip

\noindent
This concludes the description of $\prove$.\\

We now construct a model $\M = \langle \W,\,\R,\,\U,\,\D,\I \rangle$, $\M$-assignment $\s$, and $\M$-interpretation $\mint$ such that $\M, \s, \mint \not\Vdash \ns$. By assumption, $\ns$ does not have a proof in $\nql$, and so, $\prove$ cannot halt/return $\true$. Otherwise, a proof for $\ns$ can be obtained by `contracting'  all redundant inferences from the $\ax$ step of $\prove$, contradicting our assumption. Therefore, as $\prove$ does not halt, $\prove$ generates an infinite tree with finite branching, i.e., an infinite derivation. By K\"onig's lemma, an infinite branch $\branch = \ns_{0}, \ns_{1}, \ldots, \ns_{n}, \ldots$ must exist in this infinite derivation with $\ns_{0} = \ns$. We now define a model $\M = \langle \W,\R,\U,\D,\I \rangle$ by means of this branch: First, for each nested sequent in the branch $\branch$, we let $\tr(\ns_{i}) = (V_{i},E_{i})$ and $\prgr{\ns_{i}} = (\prv_{i}, \pre_{i}, \prl_{i})$ (see Definitions \ref{definition:tns} and \ref{def:propagation-graph}). We define $\tr(\branch) = (V^{\branch},E^{\branch})$ such that 

\begin{itemize}

\item[(1)] $(u,\Gamma) \in V^{\branch}$ with $\Gamma := \bigcup_{i \in \mathbb{N}} \Gamma_{i}$ for $(u,\Gamma_{i}) \in V_{i}$;

\item[(2)] $E^{\branch} = \bigcup_{i \in \mathbb{N}} E_{i}$. 

\end{itemize}
We define $\prgr{\branch} = (\prv^{\branch}, \pre^{\branch}, \prl^{\branch})$ such that
\begin{itemize}

\item[(1)] $\prv^{\branch} = \bigcup_{i \in \mathbb{N}} \prv_{i}$, i.e., $\prv^\branch=\{u: (u,\Gamma)\in V^\branch$ for some $\Gamma \}$; 

\item[(2)] $\pre^{\branch} = \bigcup_{i \in \mathbb{N}} \pre_{i}$, i.e., $\pre^\branch=\{(w, \fd, u), (u, \bd, w) :(w,u) \in E^\branch\}$; 

\item[(3)] for each $u \in \prv^{\branch}$, $\prl^{\branch}(u) = \bigcup_{i \in \mathbb{N}} \prl_{i}(u)$, where we take $\prl_{i}(u) := \emptyset$ if $u \not\in \prv_{i}$.

\end{itemize}

The universe $\U$ and each domain associated with a world of the model $\M$ will consist of equivalence classes of terms. To define these equivalence classes we first define the following equivalence relation: $t \sim s$ \iffi there exists a $(u,\Gamma) \in V^{\branch}$ with $t \neq s \in \Gamma$. We define $[t] = \{s \mid t \sim s\}$. Let us prove that $\sim$ is indeed an equivalence relation. First, by the $\nidref$ step of $\prove$, we know that for any $t \in \ter$, $t \neq t \in \Gamma$ with $(u,\Gamma) \in V^{\branch}$, showing that $t \sim t$, i.e., $\sim$ is reflexive. Second, we show that $\sim$ is Euclidean, i.e., if $t \sim s$ and $t \sim r$, then $s \sim r$. Suppose that $t \sim s$ and $t \sim r$, meaning, $t \neq s \in \Gamma$ and $t \neq r \in \Delta$ for $(u,\Gamma), (v,\Delta) \in V^{\branch}$. By the $\idrig$ step of $\prove$, we have that $t \neq r \in \Gamma$ as well; therefore, for some $i \in \mathbb{N}$, $\ns_{i} = \ns \hol  \Sigma, t \neq s, t \neq r  \hor_{u}$ and by the $\idrep$ step of $\prove$, the $\idrep$ rule will be applied as shown below with $N = (z \neq r)$ thus introducing (bottom-up) $s \neq r$. This implies that $s \neq r \in \Gamma$, showing that $s \sim r$, and establishing that $\sim$ is an equivalence relation.
\begin{center}
\AxiomC{$\ns \hol \Gamma, t \neq s, t \neq r, s \neq r \hor_{u}$}
\RightLabel{$\equiv$}
\UnaryInfC{$\ns \hol \Gamma, t \neq s, (z \neq r)(t/z), (z \neq r)(s/z) \hor_{u}$}
\RightLabel{$\idrep$}
\UnaryInfC{$\ns \hol  \Gamma, t \neq s, (z \neq r)(t/z)  \hor_{u}$}
\RightLabel{$\equiv$}
\UnaryInfC{$\ns \hol  \Gamma, t \neq s, t \neq r  \hor_{u}$}
\DisplayProof
\end{center}
We now define the model $\M = \langle \W,\,\R,\,\U,\,\D,\I \rangle$ as follows:
\begin{itemize}

\item $\W := \prv^{\branch}$;

\item $(w,u) \in \R$ \iffi $\prgr{\branch} \models w \prpath{L} u$ with $L = L_{\thuesys(\gpset)}(\fd)$;

\item $\U :=  \ter \ / \sim$;

\item for each $u \in V^{\branch}$, $\D_{u} := \prl^{\branch}(u) \ / \sim$;

\item $([t_{1}], \ldots, [t_{n}]) \in \I_{u}(P)$ \iffi $(u,\Gamma) \in V^{\branch}$ with $\neg P(t_{1}, \ldots, t_{n}) \in \Gamma$.

\item $\I_{u}(a) := [a]$.

\end{itemize}
Let us first verify that $\M$ is indeed a model (see \dfn~\ref{def:model}). After, we prove that $\M$ is based on a frame in $\fclassc$.

\begin{itemize}

\item Since $w$ is the name of the root of $\ns$, we know that $w \in \W$, and therefore, $\W \neq \emptyset$.

\item We have that $\R \subseteq \W \times \W$ by definition.

\item By definition, the universe $\U$ will be non-empty.


\item We now show that for each $u \in \W$, $\D_{u} \subseteq \U$. Therefore, we need to show that for any term $t$, if $[t]_{\D} := [t] \in \D_{u}$, then $[t]_{\U} := [t] \in \U$ and $[t]_{\D} = [t]_{\U}$. We use $[t]_{\D}$ to denote the equivalence class $[t] \in \D_{u}$ and $[t]_{\U}$ to denote $[t] \in \U$, which disambiguates the two occurrences of $[t]$. Suppose that $[t]_{\D} = [t] \in \D_{u}$. By definition, $[t]_{\U} := [t] \in \U$, though the question remains if $[t]_{\D} = [t]_{\U}$, which we now show. It is trivial that $[t]_{\D} \subseteq [t]_{\U}$, so we prove that $[t]_{\U} \subseteq [t]_{\D}$. Let $s \in [t]_{\U}$, meaning, there exists a $(v,\Delta) \in V^{\branch}$ with $t \neq s \in \Delta$. Note that, by our supposition, $t \in \Gamma$ for some $(u,\Gamma) \in V^{\branch}$. By the $\idrig$ step of $\prove$, we know that $t \neq s \in \Gamma$ for $(u,\Gamma) \in V^{\branch}$. Hence, $t \in \Gamma$ and $t \neq s \in \Gamma$, meaning, $s \in \Gamma$ by the $\drep$ step of $\prove$. It follows that $s \in [t]_{\D}$, which shows that $[t]_{\U} \subseteq [t]_{\D}$.

\item By the definition of $\I$ it is straightforward to verify that for each $a \in \con$, if $u \R v$, then $\I_{u}(a) = \I_{v}(a)$. This follows from that fact that $\I_{u}(a) = [a] = \I_{v}(a)$ holds for each $a \in \con$.

\item Last, $\I_{u}$ is well-defined, that is, for $t_{i}, s_{i} \in [t_{i}]$, $([t_{1}], \ldots, [t_{n}]) \in \I_{u}(P)$ \iffi $([s_{1}], \ldots, [s_{n}]) \in \I_{u}(P)$. This follows from the $\idrig$ and $\idrep$ steps of $\prove$.

\end{itemize}

The above arguments establish that $\M$ is indeed a model. We now show that $\M$ is based on a frame in $\fclassc$. We show that if a frame condition occurs in $\fcset$, then $\M$ satisfies that frame condition.

\begin{description}

\item[$\cser$] If $\cser \in \fcset$, then $\drule \in \nql$, which ensures that for every $u \in \W$, there exists a $v \in \W$ such that $u \R v$.

\item[$\gpc(n,k)$] Let $\gpc(n,k) \in \fcset$ and assume that $w \R^{n} u$ and $w \R^{k} v$. We aim to show that $u \R v$. Since $\gpc(n,k) \in \fcset$, we know that $(\fd \pto \bd^{n} \cate \fd^{k}), (\bd \pto \bd^{k} \cate \fd^{n}) \in \g(\gpset)$ with $\gpset$ the set of all generalized path conditions in $\fcset$. Since $w \R^{n} u$ and $w \R^{k} v$, $\prgr{\branch} \models u \prpath{\bd^{n}} w$ and $\prgr{\branch} \models w \prpath{\fd^{k}} v$ by the definition of $\R$, implying $\prgr{\branch} \models u \ \prpath{\bd^{n} \fd^{k}} \ v$. As $\bd^{n} \fd^{k} \in L_{\thuesys(\gpset)}(\fd)$, it follows that $(u,v) \in \R$ by definition, i.e., $u \R v$.

\item[$\cid$] If $\cid \in \fcons$, then $\dpr \in \nql$. Assume that $u \R v$ and $[t] \in \D_{u}$. We aim to show that $[t] \in \D_{v}$. By assumption, $t \in \Gamma$ with $(u,\Gamma) \in V^{\branch}$. Since $u \R v$, we know that $\prgr{\branch} \models u \prpath{\fd} v$. Regardless of if $L = L_{\sfour \cup \thuesys(\gpset)}(\fd)$ or $L = L_{\sfive(\gpset)}(\fd)$, it follows that $u \prpath{L} v$, and so, $\dpr$ will be (bottom-up) applied at some point in $\prove$, meaning $t \in \Delta$ for $(v,\Delta) \in V^{\branch}$. Let us use $[t]_{u}$ to denote $[t] \in \D_{u}$ and $[t]_{v}$ to denote $[t] \in \D_{v}$. We argue next that $[t]_{u} = [t]_{v}$. We first argue that $[t]_{u} \subseteq [t]_{v}$. Let $s \in [t]_{u}$. Then, $t \neq s \in \Gamma$, and so, $t \neq s \in \Delta$ by $\idrig$, which shows that $s \in [t]_{v}$. This proves that $[t]_{u} \subseteq [t]_{v}$ and we note that $[t]_{v} \subseteq [t]_{u}$ can be established by a similar argument. Therefore, since $t \in \Delta$ and $[t]_{u} = [t]_{v}$, we have that $[t] \in \D_{v}$, which proves that $\cid$ holds.

\item[$\cdd$] Similar to the argument for $\cid$ above.

\item[$\ccd$] Let $\ccd \in \fcons$. Then, $\cdr \in \nql$. By the $\cdr$ step of $\prove$ we know that $\prl^{\branch}(u) = \ter$ for every $u \in \prv^{\branch} = \W$. Therefore, for every $u \in \W$, $\D_{u} = \ter \slash \sim \ = \U$, which shows that $\D_{u} = \U$ for every $u \in \W$.  

\item[$\cned$] Let $\cned \in \fcons$. Then, $\ned \in \nql$. By the $\ned$ step of $\prove$ we know that $\prl^{\branch}(u) \neq \emptyset$ for every $u \in \prv^{\branch} = \W$. Therefore, for every $u \in \W$, $\D_{u} \neq \emptyset$.

\end{description}

We have now confirmed that $\M$ is indeed a model based on a frame in $\fclassc$. Let us define $\s$ to be the $\M$-assignment mapping every variable $x$ to $[x] \in \U$. To finish the proof of completeness, we now argue the following by induction on the lenght of $\phi$: for every $u \in \W$, if $\phi \in \Gamma$ with $(u,\Gamma) \in V^{\branch}$, then $\M,\s,u \not\Vdash \phi$. We show a representative number of cases.
 
\begin{itemize}

\item Let $\phi \equiv P(t_{1},\ldots,t_{n}) \in \Gamma$. Then, $\neg P(t_{1},\ldots,t_{n}) \not\in \Gamma$ since otherwise $\ax$ would be applied along the branch $\branch$, meaning, $\branch$ would not be infinite contrary to our assumption. Therefore, $([t_{1}],\ldots,[t_{n}]) \not\in \I_{u}^{\s}(P)$ by the definition of $\I$, implying that $\M , \s, u \not\Vdash P(t_{1},\ldots,t_{n})$.

\item Let $\phi \equiv \neg P(t_{1},\ldots,t_{n}) \in \Gamma$. Then, $([t_{1}],\ldots,[t_{n}]) \in \V(P,u)$ by the definition of $\V$, implying that $\M , \s, u \not\Vdash \neg P(t_{1},\ldots,t_{n})$.

\item Let $\phi \equiv (t = s) \in \Gamma$. Then, $t \neq s \not\in \Gamma$ since otherwise $\ax$ would be applied along the branch $\branch$, meaning, $\branch$ would not be infinite contrary to our assumption. As a consequence, $s \not\in [t]$. Therefore, $\I_{u}^{\s}(t) = [t] \neq [s] = \I_{u}^{\s}(s)$, implying that $\M , \s, u \not\Vdash t = s$.

\item Let $\phi \equiv (t \neq s) \in \Gamma$. Then, $s \in [t]$, i.e. $[t] = [s]$ by definition, meaning, $\I_{u}^{\s}(t) = [t] = [s] = \I_{u}^{\s}(s)$, implying that $\M , \s, u \not\Vdash t \neq s$.

\item Let $\phi \equiv \exists x \psi \in \Gamma$. 
Let $o \in \D_{u}$. Then, we know that there exists a term $t$ such that $[t] = o$, meaning, $t \in \prl(u)$. By our assumption that $\exists x \psi \in \Gamma$, by the definition of $\prgr{\branch}$, and by the $\existsr$ step of $\prove$, we have that some $j$ exists such that both $\exists x \psi$ and $t$ occur in the $u$-component of $\ns_{j}$ with $\psi(t/x)$ being introduced by a bottom-up application of $\existsr$ to the $u$-component of $\ns_{j+1}$. It follows that $\psi(t/x) \in \Gamma$, so by IH, we have that $\M, \s^{t \trir o}, u \not\models \psi(t/x)$ for any $o \in \D_{u}$ as $o$ was taken to be arbitrary. Hence, $\M, \s, u \not\models \exists x \psi$.

\item Let $\phi \equiv \forall x \psi \in \Gamma$. Then, by the $\allr$ step of $\prove$, we know that for some variable $y$, we have $y, \psi(y/x) \in \Gamma$. It follows that $[y] \in \D_{u}$, and so, $\M, \s, u \not\Vdash \psi(y/x)$ by IH. Hence, $\M, \s, u \not\models \forall x \psi$.

\item Let $\phi \equiv \dia \psi \in \Gamma$. Let $v$ be an arbitrary world in $\W$ such that $u \R v$. Then, $\prgr{\branch} \models u \prpath{L} v$ with $L = L_{\thuesys(\gpset)}(\fd)$ by definition. By the definition of $\prove$, the $\diar$ rule will be bottom-up applied at some point along the branch $\branch$, thus ensuring that $\psi \in \Delta$ for $(v, \Delta) \in V^{\branch}$. By IH, we have that $\M, \s, v \not\Vdash \psi$. As $v$ was arbitrary, this implies $\M, \s, u \not\Vdash \dia \psi$.

\item Let $\phi \equiv \Box \psi \in \Gamma$. Then, by the $\boxr$ step of $\prove$, we know that some nesting with a name $v$ will be introduced containing $\psi$, i.e., $\psi \in \Delta$ for $(v,\Delta) \in V^{\branch}$. By IH, we have that $\M, \s, v \not\Vdash \psi$, which shows that $\M, \s, u \not\Vdash \Box \psi$.

\end{itemize}

Let $\mint$ to be the $\M$-interpretation such that $\iota(u) = u$ for $u \in \W$ and $\mint(v) \in \W$ for $v \in \names \setminus \W$. By the proof above, $\M, \s , \mint \not\Vdash \ns$, showing that if a nested sequent is not provable in $\nql$, then it is invalid, that is, every valid nested sequent is provable in $\nql$.
\end{proof}